\tikzset{->, auto, >=stealth', font=\small}
\tikzset{state/.style={shape=circle, draw, fill=white, initial text=, inner sep=.5mm, minimum size=1.5mm}}
\tikzset{accepting/.style=accepting by arrow}
\tikzset{state with output/.style={shape=rectangle split, rectangle split parts=2, draw, fill=white, initial text=, inner sep=1mm}}
\newcommand{\ms}[1]{\mathsf{#1}}
\newcommand{\mc}[1]{\mathcal{#1}}
\newcommand{\sth}{\text{ s.t.\ }}
\newcommand{\+}{\text{+}}
\newcommand*\rest[1]{_{\upharpoonleft #1}}
\newcommand{\Nat}{\ensuremath{\mathbb{N}}}
\newcommand{\Int}{\ensuremath{\mathbb{Z}}}
\newcommand{\maps}{\rightarrow}
\newcommand{\equivalent}{\Leftrightarrow}
\newcommand{\midbar}{\mathrel{\big|}}
\newcommand{\set}[1]{{\{#1\}}}
\renewcommand{\subset}{\subseteq}
\renewcommand{\geq}{\geqslant}
\renewcommand{\leq}{\leqslant}
\renewcommand{\implies}{\Rightarrow}
\renewcommand{\inf}[1]{\ms{inf}\set{#1}}
\renewcommand{\sup}[1]{\ms{sup}\set{#1}}
\newcommand{\conc}{\square}
\newcommand{\Id}{\ms{Id}}
\newcommand{\ipoms}{\ms{iPoms}}
\newcommand{\wipoms}{{\ms{iPoms}^{\omega}}}
\newcommand{\infipoms}{{\ms{iPoms}^{\infty}}}
\newcommand{\iipoms}{\ms{iiPoms}}
\newcommand{\wiipoms}{{\ms{iiPoms}^{\omega}}}
\newcommand{\infiipoms}{{\ms{iiPoms}^{\infty}}}
\newcommand{\coh}{\ms{Coh}^+}
\newcommand{\wcoh}{\ms{Coh}^\omega}
\newcommand{\infcoh}{\ms{Coh}^\infty}
\newcommand{\rat}{\mc{R}at}
\newcommand{\Drat}{D\text{-}\mc{R}at}
\newcommand{\wrat}{{\mc{R}at}^\omega}
\newcommand{\infrat}{{\mc{R}at}^\infty}
\newcommand{\Xib}{\overline{\Xi}}
\newcommand{\id}{\ms{id}}
\newcommand{\ilo}[3]{\leftidx{_{#1}}{#2}{_{#3}}}
\newcommand{\ibu}{\vcenter{\hbox{\tiny $\bullet$}}}
\newcommand{\nibu}{\hphantom{\ibu}}
\newcommand{\starter}[2]{{}_{#2\!}{\uparrow}{#1}}
\newcommand{\terminator}[2]{{#1}{\downarrow}_{#2}}
\newcommand{\loset}[1]{\left[\begin{smallmatrix}#1\end{smallmatrix}\right]}
\newcommand{\subsu}{\sqsubseteq}
\newcommand{\down}{\mathord{\downarrow}}
\newcommand{\sd}{_{\down}}
\newcommand{\eps}{\id_\emptyset}
\renewcommand{\phi}{\varphi}
\newcommand{\evord}{\dashrightarrow}
\newcommand{\src}{\ms{src}}
\newcommand{\tgt}{\ms{tgt}}
\newcommand{\ev}{\ms{ev}}
\newcommand{\arrO}[1]{\mathrel{\nearrow^{#1}}}
\newcommand{\arrI}[1]{\mathrel{\searrow_{#1}}}
\newcommand{\prf}{\prec}
\newcommand{\pref}[1]{\ms{Pref}(#1)}
\newcommand{\Inf}{\ms{Inf}}
\title{Higher-Dimensional Automata : Extension to Infinite Tracks}
\author{Luc Passemard}{IRIF \& Université Paris Cité}{}{}{
	This work was done while the first author was a Masters intern at EPITA}
\author{Amazigh Amrane }{EPITA Research Laboratory (LRE), France}{}{}{}
\author{Uli Fahrenberg }{EPITA Research Laboratory (LRE), France}{}{}{}
\authorrunning{L.~Passemard, A.~Amrane and U.~Fahrenberg}
\keywords{Higher-dimensional automata, concurrency theory, omega pomsets, Büchi acceptance, Muller acceptance, interval pomsets, pomsets with interfaces}
\begin{document}
	
\maketitle
\begin{abstract}
  We introduce higher-dimensional automata for infinite interval ipomsets ($\omega$-HDAs).
  We define key concepts from different points of view, inspired from their finite counterparts.
  Then we explore languages recognized by $\omega$-HDAs under Büchi and Muller semantics.
  We show that Muller acceptance is more expressive than Büchi acceptance and, in contrast to the finite case,  both semantics do not yield languages closed under subsumption.
  Then, we adapt the original rational operations to deal with $\omega$-HDAs and show that while languages of $\omega$-HDAs are $\omega$-rational, not all $\omega$-rational languages can be expressed by $\omega$-HDAs.
  
\end{abstract}
\section{Introduction}
\label{sec:introduction}

%Automata are among the most widely used abstract machines for modeling and analyzing computational systems.  Notably, automata are employed to verify system correctness through model checking, to construct models for systems with unknown internal workings in active learning scenarios, and to synthesize new systems or components from specifications. Moreover, utilizing automata-based methods is useful for developing decision procedures in mathematical logic.
%The best-known example is finite automata over words, also called Kleene automata, which provide a simple model for terminating sequential systems with finite memory, where each accepted word represents a possible execution sequence of the system. In addition, finite automata theory, supported by the Kleene, Büchi, and Myhill-Nerode theorems, offers various equivalent approaches for manipulating these systems through respectively regular expressions, monadic second-order logic and semigroups.

Automata theory is fundamental for modeling and analyzing computational systems. 
It is used to verify system correctness, infer models for unknown systems, synthesize components from specifications, and develop decision procedures.
Finite automata over words (Kleene automata) model terminating sequential systems with finite memory, where accepted words represent execution sequences. Their theory, backed by the Kleene, Büchi, and Myhill-Nerode theorems, connects regular expressions, monadic second-order logic, and semigroups.
For concurrent systems, executions may be represented as \emph{pomsets} (partially ordered multisets) \cite{Pratt86pomsets} instead of words. 
In a pomset, concurrent events are represented as labeled elements that are not ordered relative to each other. 
%We say that they are in \emph{parallel}. 
Different classes of pomsets and their associated automata models exist, reflecting diverse interpretations of concurrency. 
We can cite for example  branching automata and series-parallel pomsets~\cite{lodaya98kleene, LW98:Algebra, LW00:sp, lodaya01kleene}, step transition systems  and local trace languages~\cite{DBLP:conf/apn/FanchonM09}, communicating finite-state machines and message sequence charts \cite{GKM06}, asynchronous automata and Mazurkiewicz traces~\cite{Zielonka87} or higher-dimensional automata (HDAs) and interval pomsets~\cite{DBLP:journals/mscs/FahrenbergJSZ21}.

In this paper, we focus on HDAs \cite{Pratt91-geometry,Glabbeek91-hda}. They are general models of concurrency that extend, for example, event structures and safe Petri nets~\cite{DBLP:journals/tcs/Glabbeek06,amrane2025petrinetshigherdimensionalautomata}, asynchronous transition systems~\cite{Bednarczyk87-async, WinskelN95-Models} and obviously Kleene automata.
Initially studied from a geometrical or categorical point of view, the language theory of HDAs has become another focus for research in the past few years since \cite{DBLP:journals/mscs/FahrenbergJSZ21}. Fahrenberg et al have now shown a Kleene theorem \cite{DBLP:conf/concur/FahrenbergJSZ22}, a Myhill-Nerode theorem \cite{DBLP:journals/corr/abs-2210-08298} and a Büchi theorem \cite{amrane2024logic}. Higher-dimensional timed automata are introduced in \cite{DBLP:conf/adhs/Fahrenberg18} and their languages in \cite{amrane2024languages}.
HDAs consist of a collection of cells in which events are running concurrently, connected by face maps which model the start and termination of events.
The language of an HDA is defined as a set of \emph{interval pomsets} \cite{journals/mpsy/Fishburn70} with interfaces (interval ipomsets or \emph{iipomsets})~\cite{DBLP:journals/iandc/FahrenbergJSZ22}. 
Interval pomsets are suitable for situations where events in concurrent systems extend over time, such as producer-consumer systems, which series-parallel pomsets cannot capture.
Modelling executions with interval pomsets supports partial-order reduction, with a representation that is exponentially smaller than other alternatives.

The idea in an HDA is that each event in an execution $P$ is a time interval of process activity.
The execution is built by joining elementary steps, each representing segments of $P$. 
This gluing composition allows events to extend across segments, connecting one part to the next.
In addition, any order extension of $P$ is also a valid behaviour for the HDA. We say that the language  is closed under \emph{subsumption}.
As an example, Fig.~\ref{fig:int-conc} shows Petri net and HDA models for a system with two events, labeled $a$ and $b$. The Petri net and HDA on the left side model the (mutually exclusive) interleaving of $a$ and $b$ as either $a. b$ or $b. a$; those to the right model concurrent execution of $a$ and $b$ where the process $a \parallel b$ is a continuous path (called \emph{track}) through the surface of the filled-in square, starting at the top and terminating at the bottom node.  The shape of such a track defines the interval scheduling of $a$ and $b$ where the intervals overlap.
\begin{figure}[!h]
    \centering
    \begin{tikzpicture}%[x=.9cm, y=.9cm]
        \begin{scope}[x=1.5cm, state/.style={shape=circle, draw,
                fill=white, initial text=, inner sep=1mm, minimum size=3mm}]
            \node[state, black] (10) at (0,0) {};
            \node[state, rectangle] (20) at (0,-1) {$\vphantom{b}a$};
            \node[state] (30) at (0,-2) {};
            \node[state, black] (11) at (1,0) {};
            \node[state, rectangle] (21) at (1,-1) {$b$};
            \node[state] (31) at (1,-2) {};
            \path (10) edge (20);
            \path (20) edge (30);
            \path (11) edge (21);
            \path (21) edge (31);
            \node[state, black] (m) at (.5,-1) {};
            \path (20) edge[out=15, in=165] (m);
            \path (m) edge[out=-165, in=-15] (20);
            \path (21) edge[out=165, in=15] (m);
            \path (m) edge[out=-15, in=-165] (21);
        \end{scope}
        \begin{scope}[xshift=4cm]
            \node[state] (00) at (0,0) {};
            \node[state] (10) at (-1,-1) {};
            \node[state] (01) at (1,-1) {};
            \node[state] (11) at (0,-2) {};
            \path (00) edge node[left] {$\vphantom{b}a$\,} (10);
            \path (00) edge node[right] {\,$b$} (01);
            \path (10) edge node[left] {$b$\,} (11);
            \path (01) edge node[right] {\,$\vphantom{b}a$} (11);
        \end{scope}
    \end{tikzpicture}
    \qquad\qquad
    \begin{tikzpicture}%[x=.9cm, y=.9cm]
        \begin{scope}[xshift=8cm]
            \path[fill=black!15] (0,0) to (-1,-1) to (0,-2) to (1,-1);
            \node[state] (00) at (0,0) {};
            \node[state] (10) at (-1,-1) {};
            \node[state] (01) at (1,-1) {};
            \node[state] (11) at (0,-2) {};
            \path (00) edge node[left] {$\vphantom{b}a$\,} (10);
            \path (00) edge node[right] {\,$b$} (01);
            \path (10) edge node[left] {$b$\,} (11);
            \path (01) edge node[right] {\,$\vphantom{b}a$} (11);
        \end{scope}
        \begin{scope}[x=1.5cm, state/.style={shape=circle, draw,
                fill=white, initial text=, inner sep=1mm, minimum size=3mm},
            xshift=10.5cm]
            \node[state, black] (10) at (0,0) {};
            \node[state, rectangle] (20) at (0,-1) {$\vphantom{b}a$};
            \node[state] (30) at (0,-2) {};
            \node[state, black] (11) at (1,0) {};
            \node[state, rectangle] (21) at (1,-1) {$b$};
            \node[state] (31) at (1,-2) {};
            \path (10) edge (20);
            \path (20) edge (30);
            \path (11) edge (21);
            \path (21) edge (31);
        \end{scope}
    \end{tikzpicture}

\caption{Petri net and HDA models distinguishing interleaving (left) from non-interleaving (right) concurrency. Left: models for~$a. b+ b. a$; right: models for~$a\parallel b$.}
\label{fig:int-conc}
\end{figure}
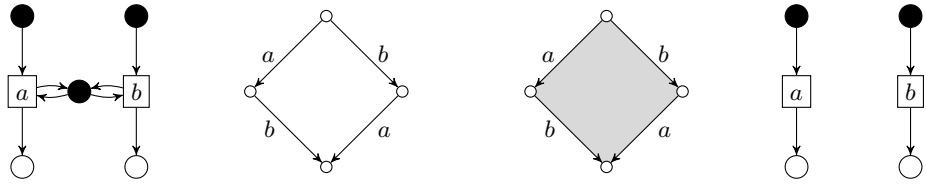

Muller \cite{muller1963infinite} and Büchi \cite{buchi1966symposium} introduced automata recognizing $\omega$-words to study non-terminating sequential machines and decision problems. McNaughton \cite{mcnaughton1966testing} later proved the equivalence of these definitions and extended Kleene’s theorem to $\omega$-words through a non-nested $\omega$-iteration. 
These automata also have logical \cite{buchi1966symposium} and algebraic \cite{carton2008automata} characterizations. Beyond their theoretical significance, they and their variations (such as Rabin, Streett, and parity automata) play a key role in specifying and verifying reactive systems \cite{duret.22.cav}. 
This led to the extension of automata models for concurrency to the infinite case, with fundamental results carrying over.  
For instance, $\omega$-branching automata admit Kleene-like and Büchi-like theorems \cite{Kuske00}, while similar developments apply to traces \cite{ebinger1996logical,diekert1993deterministic,gastin1994extension}, leading to decision procedures as corollaries. 

In this paper we introduce higher-dimensional automata for $\omega$-interval pomsets: \emph{$\omega$-HDAs}. 
To do so, we first define a class of $\omega$-interval pomsets with interfaces suitable for $\omega$-HDAs and extend the fundamental concepts to the infinite case. 
This may be effective for modelling (and checking) reactive concurrent systems that may not terminate, especially when reasoning about liveness properties.
As in the finite case, we show that isomorphisms of $\omega$-ipomsets are unique and that interval $\omega$-ipomsets admit several equivalent definitions and canonical decompositions. Then we define languages of $\omega$-HDAs in terms of interval $\omega$-ipomsets by considering Büchi and Muller acceptances. We show that unlike HDAs, languages of $\omega$-HDAs may not be closed under subsumption. 
Thus, we revise the rational operations defined in \cite{DBLP:conf/concur/FahrenbergJSZ22} by ensuring that subsumption is not implicitly assumed and introduce an omega iteration to define $\omega$-rational languages of iipomsets. We show by translating $\omega$-HDAs to $\omega$-automata over discrete ipomsets, called ST-automata, that languages of $\omega$-HDAs are $\omega$-rational.
On the other hand, we show that, unlike $\omega$-automata, Büchi acceptance is less expressive than the Muller one in $\omega$-HDAs. In addition, there are $\omega$-rational languages that cannot be recognized by Muller $\omega$-automata. To address this, we initiate first steps towards characterizing the subclass of omega-rational languages that are as expressive as Büchi or Muller $\omega$-HDAs.
We refer to the long version \cite{DBLP:journals/corr/abs-2503-07881} for detailed proofs of our results.

\section{Ipomsets}
\label{sec:infinite-ipomsets}
Ipomsets generalize pomsets \cite{DBLP:journals/fuin/Grabowski81,Pratt86pomsets, gischer1988equational}: a well-established model for non-interleaving concurrency.
In this section, we review the fundamental definitions and extend them to the infinite case.

\subsection{Finite ipomsets}
\label{sec:finite-ipomsets}

%This subsection is a recall of what ipomsets are, using the definition of~\cite{PIPI}. 
%Recall that the word \emph{ipomset} stands for \emph{Partially Ordered Multiset with Interfaces}. 
We fix a finite alphabet $\Sigma$.
An \emph{ipomset} (over $\Sigma$) is a structure $(P, {<}, {\evord}, S, T, \lambda)$ consisting of  a finite set   of \emph{events} $P$, two strict partial orders: $<$ the \emph{precedence order}, and $\evord$ the \emph{event order}, a set $S \subseteq P$ of \emph{source} interfaces, a set $T\subseteq P$ of \emph{target} interfaces such that the elements of $S$ are $<$-minimal and those of $T$ are $<$-maximal, and a labeling function  $\lambda : P \to \Sigma$.
In addition, we require that the relation ${<} \cup {\evord}$ is total, but not necessarily an order. In fact, if we have $x<y$ and $y \evord x$, then ${<} \cup {\evord}$ is not an order.

For the purpose of notation, we use $\ilo{S}{P}{T}$ for an ipomset $(P, <, \evord, S, T, \lambda)$, or refer to it by its set of events $P$ and to its components by adding the subscript $_P$.

We highlight two special cases of ipomsets: \emph{conclists} $(U, \evord, \lambda)$ where $S_U = T_U = \emptyset$ and $<_U = \emptyset$ hence $\evord$ total, with $\square$ the set of conclists, and \emph{identities} $\id_U = \ilo{U}{U}{U} $ where $S_U = T_U = U$ and $<_U = \emptyset$, with $\Id$ the set of identities.
We call $\id_{\emptyset}$  the empty conclist/identity. 
Note that for any ipomset $P$, $S_P=(S,  {\evord\rest{S\times S}}, \lambda\rest{S})$
and $T_P=(T, {\evord\rest{T\times T}}, \lambda\rest{T})$, where ``${}\rest{}$'' denotes restriction, are conclists.
An ipomset $P$ is said to be \emph{interval} (called \emph{iipomset}) if $<_P$ is an interval order, i.e.\ if for all $w,x,y,z \in P$, if $w < y$ and $x < z$, then $w < z$ or $x < y$.

Let $P$ and $Q$ be two ipomsets. We say that $Q$ \emph{subsumes} $P$ (written $P \subsu Q$) if there is a bijection $f: P \to Q$ (a \emph{subsumption}) such that
$\lambda_Q \circ f = \lambda_P$, $f(S_P) = S_Q$, and $f(T_P) = T_Q$,
$\forall x,y \in P,\, f(x) <_Q f(y) \implies x <_P y$ and
$\forall x,y \in P$ such that $x \nless_P y$ and $y \nless_P x$, we have $x \evord_P y \implies f(x) \evord_Q f(y)$.
Informally speaking, $P$ is more precedence ordered than $Q$.

\begin{example}
 An example of subsumption is depicted in Fig.~\ref{fig:interval-ipomset}.
 Note that  $e_1< e_2$ when the activity interval of $e_1$ finishes before the beginning of the one of $e_2$.
 When the activity intervals of two events overlap, they are $\evord$-ordered from top to bottom.
\end{example}

An \emph{isomorphism} of ipomsets is an invertible subsumption (whose inverse is again a subsumption);
isomorphic ipomsets are denoted $P \cong Q$.
Due to the totality of $< \cup \evord$,  isomorphisms between ipomsets are unique~\cite{PIPI},
so we may switch freely between ipomsets and their isomorphism classes.
The set of ipomsets is denoted $\ipoms$, and the set of interval ipomsets is written $\iipoms$. 
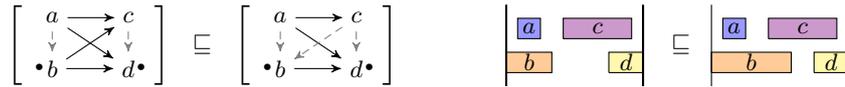
\begin{figure}[!h]

    \centering
    \begin{tikzpicture}
    
        \def\possh{-1.3};
        \def\hw{0.3};
        \begin{scope}[shift={(0,0)}]
        \node at (-2, 0) {%
	$\left[\vcenter{\hbox{\!%
			\begin{tikzpicture}[x=1cm,y=0.9cm]
				\node (a) at (0,0) {$a$};
				\node (c) at (1,0) {$c$};
				\node at (1.17,-0.75) {$\ibu$};
				\node (b) at (0,-.75) {$b$};
				\node at (-.17,-.75) {$\ibu$};
				\node (d) at (1,-.75) {$d$};
				\path (a) edge (c);
				\path (b) edge (d);
				\path (a) edge (d);
				\path (b) edge (c);
				\path[densely dashed, gray] (a) edge (b)  (c) edge (d);
			\end{tikzpicture}
			\!\!}}\right]$};     
		
        \node at (1, 0) {%
        	$\left[\vcenter{\hbox{\!%
      			\begin{tikzpicture}[x=1cm,y=0.9cm]
      				\node (a) at (0,0) {$a$};
      				\node (c) at (1,0) {$c$};
      				\node at (1.17,-0.75) {$\ibu$};
      				\node (b) at (0,-.75) {$b$};
      				\node at (-.17,-.75) {$\ibu$};
      				\node (d) at (1,-.75) {$d$};
      				\path (a) edge (c);
      				\path (b) edge (d);
      				\path (a) edge (d);
      				\path[densely dashed, gray] (a) edge (b) (c) edge (b) (c) edge (d);
      			\end{tikzpicture}
      			\!\!}}\right]$};     
         \end{scope}
     	\node at (-0.5,0) {$\subsu$};                
        \begin{scope}[shift={(3.5,-0.55)},x=0.5cm,y=0.9cm]
            \filldraw[fill=blue!40!white,-](0.3,0.7)--(0.9,0.7)--(0.9,0.7+\hw)--(0.3,0.7+\hw)--(0.3,0.7);
            \filldraw[fill=orange!40!white,-](0.0,0.2)--(1.2,0.2)--(1.2,0.2+\hw)--(0.0,0.2+\hw)--(0.0,0.2);
            \filldraw[fill=violet!40!white,-](1.5,0.7)--(3.3,0.7)--(3.3,0.7+\hw)--(1.5,0.7+\hw)--(1.5,0.7);
            \filldraw[fill=yellow!40!white,-](2.7,0.2)--(3.6,0.2)--(3.6,0.2+\hw)--(2.7,0.2+\hw)--(2.7,0.2);
            \draw[thick,-](0,0)--(0,1.2);
            \draw[thick,-](3.6,0)--(3.6,1.2);
            \node at (0.6,0.7+\hw*0.5) {$a$};
            \node at (0.6,0.2+\hw*0.5) {$b$};
            \node at (2.4,0.7+\hw*0.5) {$c$};
            \node at (3.15,0.2+\hw*0.5) {$d$};
        \end{scope}
         	\node at (5.8,0) {$\subsu$};                
      \begin{scope}[shift={(6.2,-0.55)},x=0.5cm,y=0.9cm]
			\filldraw[fill=blue!40!white,-](0.3,0.7)--(0.9,0.7)--(0.9,0.7+\hw)--(0.3,0.7+\hw)--(0.3,0.7);
			\filldraw[fill=orange!40!white,-](0.0,0.2)--(2.1,0.2)--(2.1,0.2+\hw)--(0.0,0.2+\hw)--(0.0,0.2);
			\filldraw[fill=violet!40!white,-](1.5,0.7)--(3.3,0.7)--(3.3,0.7+\hw)--(1.5,0.7+\hw)--(1.5,0.7);
			\filldraw[fill=yellow!40!white,-](2.7,0.2)--(3.6,0.2)--(3.6,0.2+\hw)--(2.7,0.2+\hw)--(2.7,0.2);
			\draw[-](0,0)--(0,1.2);
			\draw[-](3.6,0)--(3.6,1.2);
			\node at (0.6,0.7+\hw*0.5) {$a$};
			\node at (1.05,0.2+\hw*0.5) {$b$};
			\node at (2.4,0.7+\hw*0.5) {$c$};
			\node at (3.15,0.2+\hw*0.5) {$d$};
		\end{scope}

    \end{tikzpicture}
    
\caption{Ipomsets (left) and their corresponding interval representations (right).
	Full arrows indicate precedence order; dashed arrows indicate event order; bullets indicate interfaces.
}
\label{fig:interval-ipomset}
\end{figure}

In an ipomset $P$, a \emph{chain} is a subset of $P$ totally ordered by $<_P$.
An \emph{antichain} $A$ of $P$ is such that ${<_P} \cap (A \times A) = \emptyset$. Hence $A$ is totally ordered by $\evord_{P}$.
The width of $P$ is ${\ms{wd}(P) = \sup{|A| \midbar A \text{ is an $<$-antichain of }P}}$.
\subsection{The set of $\omega$-ipomsets}
\label{sec:w-ipomset}

Similarly to ipomsets, an $\omega$-ipomset is a structure of the form $(P,{<},{\evord},S,\lambda)$.
The only changes are that $P$ is countably infinite and there is no target interface. 
However, let $\Omega_P$ denote the set of \emph{all} $<_P$-maximal elements.
This is not similar to a target interface: even if $\Omega_{P}$ represents unfinished events, terminating them (by trying to glue an ipomset after $P$) may cause chains greater than $\omega$.

For the remainder of the paper, we  focus on a subclass of $\omega$-ipomsets.
\begin{definition}[Valid $\omega$-ipomset]
\label{def:validity}
    A \emph{valid} $\omega$-ipomset is an $\omega$-ipomset $P$ such
    that
    \begin{multicols}{2}
	    \begin{itemize}
		\item
		$\forall x \in P,\, \set{ y \midbar y <_P x}$ is finite;
		\item
		every $<_P$-antichain of $P$ is finite.
		\end{itemize}
    \end{multicols}
\end{definition}
The first point is here to forbid $\omega$-ipomsets with chains greater than $\omega$, and therefore extend properly classical $\omega$-words and $\omega$-pomsets~\cite{Kuske00}. It is a classical property required in event structures (see~\cite{Winskel87} for example).
The second point is to avoid an infinite number of concurrent events. 
This condition implies in particular that $S_P$ and $\Omega_P$ are finite.
Both conditions also imply that $<_P$ is a well quasi-order, but the converse is not true.
In the rest of this paper, unless specified, we will talk about valid $\omega$-ipomsets, and will omit the word ``valid''.

%More formally we will deal with isomorphism classes of (valid) $\omega$-ipomsets, as done in~\cite{PIPI}. For this we define here the subsumption and the isomorphism (invertible subsumption) of two $\omega$-ipomsets, and present a basic result allowing us to consider $\omega$-ipomsets up to isomorphism.

Let $P$ and $Q$ be two (valid) $\omega$-ipomsets. 
We define the subsumption $P \subsu Q$ as in the finite case by preserving also the $<_P$-maximal elements $\Omega_P$. 
Again, isomorphisms of $\omega$-ipomsets are invertible subsumptions (whose inverse is again a subsumption). 
The first result of this paper is the unicity of these isomorphisms. The proof is similar to the finite case, using pomset filtration (see \cite[Lem.~34]{DBLP:journals/mscs/FahrenbergJSZ21}):

\begin{proposition}
\label{prop:unicity-iso}
Isomorphisms between $\omega$-ipomsets are unique. 
\end{proposition}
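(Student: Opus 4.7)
The plan is to reduce to the finite case, which is \cite[Lem.~34]{DBLP:journals/mscs/FahrenbergJSZ21}, via a canonical filtration of $P$ by finite downward-closed sub-ipomsets. Since the composition $g^{-1} \circ f$ of two isomorphisms $f, g : P \to Q$ is an automorphism of $P$, it suffices to show that every automorphism $h : P \to P$ equals $\id_P$.

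\textbf{Step 1 (Canonical filtration).} For each $x \in P$, set $d(x) = |\{ y \in P \mid y \leq_P x\}|$; by the first validity condition this is a well-defined natural number. Let $A_n = \{x \in P \mid d(x) = n\}$. Each $A_n$ is a $<_P$-antichain: if $x <_P y$ were to hold with $x, y \in A_n$, then $\downarrow x \subsetneq \downarrow y$ would force $d(x) < d(y)$, a contradiction. By the second validity condition, every $A_n$ is therefore finite, hence $P_n = A_1 \cup \cdots \cup A_n$ is finite and $<_P$-downward closed, and $P = \bigcup_n P_n$. Equipped with the restrictions of $<_P$, $\evord_P$ and $\lambda_P$ to $P_n$, with source interface $S_P \cap P_n$ and target interface given by its $<_P$-maximal elements, each $P_n$ is a finite ipomset in the sense of Section~\ref{sec:finite-ipomsets}.

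\textbf{Step 2 ($h$ preserves the filtration).} As an isomorphism of $(P, <_P)$, the map $h$ sends down-sets bijectively onto down-sets of the same cardinality, so $d(h(x)) = d(x)$ for all $x$. Hence $h(A_n) = A_n$ and $h(P_n) = P_n$ for every $n$, and $h$ restricts to an automorphism $h_n$ of the finite ipomset $P_n$: preservation of $<_P$, $\evord_P$, $\lambda_P$ and $S_P$ is inherited from $h$, and the induced target interface of $P_n$ (its $<_P$-maximal elements) is preserved because $h_n$ is a $<_P$-automorphism of $P_n$. By the uniqueness of isomorphisms between finite ipomsets \cite[Lem.~34]{DBLP:journals/mscs/FahrenbergJSZ21}, $h_n = \id_{P_n}$, and taking the union over $n$ yields $h = \id_P$.

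The key observation that makes the argument go through is that the two validity conditions together provide a canonical finite filtration: the first makes the depth function $d$ well-defined and forces the level $A_n$ to lie in a $<_P$-antichain, and the second makes each such antichain finite. The main delicacy I anticipate is verifying that each $P_n$ really qualifies as a finite ipomset compatible with the restriction of $h$; in particular, one must check that the induced source and target interfaces are both stable under $h_n$. Source stability is inherited from $h$; target stability holds because the $<_P$-maximal elements of $P_n$ are defined purely from $<_P\!\rest{P_n}$, which $h_n$ preserves.
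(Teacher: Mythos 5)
Your proof is correct, but it closes the argument differently from the paper. Both proofs exploit the two validity conditions to stratify $P$ into finite $<_P$-antichains (you by the cardinality $d(x)$ of the down-set, the paper by iterated removal of $<_P$-minimal elements), but the paper then \emph{linearizes}: it orders the layers by their index and each layer internally by $\evord_P$ (which is total on antichains), producing a canonical strict well-order $\mc{R}_P$ of order type at most $\omega$ on $P$; an ipomset isomorphism $P\to Q$ is then an isomorphism of well-orders $(P,\mc{R}_P)\to(Q,\mc{R}_Q)$, and such isomorphisms are unique. You instead first reduce to automorphisms, take the finite downward-closed prefixes $P_n=A_1\cup\dots\cup A_n$, check that an automorphism restricts to an ipomset automorphism of each $P_n$, and invoke the finite uniqueness lemma as a black box. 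Your route buys a genuine reduction to the finite case and never has to reason about $\evord_P$ directly (that is absorbed into the cited lemma), at the small cost of the preliminary reduction to automorphisms; the paper's route is self-contained modulo the standard fact that well-order isomorphisms are unique, and as a by-product exhibits the canonical $\omega$-enumeration of a valid $\omega$-ipomset, which is reused elsewhere. The points you single out as delicate are indeed the ones that need checking --- that $x<_P y$ forces $d(x)<d(y)$ so each $A_n$ is a finite antichain, and that the induced source and target interfaces of $P_n$ are stable under the restricted map --- and your verifications of them are sound.
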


Note that the proposition does not hold for general (non-valid) $\omega$-ipomsets: for example, all shifts by $n \in \Int$ are non-trivial automorphisms of $\Int$.
As in the finite case, we may now switch freely between $\omega$-ipomsets and their isomorphism classes.
The set of $\omega$-ipomsets is denoted $\wipoms$, and we use the notation $\infipoms \coloneqq \ipoms \cup \wipoms$.

\subsection{Operations on $\omega$-ipomsets}
\label{sec:operation}

We extend here the operations $*$ (gluing) and $\parallel$ (parallel composition) to $\omega$-ipomsets. 

\label{def:parallel-composition}
For $(P,Q) \in (\infipoms \times \infipoms) \setminus(\ipoms \times \ipoms)$, the \emph{parallel composition} $P\parallel Q$ (or $\loset{P\\Q}$) is the $\omega$-ipomset $(P \sqcup Q,\, <_P \sqcup <_Q,\, \evord,\, S_P \sqcup S_Q,\, \lambda_P \sqcup \lambda_Q)$ where $\sqcup$ is the disjoint union and  $x\evord y$ iff  $(x,y)\in P \times Q$ or $x\evord_P y$ or $x\evord_Q y$.

The \emph{gluing composition} $P*Q$ (or $PQ$ ) of $(P,Q) \in \ipoms \times \wipoms$ is defined if there exists a unique\footnote{Isomorphisms of conclists is a special case of ipomset isomorphisms. Unicity is ensured by event order.} isomorphism $f \colon T_P \to S_Q$, by $((P\sqcup Q)_{/x\sim f(x)},<,\evord,S_P,\lambda_P \cup \lambda_Q)$\footnote{($P\sqcup Q)_{x\sim f(x)}$ is the quotient of the disjoint union under the unique isomorphism $f: T_P\to S_Q$.}, where:
	\begin{itemize}
            \item $x< y$ iff $x<_P y$, $x<_Q y$, or $x\in P\setminus T_P$ and $y\in Q\setminus S_Q$;
        \item
            $\evord$ is the transitive closure of $\evord_P \cup \evord_Q$ on $(P\sqcup Q)_{/x\sim f(x)}$.
    \end{itemize}
Informally, the sources of $Q$ are attached to the targets of $P$.

Note that both $*$ and $\parallel$ are associative and non-commutative.
The non-commutativity of $\parallel$ is due to event order. 
The latter is crucial for  isomorphism unicity  and to define gluing composition.
Still, even	 if $\loset{a \\ b} \neq \loset{b \\ a}$, both express that $a$ and $b$ are running concurrently.
Alur et al in \cite{alur2023robust} introduced a similar “ordered parallel composition” for application purposes.

We also introduce an \emph{infinite gluing} of an $\omega$-sequence of ipomsets.
    Let $(P_i)_{i \in \Nat} \in \ipoms^\Nat$ such that for all $i$, $T_{P_i} \cong S_{P_{i\+1}}$. We define $P = P_0 * P_1 * \cdots$ by $(P,<,\evord,S_P,\lambda_P)$, where
    \begin{itemize}
        \item 
            $P = (\bigcup\nolimits_{i \in \Nat} (P_i,i) )_{/x \sim f(x)}$  where $f :\bigcup\nolimits_{i \in \Nat} (T_i,i) \rightarrow \bigcup\nolimits_{i \in \Nat} (S_i,i)$ such that $f(x,i) = (f_i(x),$ $i+1)$ with $f_i$ the unique isomorphism between $T_{P_i}$ and $S_{P_{i\+1}}$;
        \item
            $S_P = S_{P_0} \times \set{0}$;
        \item
             $(x,i) <_P (y,j)$ if  
             ($i = j$ and $x <_{P_i} y$) or 
              ($i\+1 \leq j$, $x \in P_i \setminus T_{P_i}$ and $y \in P_{i\+1} \setminus S_{P_{i\+1}}$);
        \item
             $\evord_P$ is the transitive closure of $ \bigcup_{i \in \Nat} \evord_{P_i}$; 
        \item
             $\lambda_P (x,i) = \lambda_{P_i}(x)$ for $i \in \Nat$ and $x \in P_i$.
    \end{itemize}

Note in particular that when $P_i \in \Id$ we have $P_0* \dots *P_{i-1} * P_i * P_{i+1} *\dots =P_0 * \dots *P_{i-1} * P_{i+1} *\cdots$. 
We let denote by $P^\omega$ the $\omega$-ipomset defined by the infinite gluing of the constant sequence equal to $P$ with $S_P \cong T_P$.

\begin{lemma}
\label{lem:infinite-product-well-defined}
    For $(P_i)_{i \in \Nat} \in \ipoms^\Nat$ with $T_{P_i} \cong S_{P_{i\+1}}$,  $P = P_0 * P_1 * \cdots$ is valid if the number of $P_i \in \ipoms \setminus \Id$ is infinite and there exists $m \in \Nat$ such that for all $i \in \Nat, ~ \ms{wd}(P_i)\leq m$. 
\end{lemma}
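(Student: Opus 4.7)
The plan is to check each of the three requirements for $P = P_0 * P_1 * \cdots$ to be a valid $\omega$-ipomset: that $P$ is countably infinite, that $\set{y \mid y <_P x}$ is finite for every $x \in P$, and that every $<_P$-antichain of $P$ is finite. The width bound $\ms{wd}(P_i) \leq m$ will play a crucial role in the first and third; the second is essentially a direct unravelling of the definition of $<_P$.

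\textbf{Infiniteness.} Countability of $P$ is immediate from the finiteness of each $P_i$. The harder step is infiniteness, which I view as the main obstacle: the hypothesis of infinitely many non-identities does not directly furnish fresh events, since a non-identity $P_i$ might merely terminate existing events without adding new ones. The resolution is to note that $S_{P_i}$ and $T_{P_i}$ are $<$-antichains, hence $|S_{P_i}|, |T_{P_i}| \leq m$, and to split non-identities into two types: those with $P_i \setminus S_{P_i} \neq \emptyset$, which contribute a genuinely new event to $P$; and those with $P_i = S_{P_i}$, which force $<_{P_i} = \emptyset$ and $T_{P_i} \subsetneq S_{P_i}$, strictly decreasing the ``live count'' $L_i = |T_{P_i}|$. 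Identities leave $L_i$ unchanged, and $L_i \in [0,m]$ cannot strictly decrease infinitely often, so only finitely many non-identities are of the second type; combined with the hypothesis this yields infinitely many $P_i$ contributing fresh events, hence $|P| = \infty$.

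\textbf{Finite predecessor sets.} Fix $e \in P$ and let $i_0$ be the smallest stage at which $e$ has a representative. From the definition of $<_P$, every predecessor of $e$ arises either from a same-stage comparison at some stage at which $e$ is live, or from the cross-stage clause. At any stage $i > i_0$, the representative of $e$ lies in $S_{P_i}$, hence is $<_{P_i}$-minimal, so no same-stage predecessor appears there; same-stage predecessors at stage $i_0$ are bounded by $|P_{i_0}|$. Cross-stage predecessors are classes represented by some $(y,j) \in P_j \setminus T_{P_j}$ with $j < i_0$, giving at most $\sum_{j < i_0} |P_j|$ further candidates. Both bounds are finite.

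\textbf{Finite antichains.} To each $e \in P$ I attach a \emph{life interval} $[i_0(e), i_1(e)] \subseteq \Nat \cup \set{\infty}$, where $i_0(e)$ is the starting stage of $e$ and $i_1(e)$ is the last stage at which $e$ has a representative (set to $\infty$ when $e$ never terminates). The cross-stage clause of $<_P$ yields $e <_P e'$ whenever $i_1(e) < i_0(e')$, so any two $<_P$-incomparable events must have overlapping life intervals. Given a $<_P$-antichain $A$ and any finite subfamily $A' \subseteq A$, the life intervals of $A'$ pairwise intersect, so the Helly property for intervals in a totally ordered set produces a stage $k^* \in \Nat$ at which every $e \in A'$ is live. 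Each such $e$ has a unique representative $x_e \in P_{k^*}$, and $\set{x_e \mid e \in A'}$ is a $<_{P_{k^*}}$-antichain, because any comparison there would $<_P$-order two events of $A$ via the same-stage clause. Hence $|A'| \leq \ms{wd}(P_{k^*}) \leq m$, and since this bound holds for every finite $A' \subseteq A$, I conclude $|A| \leq m$.
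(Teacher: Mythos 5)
Your overall strategy matches the paper's: verify the three validity conditions separately, with the width bound doing the work for antichains. Your treatments of finite predecessor sets and of finite antichains are correct; the antichain argument via pairwise-overlapping life intervals and the Helly property, applied only to \emph{finite} subfamilies so as not to presuppose what is being proved, is in fact more careful than the paper's one-line assertion that any $<_P$-antichain is contained in a single $P_i$.

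The one genuine flaw is in the infiniteness argument. The claim ``$L_i \in [0,m]$ cannot strictly decrease infinitely often, so only finitely many non-identities are of the second type'' is false: a bounded integer sequence can strictly decrease infinitely often provided it also increases infinitely often, and your type-1 steps can increase $L_i$. Concretely, for $P_{2k} = a{\ibu}$ and $P_{2k+1} = {\ibu}a$ (so that $P=a*a*\cdots$), every odd-indexed factor is a type-2 non-identity, so there are infinitely many of them. The conclusion you actually need --- infinitely many type-1 steps --- is nevertheless true and follows from your own setup by rearranging the contradiction: if only finitely many factors were of type 1, then beyond the last of them $L_i$ would be non-increasing (identities preserve it, type-2 factors strictly decrease it) and yet would have to strictly decrease at each of the infinitely many remaining non-identities, contradicting $L_i \geq 0$. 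With that repair the proof goes through. For comparison, the paper argues infiniteness by assuming $P$ finite, locating an index $i_0$ past which no factor starts an event, and observing that each of the infinitely many later non-identities must then terminate a distinct event of the finite ipomset $P_0 * \cdots * P_{i_0}$ --- essentially the same counting idea, packaged as a single contradiction rather than a potential function.
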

Intuitively, $P$ is infinite since it is obtained by composing infinitely many non-identities, finite past is a consequence of the infinite gluing definition, and finiteness of antichains is due to width-boundedness of the operands.
Note that this is not an equivalence condition. %The infinite gluing of ipomsets, as for series-parallel pomsets, may result in a (non-valid) $\omega$-ipomset of infinite width.
The infinite gluing may define a valid $\omega$-ipomset even if $\ms{wd}(P_i)$ are not bounded (see Fig.~\ref{fig:infinite-product-valid}). Finding an equivalence condition on $(P_i)$ is hard, because it has to consider event order. 
For example, Fig.~\ref{fig:infinite-product-evord} exhibits two infinite products behaving differently with only a change of event order.

\begin{figure}[!h]
  
    \centering
    \begin{tikzpicture}[y=1.1cm]
        
        \def\hw{0.3}; % width of bars
        \def\shift{0.55}; % vertical shift of bars
        
        \begin{scope}[shift={(0,1.5)}]

            \node at (-0.6,0) {$P =$};
            
            \node at (0.1,0) {%
                $\left[ \vcenter{\hbox{\!%
                \begin{tikzpicture}[x=1.2cm]
                    \node (a) at (0,0) {$a$};
                \end{tikzpicture}
              \!\!}} \right]$};

            \node at (1.2,0) {%
                $\left[ \vcenter{\hbox{\!%
                \begin{tikzpicture}[x=1.2cm]
                    \node (a1) at (0,0) {$a$};
                    \node (a2) at (0,-0.25) {$a$};
                \end{tikzpicture}
              \!\!}} \right]$};
            
            \node at (2.4,0) {%
                $\left[ \vcenter{\hbox{\!%
                \begin{tikzpicture}[x=1.2cm]
                    \node (a1) at (0,0) {$a$};
                    \node (a2) at (0,-0.25) {$a$};
                    \node (a3) at (0,-0.5) {$a$};
                \end{tikzpicture}
              \!\!}} \right]$};
          
            \node at (0.6,0) {*};                
            \node at (1.8,0) {*};
            \node at (3,0) {*};
            \node at (3.4,0) {...};
            \node at (3.8,0) {$=$};
        
        \end{scope}

        \begin{scope}[shift={(0,0.1)}]

            \node at (-0.6,0) {$Q =$};
            
            \node at (0.1,0) {%
                $\left[ \vcenter{\hbox{\!%
                \begin{tikzpicture}[x=1.2cm]
                    \node (a) at (0,0) {$a \ibu$};
                \end{tikzpicture}
              \!\!}} \right]$};

            \node at (1.2,0) {%
                $\left[ \vcenter{\hbox{\!%
                \begin{tikzpicture}[x=1.2cm]
                    \node (a1) at (0,0) {$\ibu a \ibu$};
                    \node (a2) at (0,-0.25) {$\nibu a \ibu$};
                \end{tikzpicture}
              \!\!}} \right]$};
            
            \node at (2.4,0) {%
                $\left[ \vcenter{\hbox{\!%
                \begin{tikzpicture}[x=1.2cm]
                    \node (a1) at (0,0) {$\ibu a \ibu$};
                    \node (a2) at (0,-0.25) {$\ibu a \ibu$};
                    \node (a3) at (0,-0.5) {$\nibu a \ibu$};
                \end{tikzpicture}
              \!\!}} \right]$};
                    
            \node at (0.6,0) {*};                
            \node at (1.8,0) {*};
            \node at (3,0) {*};
            \node at (3.4,0) {...};
            \node at (3.8,0) {$=$};
        
        \end{scope}
       
        \begin{scope}[shift={(4.5,1.5)},y=0.7cm]
    
            \draw[thick,-](0,-0.9)--(0,0.9);

            \draw[dashed,gray,-](2 + 0*2, -0.8)--(2 + 0*2, 0.8);
            \draw[dashed,gray,-](2 + 1*2, -0.8)--(2 + 1*2, 0.8);
            \draw[dashed,gray,-](2 + 2*2, -0.8)--(2 + 2*2, 0.8);
          
            \filldraw[fill=yellow!60!white,-](0.2 + 0*2,0.4 - 0*\shift)--(1.8 + 0*2,0.4 - 0*\shift)--(1.8 + 0*2,0.4 + \hw - 0*\shift)--(0.2 + 0*2,0.4 + \hw - 0*\shift)--(0.2 + 0*2,0.4 - 0*\shift);              
            \node at (1 + 0*2, 0.4 + \hw*0.5) {$a$};

            \filldraw[fill=yellow!60!white,-](0.2 + 1*2,0.4)--(1.8 + 1*2,0.4)--(1.8 + 1*2,0.4+\hw)--(0.2 + 1*2,0.4+\hw)--(0.2 + 1*2,0.4);              
            \node at (1 + 1*2, 0.4 + \hw*0.5) {$a$};
            \filldraw[fill=yellow!60!white,-](0.2 + 1*2,0.4 - 1*\shift)--(1.8 + 1*2,0.4 - 1*\shift)--(1.8 + 1*2,0.4 + \hw - 1*\shift)--(0.2 + 1*2,0.4 + \hw - 1*\shift)--(0.2 + 1*2,0.4 - 1*\shift);            
            \node at (1 + 1*2, 0.4 + \hw*0.5 - 1*\shift) {$a$};

            \filldraw[fill=yellow!60!white,-](0.2 + 2*2,0.4)--(1.8 + 2*2,0.4)--(1.8 + 2*2,0.4+\hw)--(0.2 + 2*2,0.4+\hw)--(0.2 + 2*2,0.4);              
            \node at (1 + 2*2, 0.4 + \hw*0.5) {$a$};
            \filldraw[fill=yellow!60!white,-](0.2 + 2*2,0.4 - 1*\shift)--(1.8 + 2*2,0.4 - 1*\shift)--(1.8 + 2*2,0.4 + \hw - 1*\shift)--(0.2 + 2*2,0.4 + \hw - 1*\shift)--(0.2 + 2*2,0.4 - 1*\shift);
            \node at (1 + 2*2, 0.4 + \hw*0.5 - 1*\shift) {$a$};
            \filldraw[fill=yellow!60!white,-](0.2 + 2*2,0.4 - 2*\shift)--(1.8 + 2*2,0.4 - 2*\shift)--(1.8 + 2*2,0.4 + \hw - 2*\shift)--(0.2 + 2*2,0.4 + \hw - 2*\shift)--(0.2 + 2*2,0.4 - 2*\shift);  
            \node at (1 + 2*2, 0.4 + \hw*0.5 - 2*\shift) {$a$};
         
            \node at (6.5,0) {...};
    
        \end{scope}

        \begin{scope}[shift={(4.5,0.1)},y=0.7cm]

            \draw[thick,-](0,-0.9)--(0,0.9);

            \draw[dashed,gray,-](2 + 0*2, -0.8)--(2 + 0*2, 0.8);
            \draw[dashed,gray,-](2 + 1*2, -0.8)--(2 + 1*2, 0.8);
            \draw[dashed,gray,-](2 + 2*2, -0.8)--(2 + 2*2, 0.8);
         
            \filldraw[fill=blue!60!white,-](0.2 + 0*2, 0.4 - 0*\shift)--(6 , 0.4 - 0*\shift)--(6, 0.4 + \hw - 0*\shift)--(0.2 + 0*2,0.4 + \hw - 0*\shift)--(0.2 + 0*2,0.4 - 0*\shift);              
            \node at (1 + 0*2, 0.4 + \hw*0.5) {$a$};

            \filldraw[fill=blue!60!white,-](0.2 + 1*2,0.4 - 1*\shift)--(6, 0.4 - 1*\shift)--(6, 0.4 + \hw - 1*\shift)--(0.2 + 1*2,0.4 + \hw - 1*\shift)--(0.2 + 1*2,0.4 - 1*\shift);            
            \node at (1 + 1*2, 0.4 + \hw*0.5 - 1*\shift) {$a$};

            \filldraw[fill=blue!60!white,-](0.2 + 2*2,0.4 - 2*\shift)--(6, 0.4 - 2*\shift)--(6, 0.4 + \hw - 2*\shift)--(0.2 + 2*2,0.4 + \hw - 2*\shift)--(0.2 + 2*2,0.4 - 2*\shift);  
            \node at (1 + 2*2, 0.4 + \hw*0.5 - 2*\shift) {$a$};

            \node at (6.5,0) {...};  
    
        \end{scope}

  \end{tikzpicture}

\caption{Two unbounded infinite products giving a valid (up) and an invalid (down) $\omega$-ipomset.}
\label{fig:infinite-product-valid}
\end{figure}
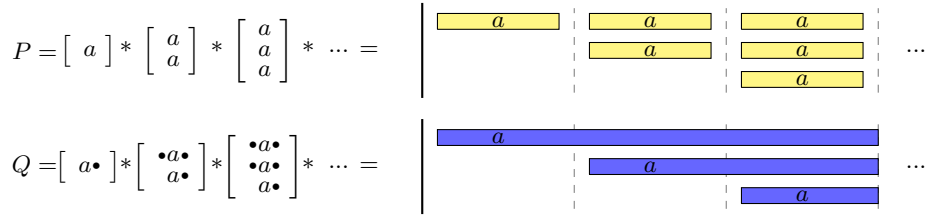

\subsection{Interval $\omega$-ipomsets}
\label{sec:w-iipoms}

We will here only deal with interval $\omega$-ipomsets.
An $\omega$-ipomset is \emph{interval} (denoted $\omega$-iipomset) if
for all $w,x,y,z \in P$, if $w < y$ and $x < z$, then $w < z$ or $x < y$.
The set of all interval $\omega$-ipomsets is denoted $\wiipoms$ (and we use the notation $\infiipoms \coloneqq \iipoms \cup \wiipoms$). 

There are multiple equivalent ways to define interval ipomsets~\cite[Lem.~39]{DBLP:journals/mscs/FahrenbergJSZ21}, and the same goes for $\omega$-ipomsets. 

\label{def:interval-repr}
We say that $P \in \wipoms$ admits an \emph{interval representation} if there are two functions ${b,e: P \rightarrow \Nat \cup \set{+\infty}}$ such that for all $x,y \in P$, $b(x) \leq e(x)$ and $x <_P y \equivalent e(x) < b(y)$.

As for finite pomsets and with similar arguments, the following holds:
\begin{proposition}
\label{prop:eq-wiipoms}
    Let $P$ be an $\omega$-ipomset, the following are equivalent:
    \begin{enumerate}
        \item 
            $P$ is an interval $\omega$-ipomset;
        \item
            $P$ has an interval representation;
        \item
            the order $\ll$ on maximal antichains of $P$ defined by $A \ll B$ iff $A \neq B$ and $ \forall (a,b) \in A \times B$,\, $b \nless_P a$ is linear.
    \end{enumerate}
\end{proposition}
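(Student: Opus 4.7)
The plan is to prove the cycle $(2) \Rightarrow (1) \Rightarrow (3) \Rightarrow (2)$, mirroring the finite case \cite[Lem.~39]{DBLP:journals/mscs/FahrenbergJSZ21}. The first two implications carry over with no essential change, since they involve only finitely many elements at a time. The delicate step is $(3) \Rightarrow (2)$, where I must produce an interval representation valued in $\Nat \cup \set{+\infty}$, and it is here that both validity conditions play an essential role.

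For $(2) \Rightarrow (1)$, given $w <_P y$ and $x <_P z$ one has $e(w) < b(y)$ and $e(x) < b(z)$; assuming $e(w) \leq e(x)$ one gets $e(w) \leq e(x) < b(z)$, hence $w <_P z$. For $(1) \Rightarrow (3)$, I check totality of $\ll$ on the set $\mc{M}$ of maximal antichains: if $A \neq B$ and neither $A \ll B$ nor $B \ll A$, there exist $a_1 \in A, b_1 \in B$ with $b_1 <_P a_1$ and $a_2 \in A, b_2 \in B$ with $a_2 <_P b_2$; applying the interval property to $b_1 <_P a_1$ and $a_2 <_P b_2$ yields $b_1 <_P b_2$ or $a_2 <_P a_1$, each contradicting the antichain property of $B$ or $A$. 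Antisymmetry uses maximality: if $A \ll B$ with $A \neq B$, then any $b \in B \setminus A$ is forced (by maximality of $A$ combined with $A \ll B$) to satisfy $a <_P b$ for some $a \in A$, which precludes $B \ll A$; transitivity is analogous.

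For $(3) \Rightarrow (2)$, for each $x \in P$ set $\mc{M}(x) = \set{A \in \mc{M} \midbar x \in A}$. This is non-empty (extend $\set{x}$ to a maximal antichain by Zorn's lemma, using that unions of chains of antichains remain antichains and stay finite by validity) and forms a $\ll$-interval: if $A \ll B \ll C$ with $x \in A \cap C$ but $x \notin B$, maximality of $B$ forces $x$ comparable to some $b \in B$, which contradicts $A \ll B$ or $B \ll C$. The key lemma is that for every $A \in \mc{M}$ the set $\set{B \in \mc{M} \midbar B \ll A}$ is finite. To prove this, note that $\ms{past}(A) \coloneqq \bigcup_{a \in A} \set{y \in P \midbar y <_P a}$ is finite by validity (finite antichain, finite pasts), and that any $b \in B$ with $B \ll A$ lies either in $A$ or---by maximality of $A$ together with $a \nless_P b$ from $B \ll A$---satisfies $b <_P a$ for some $a \in A$, hence $b \in \ms{past}(A)$. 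Thus $B \subseteq A \cup \ms{past}(A)$, a finite set, leaving only finitely many candidates. Consequently $(\mc{M}, \ll)$ is a well-order with all initial segments finite, hence order-isomorphic to an initial segment of $\Nat$. Setting $b(x) = \min \mc{M}(x)$ and $e(x) = \max \mc{M}(x)$---or $+\infty$ precisely when $\mc{M}(x)$ is $\ll$-cofinal, which I show is equivalent to $x \in \Omega_P$---the equivalence $x <_P y \Leftrightarrow e(x) < b(y)$ follows from the observation that $x$ and $y$ are $<_P$-incomparable iff $\mc{M}(x) \cap \mc{M}(y) \neq \emptyset$.

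The main obstacle is the finiteness lemma above, which is exactly the point where both validity conditions are essential: without finite pasts the chain of maximal antichains could have order type $\omega + 1$, and without finite antichains well-foundedness can fail entirely, either of which blocks a compatible embedding into $\Nat \cup \set{+\infty}$.
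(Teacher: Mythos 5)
Your proposal is correct and follows essentially the same route as the paper: the same cycle $(2)\Rightarrow(1)\Rightarrow(3)\Rightarrow(2)$, the same totality argument for $\ll$ via the interval property, and the same interval representation $b(x)=\min\set{i \mid x\in A_i}$, $e(x)=\sup\set{i\mid x\in A_i}$ once the maximal antichains are enumerated as $A_0\ll A_1\ll\cdots$. The only substantive difference is that you actually prove the key fact that $(\mc{M},\ll)\cong(\Nat,<)$ — via the observation that any $B\ll A$ satisfies $B\subseteq A\cup\bigcup_{a\in A}\set{y\mid y<_P a}$, a finite set by validity — whereas the paper states this as Lem.~\ref{lem:<<-linear} and defers its proof to the long version; your argument for it is sound.
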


Establishing the equivalence between the first two items is a routine.
The third item is less intuitive, but it implies the second one by showing that the set of maximal antichains equipped with $\ll$ is isomorphic to $(\Nat,<)$.
From that, to an event $a \in P$ we can associate the interval going from the index of the first appearance of $a$ in the antichains to the index of the last appearance.

\begin{example}
\label{ex:interval-wipoms}
The minimal non-interval ipomset $aa \parallel bb$ can be generalized to the non-interval $\omega$-ipomset $a^\omega \parallel b^\omega$, while $(a \parallel b)^\omega \in \wiipoms$, as well as $\loset{ \ibu a a \ibu \\ b}^\omega$ represented in Fig.~\ref{fig:interval-example}.
\end{example}

\begin{figure}[!h]
    \centering
    \begin{tikzpicture}[y=0.9cm]
    
        \def\hw{0.3}; % width of bars

%        \node at (1,3.5) {%
%            $\left[ \vcenter{\hbox{\!%
%            \begin{tikzpicture}[x=1.2cm]
%                \node (a) at (0,0) {$a$};
%                \node (b) at (0,-.75) {$b$};
%                \path[densely dashed, gray] (a) edge (b);
%            \end{tikzpicture}
%          \!\!}} \right]^\omega$};
        
        \node at (1,2) {%
            $\left[ \vcenter{\hbox{\!%
            \begin{tikzpicture}[x=1.2cm]
                \node (a) at (0,0) {$a$};
                \node at (-.17,0) {$\ibu$};
                \node (a') at (1,0) {$a$};
                \node at (1.17,0) {$\ibu$};
                \node (b) at (0.5,-.75) {$b$};
                \path (a) edge (a');
                \path[densely dashed, gray] (a) edge (b) (a') edge (b);
            \end{tikzpicture}
          \!\!}} \right]^\omega$};
    
%        \node at (3,3.5) {$=$};
        \node at (3,2) {$=$};

        \begin{scope}[shift={(3.5,1.4)}]
            \draw[thick,-](0.4,0.1)--(0.4,1.2);

            \draw[dashed,gray,-](2.2 + 0*1.6,0.1)--(2.2 + 0*1.6,1.2);
            \filldraw[fill=blue!60!white,-](0.4 + 0*1.6,0.7)--(1.2 + 0*1.6,0.7)--(1.2 + 0*1.6,0.7+\hw)--(0.4 + 0*1.6,0.7+\hw)--(0.4 + 0*1.6,0.7);
            \filldraw[fill=yellow!80!white,-](0.8 + 0*1.6,0.2)--(2.0 + 0*1.6,0.2)--(2.0 + 0*1.6,0.2+\hw)--(0.8 + 0*1.6,0.2+\hw)--(0.8 + 0*1.6,0.2);                
            \node at (0.6 + 0*1.6,0.7+\hw*0.5) {$a_1$};
            \node at (1.4 + 0*1.6,0.2+\hw*0.5) {$b_1$};

            \draw[dashed,gray,-](2.2 + 1*1.6,0.1)--(2.2 + 1*1.6,1.2);            
            \filldraw[fill=blue!60!white,-](0.0 + 1*1.6,0.7)--(1.2 + 1*1.6,0.7)--(1.2 + 1*1.6,0.7+\hw)--(0.0 + 1*1.6,0.7+\hw)--(1*1.6,0.7);
            \filldraw[fill=yellow!80!white,-](0.8 + 1*1.6,0.2)--(2.0 + 1*1.6,0.2)--(2.0 + 1*1.6,0.2+\hw)--(0.8 + 1*1.6,0.2+\hw)--(0.8 + 1*1.6,0.2);
            \node at (0.6 + 1*1.6,0.7+\hw*0.5) {$a_2$};
            \node at (1.4 + 1*1.6,0.2+\hw*0.5) {$b_2$};

            \draw[dashed,gray,-](2.2 + 2*1.6,0.1)--(2.2 + 2*1.6,1.2);            
            \filldraw[fill=blue!60!white,-](0.0 + 2*1.6,0.7)--(1.2 + 2*1.6,0.7)--(1.2 + 2*1.6,0.7+\hw)--(0.0 + 2*1.6,0.7+\hw)--(0.0 + 2*1.6,0.7);
            \filldraw[fill=yellow!80!white,-](0.8 + 2*1.6,0.2)--(2.0 + 2*1.6,0.2)--(2.0 + 2*1.6,0.2+\hw)--(0.8 + 2*1.6,0.2+\hw)--(0.8 + 2*1.6,0.2);
            \node at (0.6 + 2*1.6,0.7+\hw*0.5) {$a_3$};
            \node at (1.4 + 2*1.6,0.2+\hw*0.5) {$b_3$};

            \draw[dashed,gray,-](2.2 + 3*1.6,0.1)--(2.2 + 3*1.6,1.2);            
            \filldraw[fill=blue!60!white,-](0.0 + 3*1.6,0.7)--(1.2 + 3*1.6,0.7)--(1.2 + 3*1.6,0.7+\hw)--(0.0 + 3*1.6,0.7+\hw)--(0.0 + 3*1.6,0.7);
            \filldraw[fill=yellow!80!white,-](0.8 + 3*1.6,0.2)--(2.0 + 3*1.6,0.2)--(2.0 + 3*1.6,0.2+\hw)--(0.8 + 3*1.6,0.2+\hw)--(0.8 + 3*1.6,0.2);
            \node at (0.6 + 3*1.6,0.7+\hw*0.5) {$a_4$};
            \node at (1.4 + 3*1.6,0.2+\hw*0.5) {$b_4$};
            \draw[dashed,gray,-](2.2 + 3*1.6,0.1)--(2.2 + 3*1.6,1.2);

            \draw[dashed,gray,-](2.2 + 4*1.6,0.1)--(2.2 + 4*1.6,1.2);
            \filldraw[fill=blue!60!white,-](0.0 + 4*1.6,0.7)--(1.2 + 4*1.6,0.7)--(1.2 + 4*1.6,0.7+\hw)--(0.0 + 4*1.6,0.7+\hw)--(0.0 + 4*1.6,0.7);
            \filldraw[fill=yellow!80!white,-](0.8 + 4*1.6,0.2)--(2.0 + 4*1.6,0.2)--(2.0 + 4*1.6,0.2+\hw)--(0.8 + 4*1.6,0.2+\hw)--(0.8 + 4*1.6,0.2);
            \node at (0.6 + 4*1.6,0.7+\hw*0.5) {$a_5$};
            \node at (1.4 + 4*1.6,0.2+\hw*0.5) {$b_5$};
   
            \filldraw[fill=blue!60!white,-](0.0 + 5*1.6,0.7)--(0.8 + 5.0*1.6,0.7)--(0.8 + 5.0*1.6,0.7+\hw)--(0.0 + 5*1.6,0.7+\hw)--(0.0 + 5*1.6,0.7);
            \node at (0.6 + 5*1.6,0.7+\hw*0.5) {$a_6$};
            
            \node at (9.3,0.45+\hw*0.5) {$\dots$};
        \end{scope}
    \end{tikzpicture}

    \vspace*{-.85cm}
    \begin{align*}
        \set{a_1,b_1} \ll \set{b_1,a_2} \ll \set{a_2,b_2}& \ll \set{b_2,a_3} \ll \set{a_3,b_3} \ll \set{b_3,a_4} \ll \set{a_4,b_4} \ll ...
    \end{align*}
\caption{Interval representation and linear order on antichains of the $\omega$-iipomset of Ex.~\ref{ex:interval-wipoms}.}
\label{fig:interval-example}
\end{figure}

\subsection{Decomposition of $\omega$-iipomsets}
\label{sec:dec-iipoms}

A \emph{starter} $U$ is a discrete ipomset (i.e.\ $<_U$ is empty) with $T_U = U$ (hence written $\ilo{S}{U}{U}$), and a \emph{terminator} $U$ is a discrete ipomset with $S_U = U$ (written $\ilo{U}{U}{T}$). We denote by $\Xi$ the set of starters and terminators. 
Note that $\Id \subseteq \Xi$.
We say that $U \in \Xi$ is \emph{proper} if $U \in \Xi \setminus \Id$. A starter $\ilo{U-A}{U}{U}$ will be written $\starter{U}{A}$ (meaning it contains the events in $U$ and start the events in $A \subset U$), and a terminator $\ilo{U}{U}{U-A}$ will be written $\terminator{U}{A}$. We call a \emph{step decomposition} of $P \in \wipoms$ a sequence of starters and terminators $(U_i)_{i \in \Nat}$ such that $P = U_0 * U_1 * \cdots$, and a decomposition is said to be \emph{sparse} if proper starters and terminators are alternating.
As for finite iipomsets~\cite{DBLP:journals/corr/abs-2210-08298}, we can prove uniqueness of sparse decompositions of $\omega$-iipomsets:

\begin{theorem}
	\label{th:sparse-dec}
	Every interval $\omega$-ipomset has a unique sparse step decomposition.
\end{theorem}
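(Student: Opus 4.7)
The plan is to construct the sparse decomposition directly from the linear enumeration of maximal antichains supplied by Proposition~\ref{prop:eq-wiipoms}, and to obtain uniqueness by reduction to the finite case. Write $A_0 \ll A_1 \ll A_2 \ll \cdots$ for this enumeration; since any two consecutive maximal antichains are distinct and neither can be contained in the other (by maximality), both $E_n = A_n \setminus A_{n+1}$ (events ending) and $B_n = A_{n+1} \setminus A_n$ (events beginning) are non-empty for every $n$. Define the candidate decomposition by the initial starter $\starter{A_0}{A_0 \setminus S_P}$ (omitted if $A_0 = S_P$), followed by the alternating sequence of terminators $\terminator{A_n}{E_n}$ and starters $\starter{A_{n+1}}{B_n}$ for $n \geq 0$. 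Adjacent steps compose because the target of each equals the source of the next ($A_n$ or $A_n \cap A_{n+1}$), and sparseness is immediate since consecutive proper steps are of opposite type by construction.

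For existence, I would verify that the infinite gluing of this sequence is isomorphic to $P$. Each event $x \in P$ appears in a contiguous run of antichains (a consequence of the interval representation from Proposition~\ref{prop:eq-wiipoms}), and this run determines a unique equivalence class in the iterated quotient defining the gluing. The source interface is $S_P$ by construction; $x <_P y$ in the gluing matches $x <_P y$ in $P$, since both express that the last antichain containing $x$ strictly precedes the first one containing $y$; the event order of the gluing, defined as the transitive closure of the $\evord$'s inherited from each step, coincides with $\evord_P$ restricted to precedence-incomparable pairs. Validity of the gluing is automatic since it is isomorphic to the valid $\omega$-ipomset $P$; note in particular that no uniform bound on $\ms{wd}(A_n)$ is required here, so Lemma~\ref{lem:infinite-product-well-defined} only provides a sufficient condition.

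For uniqueness, I would invoke the finite-case sparse-decomposition theorem and proceed by induction. Given any sparse decomposition $(V_n)$ of $P$, each finite prefix $V_0 * \cdots * V_k$ is a finite iipomset and is itself a sparse decomposition of that iipomset; the same finite iipomset is realised by a suitable prefix of the sequence constructed above, so by uniqueness of sparse decompositions in the finite case the two prefixes agree term-by-term, and letting $k$ grow without bound yields equality of the two infinite sequences. The main obstacle I anticipate is precisely this prefix-alignment: one has to show that any sparse decomposition of $P$ visits exactly the intermediate conclists $A_n$ and $A_n \cap A_{n+1}$, i.e.\ that it induces the $\ll$-enumeration of maximal antichains; this is where the interval property, and in particular the linearity of $\ll$, is essential, and where the delicate bookkeeping of the quotient identifications inside the infinite gluing has to be handled carefully.
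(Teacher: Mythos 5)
Your existence argument is essentially the paper's: enumerate the maximal antichains linearly as $A_0 \ll A_1 \ll \cdots$ via Proposition~\ref{prop:eq-wiipoms}, set $E_n = A_n \setminus A_{n+1}$ and $B_n = A_{n+1} \setminus A_n$, and glue the alternating terminators $\terminator{A_n}{E_n}$ and starters $\starter{A_{n+1}}{B_n}$, prepending $\starter{A_0}{A_0\setminus S_P}$ when $S_P \subsetneq A_0$. Your observations that validity of the gluing comes for free from the isomorphism with $P$ (so the width-bound hypothesis of Lemma~\ref{lem:infinite-product-well-defined} is not needed) and that non-emptiness of $E_n, B_n$ follows from maximality are both correct.

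The uniqueness argument, however, has a genuine gap, and it is exactly the one you flag and then leave open. Given two sparse decompositions $(V_n)$ and $(W_n)$ of $P$, the finite iipomsets $V_0 * \cdots * V_k$ and $W_0 * \cdots * W_k$ are a priori \emph{different} prefixes of $P$ --- and prefixes of an $\omega$-ipomset need not even be comparable under $\prf$ --- so the finite uniqueness theorem, which only says that a \emph{single} finite iipomset has a unique sparse decomposition, does not apply until you have already shown that the two prefixes denote the same iipomset. That alignment claim is not weaker than what you are trying to prove: it amounts to showing that the $n$-th step of \emph{any} sparse decomposition is intrinsically determined by $P$. The way to close the gap (and the route the paper takes) is to prove this determination directly for the first step and then strip and recurse: if $P = P_0 * P_1 * \cdots$ is sparse, then $P_0$ equals the conclist $P^m$ of $<_P$-minimal elements of $P$; it is a starter iff $S_P \subsetneq P^m$, in which case $P_0 = \starter{P^m}{P^m \setminus S_P}$, and otherwise it is the terminator $\terminator{P^m}{P^s}$ where $P^s = \set{p \in P^m \midbar \forall q \in P \setminus P^m,\ p <_P q}$. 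Since every ingredient here ($P^m$, $S_P$, $P^s$) depends only on $P$, two sparse decompositions must agree on their first step, and deleting it and iterating gives equality of the full sequences. Without some such intrinsic characterisation, the induction you sketch has no base to stand on.
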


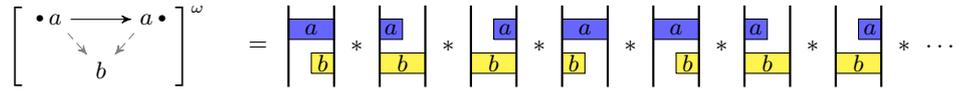
\begin{figure}[!h]
    
    \centering
    \begin{tikzpicture}[y=0.9cm]

        \def\hw{0.3}; % width of bars

%        \node at (1,3.5) {%
%            $\left[ \vcenter{\hbox{\!%
%            \begin{tikzpicture}[x=1.2cm]
%                \node (a) at (0,0) {$a$};
%                \node (b) at (0,-.75) {$b$};
%                \path[densely dashed, gray] (a) edge (b);
%            \end{tikzpicture}
%          \!\!}} \right]^\omega$};
        
        \node at (1,2) {%
            $\left[ \vcenter{\hbox{\!%
            \begin{tikzpicture}[x=1.2cm]
                \node (a) at (0,0) {$a$};
                \node at (-.17,0) {$\ibu$};
                \node (a') at (1,0) {$a$};
                \node at (1.17,0) {$\ibu$};
                \node (b) at (0.5,-.75) {$b$};
                \path (a) edge (a');
                \path[densely dashed, gray] (a) edge (b) (a') edge (b);
            \end{tikzpicture}
          \!\!}} \right]^\omega$};
    
%        \node at (3,3.5) {$=$}; 
%        \node at (4.3,3.5) {$*$};
%        \node at (5.5,3.5) {$*$};
%        \node at (6.7,3.5) {$*$};
%        \node at (7.9,3.5) {$*$};
%        \node at (9.1,3.5) {$*$};
%        \node at (10.3,3.5) {$*$};
%        \node at (11.5,3.5) {$*$};
%        \node at (12,3.5) {$\dots$};

        \node at (3,2) {$=$};
        \node at (4.3,2) {$*$};
        \node at (5.5,2) {$*$};
        \node at (6.7,2) {$*$};
        \node at (7.9,2) {$*$};
        \node at (9.1,2) {$*$};
        \node at (10.3,2) {$*$};
        \node at (11.5,2) {$*$};
        \node at (12,2) {$\dots$};

        \begin{scope}[shift={(3.4,1.4)}]
        
            \filldraw[fill=blue!60!white,-](0.0 + 0*1.2,0.7)--(0.6 + 0*1.2,0.7)--(0.6 + 0*1.2,0.7+\hw)--(0.0 + 0*1.2,0.7+\hw)--(0.0 + 0*1.2,0.7);
            \filldraw[fill=yellow!80!white,-](0.3 + 0*1.2,0.2)--(0.6 + 0*1.2,0.2)--(0.6 + 0*1.2,0.2+\hw)--(0.3 + 0*1.2,0.2+\hw)--(0.3 + 0*1.2,0.2);
            \draw[thick,-](0 + 0*1.2,0)--(0 + 0*1.2,1.2);
            \draw[thick,-](0.6 + 0*1.2,0)--(0.6 + 0*1.2,1.2);
            \node at (0.3 + 0.0  + 0*1.2,0.7+\hw*0.5) {$a$};
            \node at (0.3 + 0.15 + 0*1.2,0.2+\hw*0.5) {$b$};

            \filldraw[fill=blue!60!white,-](0.0 + 1*1.2,0.7)--(0.3 + 1*1.2,0.7)--(0.3 + 1*1.2,0.7+\hw)--(0.0 + 1*1.2,0.7+\hw)--(0.0 + 1*1.2,0.7);
            \filldraw[fill=yellow!80!white,-](0.0 + 1*1.2,0.2)--(0.6 + 1*1.2,0.2)--(0.6 + 1*1.2,0.2+\hw)--(0.0 + 1*1.2,0.2+\hw)--(0.0 + 1*1.2,0.2);
            \draw[thick,-](0 + 1*1.2,0)--(0 + 1*1.2,1.2);
            \draw[thick,-](0.6 + 1*1.2,0)--(0.6 + 1*1.2,1.2);
            \node at (0.3 - 0.15 + 1*1.2,0.7+\hw*0.5) {$a$};
            \node at (0.3 + 0.0   + 1*1.2,0.2+\hw*0.5) {$b$};

            \filldraw[fill=blue!60!white,-](0.3 + 2*1.2,0.7)--(0.6 + 2*1.2,0.7)--(0.6 + 2*1.2,0.7+\hw)--(0.3 + 2*1.2,0.7+\hw)--(0.3 + 2*1.2,0.7);
            \filldraw[fill=yellow!80!white,-](0.0 + 2*1.2,0.2)--(0.6 + 2*1.2,0.2)--(0.6 + 2*1.2,0.2+\hw)--(0.0 + 2*1.2,0.2+\hw)--(0.0 + 2*1.2,0.2);
            \draw[thick,-](0 + 2*1.2,0)--(0 + 2*1.2,1.2);
            \draw[thick,-](0.6 + 2*1.2,0)--(0.6 + 2*1.2,1.2);
            \node at (0.3 + 0.15 + 2*1.2,0.7+\hw*0.5) {$a$};
            \node at (0.3 + 0.0 + 2*1.2,0.2+\hw*0.5) {$b$};

            \filldraw[fill=blue!60!white,-](0.0 + 3*1.2,0.7)--(0.6 + 3*1.2,0.7)--(0.6 + 3*1.2,0.7+\hw)--(0.0 + 3*1.2,0.7+\hw)--(0.0 + 3*1.2,0.7);
            \filldraw[fill=yellow!80!white,-](0.0 + 3*1.2,0.2)--(0.3 + 3*1.2,0.2)--(0.3 + 3*1.2,0.2+\hw)--(0.0 + 3*1.2,0.2+\hw)--(0.0 + 3*1.2,0.2);
            \draw[thick,-](0 + 3*1.2,0)--(0 + 3*1.2,1.2);
            \draw[thick,-](0.6 + 3*1.2,0)--(0.6 + 3*1.2,1.2);
            \node at (0.3 + 0.0  + 3*1.2,0.7+\hw*0.5) {$a$};
            \node at (0.3 - 0.15  + 3*1.2,0.2+\hw*0.5) {$b$};

            \filldraw[fill=blue!60!white,-](0.0 + 4*1.2,0.7)--(0.6 + 4*1.2,0.7)--(0.6 + 4*1.2,0.7+\hw)--(0.0 + 4*1.2,0.7+\hw)--(0.0 + 4*1.2,0.7);
            \filldraw[fill=yellow!80!white,-](0.3 + 4*1.2,0.2)--(0.6 + 4*1.2,0.2)--(0.6 + 4*1.2,0.2+\hw)--(0.3 + 4*1.2,0.2+\hw)--(0.3 + 4*1.2,0.2);
            \draw[thick,-](0 + 4*1.2,0)--(0 + 4*1.2,1.2);
            \draw[thick,-](0.6 + 4*1.2,0)--(0.6 + 4*1.2,1.2);
            \node at (0.3 + 0.0  + 4*1.2,0.7+\hw*0.5) {$a$};
            \node at (0.3 + 0.15 + 4*1.2,0.2+\hw*0.5) {$b$};

            \filldraw[fill=blue!60!white,-](0.0 + 5*1.2,0.7)--(0.3 + 5*1.2,0.7)--(0.3 + 5*1.2,0.7+\hw)--(0.0 + 5*1.2,0.7+\hw)--(0.0 + 5*1.2,0.7);
            \filldraw[fill=yellow!80!white,-](0.0 + 5*1.2,0.2)--(0.6 + 5*1.2,0.2)--(0.6 + 5*1.2,0.2+\hw)--(0.0 + 5*1.2,0.2+\hw)--(0.0 + 5*1.2,0.2);
            \draw[thick,-](0 + 5*1.2,0)--(0 + 5*1.2,1.2);
            \draw[thick,-](0.6 + 5*1.2,0)--(0.6 + 5*1.2,1.2);
            \node at (0.3 - 0.15 + 5*1.2,0.7+\hw*0.5) {$a$};
            \node at (0.3 + 0.0  + 5*1.2,0.2+\hw*0.5) {$b$};

            \filldraw[fill=blue!60!white,-](0.3 + 6*1.2,0.7)--(0.6 + 6*1.2,0.7)--(0.6 + 6*1.2,0.7+\hw)--(0.3 + 6*1.2,0.7+\hw)--(0.3 + 6*1.2,0.7);
            \filldraw[fill=yellow!80!white,-](0.0 + 6*1.2,0.2)--(0.6 + 6*1.2,0.2)--(0.6 + 6*1.2,0.2+\hw)--(0.0 + 6*1.2,0.2+\hw)--(0.0
            + 6*1.2,0.2);
            \draw[thick,-](0 + 6*1.2,0)--(0 + 6*1.2,1.2);
            \draw[thick,-](0.6 + 6*1.2,0)--(0.6 + 6*1.2,1.2);
            \node at (0.3 + 0.15 + 6*1.2,0.7+\hw*0.5) {$a$};
            \node at (0.3 + 0.0  + 6*1.2,0.2+\hw*0.5) {$b$};
            
        \end{scope}
    \end{tikzpicture}
  
\caption{Sparse decomposition of Ex.~\ref{ex:interval-wipoms}.}
\label{fig:sparse-example}
\end{figure}
\begin{remark}
\label{rem:criterion-sparse-dec}
In Th.~\ref{th:sparse-dec}, $\omega$-ipomsets are in $\wiipoms$ so supposed valid, thus we do not have to verify that the infinite product effectively gives a valid result. Indeed, an infinite product of alternating proper starters and terminators may yield a non-valid $\omega$-ipomset (see Fig.~\ref{fig:infinite-product-evord}).
\end{remark}

\begin{figure}[!h]
  
    \centering
    \begin{tikzpicture}[y=1cm]
        
        \def\hw{0.2}; % width of bars
        \def\shift{0.3143}; % vertical shift of bars
        \def\fs{1.3}; % vertical shift between figures
        
        \begin{scope}[shift={(0,\fs)}]

            \node at (-0.7,0) {$R =$};
            
            \node at (0,0) {%
                $\left[ \vcenter{\hbox{\!%
                \begin{tikzpicture}[x=1.2cm]
                    \node (a1) at (0,0) {$a \ibu$};
                    \node (a2) at (0,-0.25) {$a \ibu$};
                \end{tikzpicture}
              \!\!}} \right]$};

            \node at (1.2,0) {%
                $\left[ \vcenter{\hbox{\!%
                \begin{tikzpicture}[x=1.2cm]
                    \node (a1) at (0,0) {$\ibu a \nibu$};
                    \node (a2) at (0,-0.25) {$\ibu a \ibu$};
                \end{tikzpicture}
              \!\!}} \right]$};
            
            \node at (2.4,0) {%
                $\left[ \vcenter{\hbox{\!%
                \begin{tikzpicture}[x=1.2cm]
                    \node (a1) at (0,0) {$\ibu a \ibu$};
                    \node (a4) at (0,-0.25) {$\nibu a \ibu$};
                    \node (a5) at (0,-0.5) {$\nibu a \ibu$};
                \end{tikzpicture}
              \!\!}} \right]$};
            
            \node at (3.6,0) {%
                $\left[ \vcenter{\hbox{\!%
                \begin{tikzpicture}[x=1.2cm]
                    \node (a1) at (0,0) {$\ibu a \nibu$};
                    \node (a2) at (0,-0.25) {$\ibu a \ibu$};
                    \node (a3) at (0,-0.5) {$\ibu a \ibu$};
                \end{tikzpicture}
              \!\!}} \right]$};

            \node at (0.6,0) {*};
            \node at (1.7,0) {*};
            \node at (3,0) {*};
            \node at (4.2,0) {*};
            \node at (4.6,0.05) {$\cdots$};
            \node at (5,0.05) {$=$};
        
        \end{scope}

        \begin{scope}[shift={(0,-\fs)}]

            \node at (-0.7,0) {$S =$};
            
            \node at (0,0) {%
                $\left[ \vcenter{\hbox{\!%
                \begin{tikzpicture}[x=1.2cm]
                    \node (a1) at (0,0) {$a \ibu$};
                    \node (a2) at (0,-0.25) {$a \ibu$};
                \end{tikzpicture}
              \!\!}} \right]$};

            \node at (1.2,0) {%
                $\left[ \vcenter{\hbox{\!%
                \begin{tikzpicture}[x=1.2cm]
                    \node (a1) at (0,0) {$\ibu a \ibu$};
                    \node (a2) at (0,-0.25) {$\ibu a \nibu$};
                \end{tikzpicture}
              \!\!}} \right]$};
            
            \node at (2.4,0) {%
                $\left[ \vcenter{\hbox{\!%
                \begin{tikzpicture}[x=1.2cm]
                    \node (a1) at (0,0) {$\ibu a \ibu$};
                    \node (a4) at (0,-0.25) {$\nibu a \ibu$};
                    \node (a5) at (0,-0.5) {$\nibu a \ibu$};
                \end{tikzpicture}
              \!\!}} \right]$};
            
            \node at (3.6,0) {%
                $\left[ \vcenter{\hbox{\!%
                \begin{tikzpicture}[x=1.2cm]
                    \node (a1) at (0,0) {$\ibu a \ibu$};
                    \node (a2) at (0,-0.25) {$\ibu a \ibu$};
                    \node (a3) at (0,-0.5) {$\ibu a \nibu$};
                \end{tikzpicture}
              \!\!}} \right]$};
            
            \node at (0.6,0) {*};
            \node at (1.7,0) {*};
            \node at (3,0) {*};
            \node at (4.2,0) {*};
            \node at (4.6,0.05) {$\cdots$};
            \node at (5,0.05) {$=$};
        
        \end{scope}
       
        \begin{scope}[shift={(4.5,\fs)},y=1cm]
    
            \draw[thick,-](1,-0.9)--(1,1.1);

            \draw[dashed,gray,-](2 + 0, -0.9)--(2 + 0, 1.1);
            \draw[dashed,gray,-](2 + 1, -0.9)--(2 + 1, 1.1);
            \draw[dashed,gray,-](2 + 2, -0.9)--(2 + 2, 1.1);
            \draw[dashed,gray,-](2 + 3, -0.9)--(2 + 3, 1.1);
            \draw[dashed,gray,-](2 + 4, -0.9)--(2 + 4, 1.1);
            \draw[dashed,gray,-](2 + 5, -0.9)--(2 + 5, 1.1);

            \filldraw[fill=yellow!60!white,-](1.2 + 0, 0.8 - 0*\shift)--(1.8 + 1, 0.8 - 0*\shift)--(1.8 + 1, 0.8 - 0*\shift + \hw)--(1.2 + 0, 0.8 - 0*\shift + \hw)--(1.2 + 0, 0.8 - 0*\shift);              
            \node at (2, 0.8 - 0*\shift + \hw*0.5) {$a$};

            \filldraw[fill=yellow!60!white,-](1.2 + 0, 0.8 - 1*\shift)--(1.8 + 3, 0.8 - 1*\shift)--(1.8 + 3, 0.8 - 1*\shift + \hw)--(1.2 + 0, 0.8 - 1*\shift + \hw)--(1.2 + 0, 0.8 - 1*\shift);              
            \node at (3, 0.8 - 1*\shift + \hw*0.5) {$a$};

            \filldraw[fill=yellow!60!white,-](1.2 + 2, 0.8 - 2*\shift)--(1.8 + 5, 0.8 - 2*\shift)--(1.8 + 5, 0.8 - 2*\shift + \hw)--(1.2 + 2, 0.8 - 2*\shift + \hw)--(1.2 + 2, 0.8 - 2*\shift);  
            \node at (5, 0.8 - 2*\shift + \hw*0.5) {$a$};

            \filldraw[fill=yellow!60!white,-](1.2 + 2, 0.8 - 3*\shift)--(7, 0.8 - 3*\shift)--(7, 0.8 - 3*\shift + \hw)--(1.2 + 2, 0.8 - 3*\shift + \hw)--(1.2 + 2, 0.8 - 3*\shift); 
            \node at (5, 0.8 - 3*\shift + \hw*0.5) {$a$};

            \filldraw[fill=yellow!60!white,-](1.2 + 4, 0.8 - 4*\shift)--(7, 0.8 - 4*\shift)--(7, 0.8 - 4*\shift + \hw)--(1.2 + 4, 0.8 - 4*\shift + \hw)--(1.2 + 4, 0.8 - 4*\shift); 
            \node at (6, 0.8 - 4*\shift + \hw*0.5) {$a$};

            \filldraw[fill=yellow!60!white,-](1.2 + 4, 0.8 - 5*\shift)--(7, 0.8 - 5*\shift)--(7, 0.8 - 5*\shift + \hw)--(1.2 + 4, 0.8 - 5*\shift + \hw)--(1.2 + 4, 0.8 - 5*\shift); 
            \node at (6, 0.8 - 5*\shift + \hw*0.5) {$a$};

            \node at (7.5,0) {$\cdots$};
    
        \end{scope}

        \begin{scope}[shift={(4.5,-\fs)},y=1cm]

            \draw[thick,-](1,-0.9)--(1,1.1);

            \draw[dashed,gray,-](2 + 0, -0.9)--(2 + 0, 1.1);
            \draw[dashed,gray,-](2 + 1, -0.9)--(2 + 1, 1.1);
            \draw[dashed,gray,-](2 + 2, -0.9)--(2 + 2, 1.1);
            \draw[dashed,gray,-](2 + 3, -0.9)--(2 + 3, 1.1);
            \draw[dashed,gray,-](2 + 4, -0.9)--(2 + 4, 1.1);
            \draw[dashed,gray,-](2 + 5, -0.9)--(2 + 5, 1.1);

            \filldraw[fill=blue!60!white,-](1.2 + 0, 0.8 - 0*\shift)--(7, 0.8 - 0*\shift)--(7, 0.8 - 0*\shift + \hw)--(1.2 + 0, 0.8 - 0*\shift + \hw)--(1.2 + 0, 0.8 - 0*\shift);    
            \node at (4, 0.8 - 0*\shift + \hw*0.5) {$a$};

           \filldraw[fill=yellow!60!white,-](1.2 + 0, 0.8 - 1*\shift)--(1.8 + 1, 0.8 - 1*\shift)--(1.8 + 1, 0.8 - 1*\shift + \hw)--(1.2 + 0, 0.8 - 1*\shift + \hw)--(1.2 + 0, 0.8 - 1*\shift);              
            \node at (2, 0.8 - 1*\shift + \hw*0.5) {$a$};

            \filldraw[fill=blue!60!white,-](1.2 + 2, 0.8 - 2*\shift)--(7, 0.8 - 2*\shift)--(7, 0.8 - 2*\shift + \hw)--(1.2 + 2, 0.8 - 2*\shift + \hw)--(1.2 + 2, 0.8 - 2*\shift);  
            \node at (5, 0.8 - 2*\shift + \hw*0.5) {$a$};

           \filldraw[fill=yellow!60!white,-](1.2 + 2, 0.8 - 3*\shift)--(1.8 + 3, 0.8 - 3*\shift)--(1.8 + 3, 0.8 - 3*\shift + \hw)--(1.2 + 2, 0.8 - 3*\shift + \hw)--(1.2 + 2, 0.8 - 3*\shift);              
            \node at (4, 0.8 - 3*\shift + \hw*0.5) {$a$};

            \filldraw[fill=blue!60!white,-](1.2 + 4, 0.8 - 4*\shift)--(7, 0.8 - 4*\shift)--(7, 0.8 - 4*\shift + \hw)--(1.2 + 4, 0.8 - 4*\shift + \hw)--(1.2 + 4, 0.8 - 4*\shift); 
            \node at (6, 0.8 - 4*\shift + \hw*0.5) {$a$};

            \filldraw[fill=yellow!60!white,-](1.2 + 4, 0.8 - 5*\shift)--(1.8 + 5, 0.8 - 5*\shift)--(1.8 + 5, 0.8 - 5*\shift + \hw)--(1.2 + 4, 0.8 - 5*\shift + \hw)--(1.2 + 4, 0.8 - 5*\shift); 
            \node at (6, 0.8 - 5*\shift + \hw*0.5) {$a$};

            \node at (7.5,0) {$\cdots$};          
    
        \end{scope}

  \end{tikzpicture}
  
\caption{Two sparse decompositions with different event orders, such that only $R$ is valid.}
\label{fig:infinite-product-evord}
\end{figure}
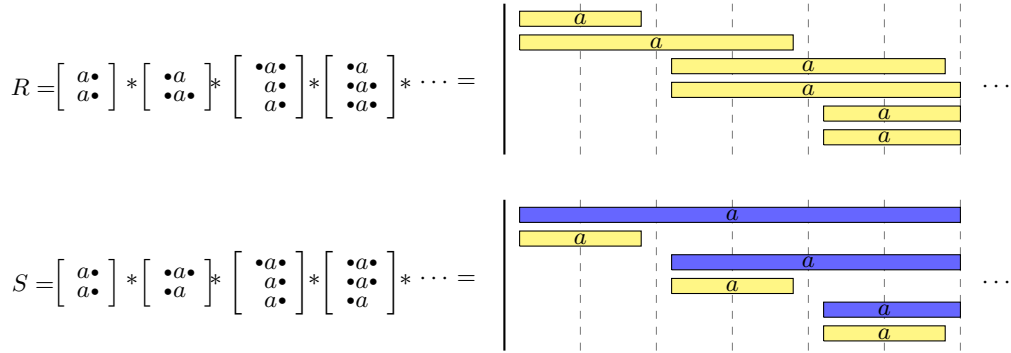

Defining such a decomposition gives us a characterization of $\omega$-iipomsets in terms of their prefixes.
We say that $A \in \ipoms$ is a (finite) \emph{prefix} of $P \in \infipoms$ if there is $Q \in \infipoms$ such that $P = AQ$, and then write $A \prf P$. We call $\pref{P}$ the set of all prefixes of $P$.

\begin{remark}
    Unlike classical $\omega$-words theory, two prefixes of $P$ may be incomparable, for example $a\ibu \prf \loset{a\\b}$ and $b\ibu \prf \loset{a\\b}$ but $a\ibu \not\prf b\ibu$ and $b\ibu \not\prf a\ibu$. But we still have the property that for $A \in \pref{P}$ and $x \in A$, if $y <_P x$ then $y \in A$. Note that neither $a\ibu$ nor $b\ibu$ could be the first step of the sparse decomposition of $(a\parallel b)^\omega$. Its only possible first step is $\loset{a\ibu \\b \ibu }$.
\end{remark}

The following lemma is trivial in classical $\omega$-word theory, but here relies crucially on the fact that we deal with valid $\omega$-ipomsets.

\begin{lemma}
\label{lem:pref-exist}
    If $(A_i)_{i \in I}$ is an infinite set of prefixes of an $\omega$-ipomset $P$, and $R$ is a finite subset of $P$ (seen as a set), then there is $i\in I$ such that $R \subset A_i$.
\end{lemma}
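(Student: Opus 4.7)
The plan is to show, for each $x \in P$, that $I_x := \set{A \in \pref{P} \midbar x \notin A}$ is finite; since then $\set{A \in \pref{P} \midbar R \not\subset A} \subset \bigcup_{x \in R} I_x$ is a finite union of finite sets, only finitely many $A_i$ can fail to contain $R$, and the infinitude of $(A_i)_{i \in I}$ forces the existence of the desired index.

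To bound $I_x$, I would fix $A \in I_x$ with $P = A * Q$ and exploit the gluing cross-relations: for every $z \in A \setminus T_A$ and every $w \in P \setminus A$, one has $z <_P w$. Taking $w = x$ yields $A \setminus T_A \subset \set{z \in P \midbar z <_P x}$, which is finite by validity of $P$.

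Next I control $T_A$. For $y \in T_A$ and any $z <_P y$: by downward-closure, $z \in A$; since $T_A$ is a $<_P$-antichain (no two of its elements are related by $<_P$, as $<_P$ restricted to $A$ coincides with $<_A$), one has $z \notin T_A$; hence $z \in A \setminus T_A$, and the cross-relation above gives $z <_P x$. Therefore $\set{z \midbar z <_P y} \subset \set{z \midbar z <_P x}$ for every $y \in T_A$. Letting $Y_x := \set{y \in P \midbar \set{z \midbar z <_P y} \subset \set{z \midbar z <_P x}}$, we have $T_A \subset Y_x$, and $Y_x$ is finite: any two elements $y, y' \in Y_x$ with $y <_P y'$ would give $y \in \set{z \midbar z <_P y'} \subset \set{z \midbar z <_P x}$, so the elements of $Y_x \setminus \set{z \midbar z <_P x}$ are pairwise $<_P$-incomparable and form an antichain of $P$ containing $x$, hence finite by validity, while the other elements of $Y_x$ lie in the finite set $\set{z \midbar z <_P x}$.

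Therefore $A$ is contained in the finite set $F_x := Y_x \cup \set{z \midbar z <_P x}$; since a prefix is determined by its underlying event set together with its target interface (both subsets of $F_x$, with structure inherited from $P$), only finitely many prefixes lie in $F_x$, so $I_x$ is finite. The key step is the structural observation $T_A \subset Y_x$, which ties together downward-closure of $A$, the antichain property of $T_A$, and the gluing cross-relations in a single argument; the subsequent finiteness of $Y_x$ then follows cleanly from the validity hypothesis on $P$.
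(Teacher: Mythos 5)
Your proof is correct and rests on the same two ingredients as the paper's: the gluing cross-relation (every element of $A\setminus T_A$ precedes every element of $P\setminus A$, in particular $x$) together with validity of $P$, which confine any prefix omitting $x$ to a fixed finite subset of $P$. You package this in contrapositive form — directly showing the set of prefixes omitting $x$ is finite via the bound $A\subset \set{z\midbar z<_P x}\cup Y_x$, then taking a finite union over $x\in R$ — which is a slightly cleaner organization than the paper's "infinitely many $A_i$ meet $P-Q$" argument followed by iteration over $R$, but it is essentially the same proof.
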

\begin{proof}
	Let $(A_i)_{i \in I} \in \pref{P}^I$ (with $I$ infinite), and $x \in R$. We want to prove that the set $\set{A_i \midbar x \in A_i} $ is infinite. Let $x \in P$, and $P^m$ be (the antichain of) the $<$-minimal elements of $\set{x} \cup \set{p \in P \midbar x \not< p \text{ and } p \not< x}$, and $Q := P^m \cup \set{p \in P \midbar p < x}$. As $P$ is valid, $Q$ is finite, and so there is a finite number of ipomsets included in $Q$. Thus there is an infinite number of $A_i \sth A_i \cap (P - Q) \neq \emptyset$. In such $A_i$, let $y_i \in A_i \cap (P - Q)$. There is an element $z_i \in P^m \sth z_i < y_i$: either $x < y_i$ (and so $z_i = x$), or $y_i \in \set{p \in P \midbar x \not< p \text{ and } p \not< x}$ (and so there is $z_i \in P^m \subset Q$ such that $z_i \leq y_i$, and $z_i \neq y_i$ as $z_i \in Q$ and $y_i \not\in Q$). As $A_i$ is a prefix of $P$, $y_i \in A_i$ implies that $z_i$ is necessarily in $A_i - T_{A_i}$, and so $x \in A_i$ (otherwise $x \in B_i - S_{B_i}$ with $P = A_i*B_i$ and so $z_i < x$, which is impossible because $P^m$ is an antichain). So there are infinitely many $A_i$ such that $x \in A_i$.

    Then, we take $(A_i)_{i \in I'}$, an infinite set of prefixes of $P$ that all contain $x$, and we iteratively use the previous argument to get all elements of the (finite) set $R$. At the end we have $(A_i)_{i \in J}$, an infinite set of prefixes of $P$ that all contain $R$, and we conclude the proof by taking anyone of them.
\end{proof}
\begin{remark}
\label{rem:pref-counter-example}
    Note that $P$ is not necessarily  interval in Lem.~\ref{lem:pref-exist}. 
    It only suffices to have an infinite number of prefixes. However being an interval $\omega$-ipomset ensures the existence of an infinite set of prefixes, see below.
    For example, $P = a^\omega \parallel b^\omega \not\in \wiipoms$ has a finite number of prefixes: $\pref{P} = \set{a \ibu,b\ibu,\loset{a\ibu \\ b\ibu}}$, while $Q = (aa \parallel bb).c^\omega \not\in \wiipoms$ has an infinite number of prefixes: $ \pref{Q}\supseteq \set{(aa \parallel bb).c^n \midbar n \in \Nat}$.
\end{remark}
From Lem.~\ref{lem:pref-exist} follows a sequential characterization of interval $\omega$-ipomsets:
\begin{proposition}
\label{prop:seq-wiipoms}
    Let $P$ be an $\omega$-ipomset, the following are equivalent:
    \begin{enumerate}
        \item 
            $P$ is an interval $\omega$-ipomset;
        \item
            $\pref{P}$ is infinite and for all $A \in \pref{P}$, $A$ is an interval ipomset;
        \item
            an infinite number of prefixes of $P$ are interval ipomsets.
    \end{enumerate}
\end{proposition}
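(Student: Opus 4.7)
The plan is to prove the chain $(1) \Rightarrow (2) \Rightarrow (3) \Rightarrow (1)$; only the last implication is substantive.

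For $(1) \Rightarrow (2)$, I would apply Theorem~\ref{th:sparse-dec} to obtain the sparse step decomposition $P = U_0 * U_1 * \cdots$. Sparseness forces the proper $U_i$ to alternate starter and terminator, and since $P$ is infinite this implies infinitely many non-trivial steps, so the partial products $A_n = U_0 * \cdots * U_n$ form an infinite family of pairwise non-isomorphic prefixes of $P$. For any prefix $A$ of $P$, writing $P = A * Q$, the gluing clauses together with the minimality of $S_Q$ and the maximality of $T_A$ yield $<_A = {<_P} \cap (A \times A)$; hence the interval axiom on $P$ restricts to $A$, so $A \in \iipoms$.

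The implication $(2) \Rightarrow (3)$ is immediate. For $(3) \Rightarrow (1)$, I argue contrapositively. Suppose $P$ is not interval and pick witnesses $w, x, y, z \in P$ with $w <_P y$, $x <_P z$, $w \nless_P z$, and $x \nless_P y$; set $R = \{w, x, y, z\}$. Let $(A_i)_{i \in I}$ be the infinite family of interval prefixes provided by hypothesis~(3). Applying Lemma~\ref{lem:pref-exist} to this family and the finite set $R$, I obtain some $i \in I$ with $R \subseteq A_i$. Since $<_{A_i}$ coincides with the restriction of $<_P$ to $A_i$, the same quadruple witnesses failure of the interval axiom in $A_i$, contradicting $A_i \in \iipoms$.

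The main obstacle is $(3) \Rightarrow (1)$: its whole force relies on Lemma~\ref{lem:pref-exist}, which itself depends crucially on the validity hypothesis, as emphasised in Remark~\ref{rem:pref-counter-example}. The companion observation that the precedence order of a prefix is literally the restriction of $<_P$ must also be stated carefully, since if gluing could secretly add further dependencies on $A$ then the reduction of the bad quadruple from $P$ to $A_i$ would break down.
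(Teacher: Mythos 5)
Your proof is correct and follows essentially the same route as the paper: infinitude of prefixes from the existence part of Theorem~\ref{th:sparse-dec}, heredity of the interval axiom to prefixes because $<_A$ is the restriction of $<_P$, and Lemma~\ref{lem:pref-exist} to pull a violating quadruple into a single interval prefix for $(3)\Rightarrow(1)$ (the paper argues this directly rather than contrapositively, which is immaterial). Your explicit justification that gluing adds no precedence pairs inside $A$, via the $<$-minimality of $S_Q$, is a point the paper leaves implicit.
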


In addition, from the last point of Prop.~\ref{prop:seq-wiipoms} along with the preservation of the interval property by gluing composition (see~\cite[Lem.~41]{DBLP:journals/mscs/FahrenbergJSZ21}) we have the following:

\begin{corollary}
\label{cor:inf-product-interval}
    If $P = P_0 * P_1 * \cdots$ is a valid $\omega$-ipomset and $\forall i \in \Nat,\, P_i \in \iipoms$, then $P \in \wiipoms$.
\end{corollary}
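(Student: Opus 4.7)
The plan is to apply the third characterization of interval $\omega$-ipomsets in Proposition~\ref{prop:seq-wiipoms}: it suffices to exhibit infinitely many interval prefixes of $P$. The natural candidates are the finite partial gluings $Q_n = P_0 * P_1 * \cdots * P_n$ for $n \in \Nat$.

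First I would argue that each $Q_n$ lies in $\iipoms$. This follows by a straightforward induction on $n$, using the cited fact~\cite[Lem.~41]{DBLP:journals/mscs/FahrenbergJSZ21} that the gluing $A * B$ of two interval ipomsets $A, B \in \iipoms$ is again interval, together with the hypothesis $P_i \in \iipoms$ for every $i$.

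Next I would show that each $Q_n$ is a prefix of $P$, i.e.\ $Q_n \prf P$. Directly from the definition of infinite gluing in Section~\ref{sec:operation}, the construction of $P$ splits as $P = Q_n * R_n$ where $R_n = P_{n+1} * P_{n+2} * \cdots$ with the interface $T_{Q_n} \cong S_{P_{n+1}}$ provided by the very assumption that the infinite gluing is defined. Hence $Q_n \in \pref{P}$ for all $n \in \Nat$.

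Finally I must argue that $\set{Q_n \midbar n \in \Nat}$ is an infinite family. Here I would use the validity of $P$: since the underlying set of any $\omega$-ipomset is countably infinite, and since each identity factor $P_i \in \Id$ contributes no new events (being absorbed by the gluing, as noted right after the definition), infinitely many $P_i$ must be proper, i.e.\ in $\ipoms \setminus \Id$. Consequently the cardinalities $|Q_n|$ are strictly unbounded in $n$, so the $Q_n$ are pairwise non-isomorphic and form an infinite set of interval prefixes of $P$. Applying Proposition~\ref{prop:seq-wiipoms} concludes that $P \in \wiipoms$. The only mildly delicate point is this last counting step, which could also be handled by invoking Lemma~\ref{lem:pref-exist} to extract arbitrarily large interval prefixes rather than arguing through the cardinalities of the $Q_n$ directly.
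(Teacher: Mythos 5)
Your proposal is correct and follows essentially the same route as the paper, which derives the corollary from item (3) of Proposition~\ref{prop:seq-wiipoms} together with the preservation of the interval property under gluing (\cite[Lem.~41]{DBLP:journals/mscs/FahrenbergJSZ21}); your partial gluings $Q_n = P_0 * \cdots * P_n$ are exactly the infinite family of interval prefixes this argument requires. The extra care you take about identity factors and the unboundedness of $|Q_n|$ is a sound way to make the "infinitely many prefixes" step explicit.
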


\section{Higher-dimensional automata}

In this section, we recall  \emph{higher-dimensional automata} (\emph{HDAs}) over iipomsets~\cite{amrane.23.ictac, DBLP:journals/mscs/FahrenbergJSZ21, DBLP:journals/corr/abs-2210-08298} and introduce $\omega$-HDAs.
Recall that $\conc$ denotes the set of conclists.

\subsection{HDAs over finite ipomsets}
\label{sec:finite-HDA}

	A \emph{precubical set} is a structure $(X,\ev,\Delta)$ with:
	\begin{itemize}
		\item
		a set of cells $X$;
		\item
		a function $\ev : X \maps \conc$ which assigns to a cell its list of active events (for $U \in \conc$ we write $X[U] = \set{q \in X \midbar \ev(x) = U}$); 
		\item
		face maps $\Delta = \set{\delta_{A,U}^0,\,  \delta_{A,U}^1\midbar U \in \conc,\, A \subset U}$ such that $\delta_{A,U}^0,\delta_{A,U}^1 : X[U] \maps X[U-A]$;
		\item 
		for $A,B \subset U \sth A \cap B = \emptyset$ and $\mu, \nu \in \set{0,1}$, we have the following \emph{precubical identities}: $\delta_{A,U-B}^\mu \circ \delta_{B,U}^\nu = \delta_{B,U-A}^\nu \circ \delta_{A,U}^\mu$.
	\end{itemize}
We usually refer to a precubical set by its set of cells $X$, and for face maps we often omit the second subscript $U$. The \emph{dimension} of a cell $q \in X$ is the size $|\ev(q)|$.
An \emph{upper} face map $\delta_A^1$ terminates the events in $A$, whereas a \emph{lower} face map $\delta_A^0$ ``unstarts'' these events, that is, it maps to a cell where the events of $A$ are not yet started. 

A \emph{higher-dimensional automaton} is a tuple $(X, \bot_X, \top_X)$ where $X$ is a \emph{finite} precubical set, and $\bot_X, \top_X\subset X$ are the sets of \emph{starting} and \emph{accepting} cells. 
We may omit the subscripts $_X$ if the context is clear, and refer to an HDA only by its precubical set.

\begin{example}
	\label{ex:hda}
    Fig.~\ref{fig:hda-first-example} shows a two-dimensional HDA as a combinatorial object (left) and in a geometric realisation (right).
    The arrows between the cells on the left representation correspond to the face maps connecting them.   
    It consists of nine cells: the corner cells $X_0 = \{x,y,v,w\}$ in which no event is active (for all $z \in X_0$, $\ev(z) = \emptyset$), the transition cells $X_1 = \{g,h,f,e\}$ in which one event is active ($\ev(f) = \ev(e) = a$ and $\ev(g) = \ev(h) = b$), and the square cell $q$ where $\ev(q) = \loset{a\\b}$.
    By convention, for $a \evord b$, we represent the event $a$ horizontally and the event $b$ vertically.
\end{example}

\begin{figure}[h!]
	\centering
	\begin{tikzpicture}[x=.9cm, y=.9cm, scale=0.9, every node/.style={transform shape}]
		\node[circle,draw=black,fill=blue!30,inner sep=0pt,minimum size=15pt]
		(aa) at (0,0) {$\vphantom{hy}v$};
		\node[circle,draw=black,fill=blue!30,inner sep=0pt,minimum size=15pt]
		(ac) at (0,4) {$\vphantom{hy}x$};
		\node[circle,draw=black,fill=blue!30,inner sep=0pt,minimum size=15pt]
		(ca) at (4,0) {$\vphantom{hy}w$};
		\node[circle,draw=black,fill=blue!30,inner sep=0pt,minimum size=15pt]
		(cc) at (4,4) {$\vphantom{hy}y$};
		\node[circle,draw=black,fill=red!30,inner sep=0pt,minimum size=15pt]
		(ba) at (2,0) {$\vphantom{hy}e$};
		\node[circle,draw=black,fill=red!30,inner sep=0pt,minimum size=15pt]
		(bc) at (2,4) {$\vphantom{hy}f$};
		\node[circle,draw=black,fill=green!30,inner sep=0pt,minimum size=15pt]
		(ab) at (0,2) {$\vphantom{hy}g$};
		\node[circle,draw=black,fill=green!30,inner sep=0pt,minimum size=15pt]
		(cb) at (4,2) {$\vphantom{hy}h$};
		\node[circle,draw=black,fill=black!20,inner sep=0pt,minimum size=15pt]
		(bb) at (2,2) {$\vphantom{hy}q$};
		\node[right] at (5,4) {$X[\emptyset]=\{v,w,x,y\}$};
		\node[right] at (5,3.2) {$X[a]=\{e,f\}$};
		\node[right] at (5,2.4) {$X[b]=\{g,h\}$};
		\node[right] at (5,1.6) {$X[\loset{a\\b}]=\{q\}$};
		\path (ba) edge node[above] {$\delta^0_a$} (aa);
		\path (ba) edge node[above] {$\delta^1_a$} (ca);
		\path (bb) edge node[above] {$\delta^0_a$} (ab);
		\path (bb) edge node[above] {$\delta^1_a$} (cb);
		\path (bc) edge node[above] {$\delta^0_a$} (ac);
		\path (bc) edge node[above] {$\delta^1_a$} (cc);
		\path (ab) edge node[left] {$\delta^0_b$} (aa);
		\path (ab) edge node[left] {$\delta^1_b$} (ac);
		\path (bb) edge node[left] {$\delta^0_b$} (ba);
		\path (bb) edge node[left] {$\delta^1_b$} (bc);
		\path (cb) edge node[left] {$\delta^0_b$} (ca);
		\path (cb) edge node[left] {$\delta^1_b$} (cc);
		\path (bb) edge node[above left] {$\delta^1_{ab}\!\!$} (cc);
		\path (bb) edge node[above left] {$\delta^0_{ab}\!\!$} (aa);
		\node[below left] at (aa) {$\bot\;$};
		\node[below left] at (ab) {$\bot\;$};
		\node[above right] at (cb) {$\;\top$};
		\node[above right] at (cc) {$\;\top$};
		\node[above right] at (ab) {$\;\top$};
		\node[right] at (5,0.8) {$\bot_X=\{v, g\}$};
		\node[right] at (5,0) {$\top_X=\{h, y, g\}$};
		\begin{scope}[shift={(8.8,.4)}, x=1.3cm, y=1.3cm]
			\filldraw[color=black!20] (0,0)--(2,0)--(2,2)--(0,2)--(0,0);			

			\path[red!50!black,line width=1] (0,0) edge node[below, black] {$\vphantom{b}a$} (1.95,0);
			\path[red!50!black,line width=1] (0,2) edge (1.95,2);
			\path[green!50!black,line width=1] (0,0) edge node[pos=.6, left, black] {$\vphantom{bg}b$} (0,1.95);
			\path[green!50!black,line width=1] (2,0) edge (2,1.95);
			\node[left] at (0,0) {$\bot$};
			\node[left] at (0,0.7) {$\bot$};
			\node[right] at (0,0.7) {$\top$};
			\node[right] at (2,2) {$\top$};
			\node[right] at (2,1) {$\top$};
			
			\node[blue!70,centered] at (0,-0.2) {$v$};
			\node[centered, red!50!black] at (1,0.15) {$e$};
			\node[blue!70,centered] at (2,-0.2) {$w$};
			\node[centered,blue!70] at (2,2.2) {$y$};
			\node[centered,blue!70] at (0,2.2) {$x$};
			\node[centered, green!50!black] at (0.2,1.1) {$\vphantom{bg}g$};
			\node[centered, green!50!black] at (1.8,1.1) {$\vphantom{bg}h$};
			\node[centered] at (1,1) {$q$};
			\node[centered, red!50!black] at (1,1.75) {$f$};

            \filldraw (0,0) circle (0.05);
			\filldraw (2,0) circle (0.05);
			\filldraw (0,2) circle (0.05);
			\filldraw (2,2) circle (0.05);
            
		\end{scope}
	\end{tikzpicture}
	\caption{A two-dimensional HDA $X$ on $\Sigma=\{a, b\}$, see Ex.~\ref{ex:hda}.}
	\label{fig:hda-first-example}
\end{figure}
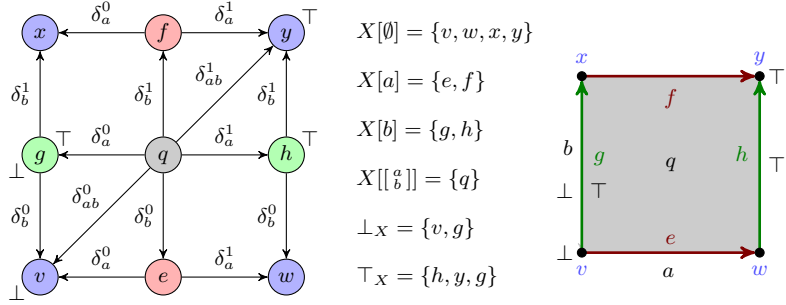

A \emph{track} $\alpha$ in $X$ is a sequence $(q_0,\phi_1,q_1, \dots,\phi_n,q_n)$ where $q_i \in X$ are cells and $\phi_i$ denote face map types. That is, for all $i \leq n$, $(q_i,\phi_i,q_{i\+1})$ is:
\begin{itemize}
    \item
    either an \emph{upstep}: $(\delta_A^0(q_{i\+1}),\arrO{A},q_{i\+1})$ with $A \subset \ev(q_{i\+1})$,
    \item
    or a \emph{downstep}: $(q_i,\arrI{B},\delta_B^1(q_i))$ with $B \subset \ev(q_i)$.
\end{itemize}
The \emph{source} of  $\alpha$ is  $\src(\alpha) = q_0$ and its \emph{target} is $\tgt(\alpha) = q_n$. A track is \emph{accepting} if $\src(\alpha) \in \bot_X$ and $\tgt(\alpha) \in \top_X$.
Note that tracks are concatenations (denoted using $*$) of upsteps and downsteps. 
The language of an HDA is defined in terms of the event ipomsets of its accepting tracks.
Formally, the \emph{event ipomset} of a track $\alpha$ (written $\ev(\alpha)$) is defined by:
\begin{itemize}
    \item
        $\ev((q)) = \id_{\ev(q)}$ (with $q \in X$ and $(q)$ a single-cell track)
    \item
        $\ev((q,\arrO{A},p)) = \starter{\ev(p)}{A}$ (the starter which starts events $A$)
    \item
        $\ev((q,\arrI{B},p)) = \terminator{\ev(q)}{B}$ (the terminator which terminates events $B$)
    \item 
        $\ev(\alpha_1 * \dots * \alpha_n) = \ev(\alpha_1) * \dots * \ev(\alpha_n)$
\end{itemize}
We say that two tracks $\alpha$ and $\beta$ are equivalent, denoted $\alpha \simeq \beta$, with $\simeq$ the relation generated by the three cases $(x \arrO{A} y \arrO{B} z)\simeq (x \arrO{A\cup B} z)$, $(x \arrI{A} y \arrI{B} z) \simeq (x \arrI{A\cup B} z)$, and $\gamma \alpha \delta\simeq \gamma \beta \delta$ whenever $\alpha\simeq \beta$. Basically, two tracks are equivalent if their ``contractions'' into sparse tracks (i.e.\ alternating upsteps and downsteps) and event ipomsets are equal.

\begin{example}
\label{ex:tracks}
    The HDA $X$ of Fig.~\ref{fig:hda-first-example} admits several accepting tracks with target $h$, for example $v\arrO{ab} q\arrI{a} h$.
    This is a sparse track and equivalent to the non-sparse tracks $v\arrO{a} e\arrO{b} q\arrI{a} h$ and $v\arrO{b} g\arrO{a} q\arrI{a} h$.
    Their event ipomset is $\loset{a\nibu\\b\ibu}$.
    The track $v\arrO{a} e \arrI{a} w \arrO{b} h$ is also accepting with event ipomset $ab\ibu \subsu \loset{a\nibu\\b\ibu}$.
    In addition, since $g$ is both a start and accept cell, we have also $g$ and $v\arrO{b} g$ as accepting tracks, with event ipomsets $\ibu b \ibu$ and $b \ibu$, respectively.
    Note that tracks move forward along upper face maps but backward along lower ones.
    $X$ accept tracks whose ipomsets are in $\set{b\ibu,\ibu b\ibu,ab\ibu,\loset{a\nibu\\b\ibu}, \loset{a\\\ibu b\ibu},\ibu b a,\loset{\nibu a\\\ibu b},ab,ba,\loset{a\\b}}$.
\end{example}

The \emph{language} of an HDA $X$ is $L(X) = \set{\ev(\alpha) \midbar \src(\alpha) \in \bot_X \text{ and } \tgt(\alpha) \in \top_X}$. The following result about languages of HDAs is a fundamental property:

\begin{proposition}[{\cite[Prop.~10]{DBLP:conf/concur/FahrenbergJSZ22}}]
	\label{prop:HDA-down-closed}
	Languages of HDAs over finite ipomsets are closed under subsumption (or down-closed).
\end{proposition}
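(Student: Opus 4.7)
The strategy is to start from an accepting track realizing $Q$ in $X$ and transform it, step by step using the precubical identities, into an accepting track realizing any $P \subsu Q$. Each elementary transformation preserves the source and target cells (hence the accepting status) and either adds one precedence relation to the event ipomset or leaves it unchanged, so by iteration one can reach $P$ from $Q$.

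First I would establish an \emph{elementary refinement}: by the identity $\delta^0_A \circ \delta^0_B = \delta^0_{A \cup B}$ (and dually for upper faces), any segment $q \arrO{A \cup B} r$ in a track may be replaced by $q \arrO{A} \delta^0_B(r) \arrO{B} r$ without changing the event ipomset, and likewise for downsteps. Hence every accepting track is $\simeq$-equivalent to a sparse one, and I can freely switch between such forms when setting up the next operation.

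The crucial step is an \emph{elementary swap}: if a track contains a segment $q \arrO{A} r \arrI{B} s$ with $A \cap B = \emptyset$, then from $\delta^1_B \circ \delta^0_A = \delta^0_A \circ \delta^1_B$ the segment $q \arrI{B} q' \arrO{A} s$, with $q' = \delta^1_B(q) = \delta^0_A(s)$, is also a valid track in $X$ with identical endpoints. Its event ipomset differs from the original by replacing the concurrency of the events of $A$ and those of $B$ (which coexisted in $r$) with the precedence ``events of $B$ before events of $A$'', so this swap realizes a single elementary subsumption on the event ipomset. Given an accepting track $\alpha$ with $\ev(\alpha) = Q$ and a subsumption $P \subsu Q$, I would then induct on the number of precedence pairs of $P$ that are missing in $Q$: at each stage, refine $\alpha$ so as to isolate a pair $(B,A)$ corresponding to a ``minimal'' new precedence in $P$, swap, and continue. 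The final track $\beta$ has $\ev(\beta) = P$, witnessing $P \in L(X)$.

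The main obstacle is combinatorial: at each stage one must choose a swappable pair $(B,A)$ that is (i) locally realizable, i.e.\ the two events already coexist in some cell along the (suitably refined) current track, and (ii) consistent with $P$, so that the intermediate ipomset still subsumes $P$. This reduces to checking that whenever the current ipomset strictly subsumes $P$, the event order on $P$ provides such an adjacent pair; this follows from the totality of ${<} \cup {\evord}$ together with the compatibility conditions of the subsumption map, which force any ``missing'' precedence to sit between two events that are event-order adjacent in some maximal antichain of the current ipomset, exactly where a swap can be performed.
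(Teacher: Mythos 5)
Your strategy is essentially the proof of the cited result: the present paper only imports Prop.~\ref{prop:HDA-down-closed} from \cite{DBLP:conf/concur/FahrenbergJSZ22}, and the argument behind it is exactly the track surgery you describe. The elementary swap is sound: since face maps are total, the cell $q'=\delta^1_B(\delta^0_A(r))=\delta^0_A(\delta^1_B(r))$ exists, the segment $q \arrO{A} r \arrI{B} s$ may be replaced by $q \arrI{B} q' \arrO{A} s$ with the same endpoints (so acceptance is preserved), and the gluing semantics shows that the only change to the event ipomset is the addition of $b<a$ for $b\in B$, $a\in A$.

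The one place where your write-up is not yet a proof is the factorization lemma: that every subsumption $P\subsu Q$ \emph{between interval ipomsets} is a composite of such elementary swaps. Three remarks. First, note that $P$ must be interval for this to have any chance (tracks only produce iipomsets); this is consistent with the paper's definition of $L\down$ inside $\iipoms$, but you never state it, and it is where intervality of $P$ enters the argument. Second, besides refinement and the precedence-adding swap you also need the ``free'' commutations that reorder two adjacent starters, or two adjacent terminators, of disjoint event sets --- these leave the event ipomset unchanged (equivalently, they amount to choosing the order in which a refined upstep or downstep is split) and are in general needed to bring the terminator of $x$ next to the starter of $y$ before the precedence-adding swap can be applied. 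Third, your stated justification for the existence of a consistent swappable pair is off target: the relevant adjacency is \emph{temporal} adjacency of a starter and a terminator in the step decomposition of the current ipomset, not event-order adjacency inside a maximal antichain, and the totality of ${<}\cup{\evord}$ is not what makes this work. The correct shape of the argument is an induction on the number of pairs ordered in $P$ but still concurrent in the current ipomset, choosing such a pair whose starter and terminator are separated by as few steps as possible and using the free commutations to make them adjacent; this is precisely the lemma established in \cite{DBLP:conf/concur/FahrenbergJSZ22}, and with it in place your argument is complete.
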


For $L \subset \ipoms$, we write $L \down = \set{P \in \iipoms \midbar \exists Q \in L,\, P \subsu Q}$ for the subsumption closure of~$L$.
For example, the language of Ex.~\ref{ex:hda} is $L(X)=\{b\ibu,\ibu b\ibu,\loset{a\nibu\\b\ibu}, \loset{a\\\ibu b\ibu},\loset{a\\b},\loset{\nibu a\\\ibu b}\}\down$.
%Note in particular that $L$ may contain non-interval ipomsets but $L\down$ does not
Let us extend the notation to $R \subset 2^\ipoms$ with $R\down = \set{L\down \midbar L \in R}$.

Previous work \cite{DBLP:conf/concur/FahrenbergJSZ22} defines \emph{down-closed rational operations} $\cup$, $*\sd$, $\|\sd$ and (Kleene plus) $^{+\sd}$ for languages of finite iipomsets, as follows:
\begin{itemize}
\item $L *\sd M = \{P*Q\mid P\in L,\; Q\in M,\; T_P=S_Q\}\down$,
\item $L\parallel\sd M = \{P\parallel Q\mid P\in L,\; Q\in M\}\down$,
\item $L^{+\sd} = \bigcup\nolimits_{n\ge 1} L^n$, with $L^1=L$ and $L^{n+1}= L*\sd L^n$.
\end{itemize}
The class $\Drat$ of \emph{down-closed rational languages} is then defined to be the smallest class that contains $\emptyset$, $\set{\eps}$, $\set{a}$, $\set{\ibu a}$,$\set{a\ibu}$, $\set{\ibu a \ibu}$ (for $a\in \Sigma$), and is closed under the operations above. 
Note that the explicit downclosure $\down$ of the operations above ensures that the built languages contain only interval ipomsets.\footnote{For $L = \set{ac}$ and $M = \set{\ibu b d \ibu}$, $(ac \parallel\ibu bd \ibu) \not\in L \parallel\sd M$ but, for example, the ipomsets of Fig~\ref{fig:interval-ipomset} are in $L \parallel\sd M$.
}
We will below introduce rational operations which \emph{do not} apply down-closure;
to distinguish them from the ones above we have added a subscript $\sd$.

\begin{theorem}[\cite{DBLP:conf/concur/FahrenbergJSZ22}]
  \label{th:kleene}
  A language in $2^\iipoms$ can be recognized by an HDA iff it is in $\Drat$.
\end{theorem}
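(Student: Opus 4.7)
The plan is to prove both inclusions $\Drat \subset L(\ms{HDA})$ and $L(\ms{HDA}) \subset \Drat$ separately, leveraging the step-decomposition machinery developed in Section~\ref{sec:dec-iipoms} (finite case) and the notion of an ST-automaton alluded to in the introduction. The key bridging observation is that each accepting track in an HDA is, up to $\simeq$, a word over the alphabet $\Xi$ of starters and terminators, and its event ipomset is the gluing of the corresponding letters; this reduces the theorem to a classical Kleene-style correspondence for automata over $\Xi$, plus a translation between rational expressions over $\Xi$ and down-closed rational expressions over $\iipoms$.

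For the direction $\Drat \subset L(\ms{HDA})$, I would first exhibit concrete HDAs for the six base languages: $\emptyset$ is the empty HDA, $\set{\eps}$ is one $0$-cell that is both starting and accepting, $\set{a}$ is the standard $a$-edge with both faces distinguished, and $\set{\ibu a}$, $\set{a\ibu}$, $\set{\ibu a\ibu}$ arise by marking one or both endpoints of an $a$-edge as interface. Then, I would show that $L(\ms{HDA})$ is closed under $\cup$ (disjoint union of HDAs), $*\sd$ (a standard gluing of HDAs along the matching of target faces of $X$ with source faces of $Y$), $\|\sd$ (a tensor-product construction of precubical sets), and $^{+\sd}$ (identify target faces with source faces, iterated). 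The main point to verify for each operation is that the constructed HDA's language is exactly the subsumption closure of the naive concatenation; Prop.~\ref{prop:HDA-down-closed} guarantees that $L(\ms{HDA})$ is automatically down-closed, so one only needs the non-closure inclusions.

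For the direction $L(\ms{HDA}) \subset \Drat$, I would pass through ST-automata: given an HDA $X$, build a classical finite automaton $A_X$ over the alphabet $\Xi$ whose states are the cells, whose transitions $q \xrightarrow{U} q'$ exist whenever $(q, \phi, q')$ is an upstep or downstep with $\ev$-label $U$, and whose initial/final states are $\bot_X$ and $\top_X$. Then $L(A_X) \subset \Xi^*$ is a regular language by classical Kleene, hence described by a rational expression $E$ over $\Xi$. Replacing each starter/terminator in $E$ by the singleton language containing that discrete ipomset, Kleene-star by $^{+\sd}$, and concatenation by $*\sd$, yields a down-closed rational expression $\tilde E$; the content of the argument is that sparse decompositions (Th.~\ref{th:sparse-dec}, finite version) give a bijection between $\simeq$-equivalence classes of tracks and sparse ST-words, so that the event ipomsets of accepting tracks of $X$ are exactly the subsumption closure of the ipomsets named by $\tilde E$.

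The main obstacle, as I see it, is precisely the interaction between subsumption closure and the rational operations. The issue is that the naive translation of ST-words to rational expressions produces an expression whose naive (non-down-closed) language is in bijection with sparse decompositions, but the HDA language is already down-closed, so one must argue that every iipomset subsumed by such an ipomset also appears as a sparse decomposition of some accepting track — which is where the combinatorial structure of HDAs (every refinement of event order can be realised by a different path through a cube) does the work. Conversely, in the forward direction, the explicit $\down$ in the definitions of $*\sd$, $\|\sd$, $^{+\sd}$ must be matched by showing that subsumed iipomsets are already recognised by the constructed HDA, which again uses the cubical flexibility. Once these two ``subsumption-witness'' lemmas are established, the remainder is a routine induction on rational expressions and a routine state-elimination argument, respectively.
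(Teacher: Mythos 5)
First, note that the paper does not prove this statement: Th.~\ref{th:kleene} is imported verbatim from \cite{DBLP:conf/concur/FahrenbergJSZ22}, so there is no in-paper proof to compare against. Judged on its own, your plan for the direction ``recognizable $\Rightarrow$ $\Drat$'' is sound and essentially coincides with the machinery this paper builds elsewhere: translate $X$ to an ST-automaton over $\Xib$, apply the classical Kleene theorem, eliminate $\varepsilon$ and $^*$, and push the expression through $\Psi$ to land in $\rat$; then the ``subsumption-witness'' step you correctly flag is exactly Prop.~\ref{prop:HDA-down-closed}, which together with the compatibility identities $(L*M)\down = L\down *\sd M\down$ and $(L^+)\down = (L\down)^{+\sd}$ (Lem.~\ref{lem:rat-eq-*}, Cor.~\ref{cor:rat-eq-+}) yields $L(X)=L(X)\down\in\rat\down\subseteq\Drat$. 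Two small repairs there: accept \emph{cells} of $X$ are matched with start \emph{cells} of $Y$ (not ``faces''), and singletons $\set{U}$ for $U\in\Xi$ of width $>1$ are not base languages of $\Drat$, so you need the cited fact that $\Drat$ is also generated from $\Xi$ without $\parallel\sd$.

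The genuine gap is in the direction $\Drat\subseteq L(\ms{HDA})$, where the closure constructions for $*\sd$ and $^{+\sd}$ are asserted to be ``standard'' but are precisely the hard content of the original theorem. Identifying an accept cell $q$ of $X$ (of dimension possibly $>0$) with a start cell $q'$ of $Y$ forces, by the precubical identities, the identification of the entire face lattices of $q$ and $q'$; consequently the glued object admits tracks that cross from $X$ into $Y$ through a \emph{proper face} of $q$ without ever traversing $q$, i.e.\ tracks that terminate only some of the interface events in $X$ and restart them ``fresh'' in $Y$. Showing that every such track's event ipomset is still subsumed by some element of $L(X)*L(Y)$ --- and, dually, that no entirely new ipomsets are accepted --- is where the real work lies, and your sketch gives no argument for it. The same problem compounds for $^{+\sd}$, where self-gluing can additionally create unintended cycles whose events do not decompose as elements of $(L^+)\down$. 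Your appeal to Prop.~\ref{prop:HDA-down-closed} only yields the easy inclusion $(L(X)*L(Y))\down\subseteq L(X*Y)$ (once $P*Q$ is realized by concatenating tracks); the converse inclusion is not ``routine induction'' and would need the careful normalization of start/accept cells used in \cite{DBLP:conf/concur/FahrenbergJSZ22}. Until those two verifications are supplied, the forward direction remains a roadmap rather than a proof.
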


\subsection{$\omega$-higher-dimensional automata}
\label{sec:def-wHDA}

An \emph{$\omega$-HDA} is an HDA whose tracks are infinite. 
Formally, an \emph{$\omega$-track} is an infinite sequence $\alpha = (q_0,\phi_1,q_1, \dots)$ where $q_i \in X$ and each $(q_i,\phi_i,q_{i\+1})$ is an upstep or a downstep. 
The event ipomset of an $\omega$-track is defined similarly using the infinite gluing. 
From this we may directly introduce the classical acceptance conditions of $\omega$-automata theory:

Let $\Inf(\alpha) = \set{q \in X \mid |\{i \midbar q_i = q\}| = + \infty}$ be the set of cells seen infinitely often by $\alpha$.
\label{def:Büchi-w-HDA}
A \emph{Büchi} $\omega$-HDA is an HDA $(X, \bot_X, \top_X)$ where $ \bot_X, \top_X\subset X$ are the sets of \emph{starting} and \emph{accepting} cells, and where an $\omega$-track $\alpha$ is accepting if $\src(\alpha) \in \bot_X$ and $\Inf(\alpha) \cap \top_X \neq \emptyset$.
Similarly,  a \emph{Muller} $\omega$-HDA is an HDA $(X, \bot_X, F_X)$ where $ \bot_X \subset X$ is the set of \emph{starting} cells and $F_X \subset 2^X$ is the set of \emph{accepting sets} of cells, and where an $\omega$-track $\alpha$ is accepting if $\src(\alpha) \in \bot_X$ and $\Inf(\alpha) \in F_X$.

\begin{example}
\label{ex:buchi}
	Büchi $\omega$-HDAs $X_1,X_2$ and $X_3$ are defined in Fig.~\ref{fig:buchi-example}. 
	Cells of the same color are identified in each $\omega$-HDA, and so are their respective faces (as required by the definition of precubical sets).
    Observe that $X_1$ and $X_2$ form a torus and $X_3$ a cylinder.
   Their languages will be specified later in the paper (see Ex.~\ref{ex:wrat-languages} and Ex.~\ref{ex:buchi-wform}). 
\end{example}

\begin{example}
\label{ex:muller}
    Let $X_4$ be the torus HDA of Fig.~\ref{fig:buchi-example} and assume $\bot_{X_4} = \{q_0\}$ and $F_{X_4} = \{\{q_0,q_{ab}\}\}$.
    Then $(X_4,\bot_{X_4},F_{X_4})$ defines a Muller $\omega$-HDA.
    The $\omega$-track $\alpha = (q_0,\arrO{a},q_a,\arrI{a},q_0,\arrO{ab},q_{ab},\arrI{ab},q_0,\arrO{ab},q_{ab},\arrI{ab},q_0,\cdots)$ of event $\ev(\alpha)= a \cdot \loset{a\\b}^\omega$ is accepting as $\Inf(\alpha) = \set{q_0,q_{ab}} \in F_{X_4}$.
\end{example}

\begin{figure}[!h]

    \centering
    \begin{tikzpicture}[scale=1]

       \begin{scope}[shift={(0,1)}]
			\node at (0,0) {\includegraphics[width=3cm]{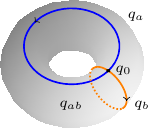}};
		\end{scope}	
        \begin{scope}[shift={(3.6,0)}]

            \node at (-0.8,1) {$X_1:$};
            
            \filldraw[color=black!10!white] (0,0)--(2,0)--(2,2)--(0,2)--(0,0);

            \filldraw (0,0) circle (0.05);
            \filldraw (0,2) circle (0.05);
            \filldraw (2,0) circle (0.05);
            \filldraw (2,2) circle (0.05);
            
            \path (0,0) edge node[below] {$q_a$} (1.95,0);
			\path (0,2) edge node[above] {$q_a$} (1.95,2);
			\path (0,0) edge node[left] {$q_b$} (0,1.95);
			\path (2,0) edge node[right] {$q_b$} (2,1.95);

            \node[below] at (0,0) {$q_0$};
            \node[below] at (2,0) {$q_0$};
            \node[above] at (0,2) {$q_0$};
            \node[above] at (2,2) {$q_0$};

            \path (0.5,0) edge[color=blue,-, very thick] (1.5,0);
            \path (0.5,2) edge[color=blue,-, very thick] (1.5,2);
            \path (0,0.5) edge[color=orange,-, very thick] (0,1.5);
            \path (2,0.5) edge[color=orange,-, very thick] (2,1.5);
            
            \node at (-0.25,-0.25) {$\bot$};
            \node at (1,0.85) {$\top$};
            \node at (1,1.15) {$q_{ab}$};

        \end{scope}

        \begin{scope}[shift={(7.5,0)}]

            \node at (-0.8,1) {$X_2:$};

            \filldraw[color=black!10!white] (0,0)--(2,0)--(2,2)--(0,2)--(0,0);
            
            \filldraw (0,0) circle (0.05);
            \filldraw (0,2) circle (0.05);
            \filldraw (2,0) circle (0.05);
            \filldraw (2,2) circle (0.05);
            
            \path (0,0) edge node[below] {$q_a$} (1.95,0);
			\path (0,2) edge node[above] {$q_a$} (1.95,2);
			\path (0,0) edge node[left] {$q_b$} (0,1.95);
			\path (2,0) edge node[right] {$q_b$} (2,1.95);

            \node[below] at (0,0) {$q_0$};
            \node[below] at (2,0) {$q_0$};
            \node[above] at (0,2) {$q_0$};
            \node[above] at (2,2) {$q_0$};
            
            \node at (1,1) {$q_{ab}$};
            \node at (-0.4,-0.25) {$\bot \top$};
            
            \path (0.5,0) edge[color=blue,-, very thick] (1.5,0);
            \path (0.5,2) edge[color=blue,-, very thick] (1.5,2);
            \path (0,0.5) edge[color=orange,-, very thick] (0,1.5);
            \path (2,0.5) edge[color=orange,-, very thick] (2,1.5);
        
        \end{scope}
        \begin{scope}[shift={(-0.6,-2)}, scale=.8, -]
			\filldraw[color=lightgray] (0,-1) rectangle (2,1);
			\filldraw[color=lightgray] (0,0) circle [x radius=.5,y radius=1];
			\draw (0,1) arc [x radius=.5,y radius=1,start angle=90,end angle=270];
			\draw[dotted] (0,-1) arc [x radius=.5,y radius=1,start angle=-90,end angle=90];
			\draw[fill=lightgray!50!white] (2,0) circle [x radius=.5,y radius=1];
			\draw[->] (-.5,-.01) -- (-.5,.01);
			\draw[->] (.5,.01) -- (.5,-.01);
			\draw[->] (1.5,-.01) -- (1.5,.01);
			\draw[->] (2.5,.01) -- (2.5,-.01);
			\draw (0,-1) edge[-] (1,-1) -- (2,-1);
			\draw (0,1) edge[->] (1,1) -- (2,1);
			\draw[blue, thick] (0,1) -- node[above, black] {$q_a$} (2,1);
			\node at (-1,0) {$q_b$};
			\node at (3,0) {$q_b'$};
                        \filldraw (0,1) circle (0.05);
                        \node at (-.5,1) {$q_0$};
                        \filldraw (2,1) circle (0.05);
                        \node at (2.5,1) {$q_0'$};
		\end{scope}

        \begin{scope}[shift={(5,-3)}]

            \node at (-0.8,1) {$X_3:$};
        
            \filldraw[color=black!10!white] (0,0)--(2,0)--(2,2)--(0,2)--(0,0);
            
            \filldraw (0,0) circle (0.05);
            \filldraw (0,2) circle (0.05);
            \filldraw (2,0) circle (0.05);
            \filldraw (2,2) circle (0.05);
            
            \path (0,0) edge node[below] {$q_a \ \top$} (1.95,0);
            \path (0,2) edge node[above] {$q_a \ \top$} (1.95,2);
            \path (0,0) edge node[left] {$q_b$} (0,1.95);
            \path (2,0) edge node[right] {$q'_b$} (2,1.95);

            \node[below] at (0,0) {$q_0$};
            \node[below] at (2,0) {$q_0'$};
            \node[above] at (0,2) {$q_0$};
            \node[above] at (2,2) {$q_0'$};

            \node at (1,1) {$q_{ab}$};
            \node at (-0.25,-0.25) {$\bot$};
            
            \path (0.5,0) edge[color=blue,-, very thick] (1.5,0);
            \path (0.5,2) edge[color=blue,-, very thick] (1.5,2);
        
        \end{scope}

    \end{tikzpicture}
    
\caption{Büchi $\omega$-HDAs (left) and unfolded views (right). See Ex.~\ref{ex:buchi}.}
\label{fig:buchi-example}
\end{figure}

The \emph{language} of a Büchi $\omega$-HDA $X$ is $L(X) = \set{ \ev(\alpha) \midbar \src(\alpha) \in \bot_X \text{ and } \Inf(\alpha) \cap \top_X \neq \emptyset}$ and we call \emph{Büchi $\omega$-regular} such a language. Similarly, for a Muller $\omega$-HDA $Y$ we have ${L(Y) = \set{ \ev(\alpha) \midbar \src(\alpha) \in \bot_Y \text{ and } \Inf(\alpha) \in F_Y}}$ and we call \emph{Muller $\omega$-regular} such a language.
As in classical $\omega$-theory, a Büchi $\omega$-HDA $X_B$ can be seen as a Muller $\omega$-HDA $X_M$, where the acceptance condition is modified to $F_{X_M} = \set{ A \subset X \midbar A \cap \top_X \neq \emptyset}$. Hence Büchi $\omega$-regular languages are Muller $\omega$-regular.
However, unlike HDAs over finite ipomsets we have:
\begin{proposition}
\label{prop:wHDA-not-down-closed}
    $\omega$-regular languages may \emph{not} be  down-closed.
\end{proposition}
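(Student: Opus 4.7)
The plan is to exhibit a Büchi $\omega$-HDA whose language fails to be closed under subsumption. A natural candidate is the torus $X_1$ from Figure~\ref{fig:buchi-example}, with start cell $\bot_{X_1} = \{q_0\}$ and unique accepting cell $\top_{X_1} = \{q_{ab}\}$. I will show that $\loset{a\\b}^\omega$ belongs to $L(X_1)$, whereas $(ab)^\omega$, which subsumes it, does not.

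For the first claim, the $\omega$-track repeatedly alternating the upstep $q_0 \arrO{ab} q_{ab}$ with the downstep $q_{ab} \arrI{ab} q_0$ is well-defined in $X_1$, visits $q_{ab}$ infinitely often, and by infinite gluing its event ipomset is $\loset{a\\b}^\omega$, which is valid by Lemma~\ref{lem:infinite-product-well-defined} (bounded width two, infinitely many non-identities) and interval by Corollary~\ref{cor:inf-product-interval}. For the subsumption $(ab)^\omega \subsu \loset{a\\b}^\omega$, I would define the label-preserving bijection sending the $i$-th $a$ (resp.\ $b$) of the chain $(ab)^\omega$ to the $i$-th $a$ (resp.\ $b$) of $\loset{a\\b}^\omega$. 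Every precedence in the concurrent pomset lifts to one in the chain; the event-order condition is vacuous since no two events in $(ab)^\omega$ are precedence-incomparable; and both pomsets have empty source interface and no $<$-maximal element.

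The main step---and the expected main obstacle---is showing $(ab)^\omega \notin L(X_1)$. The geometric intuition is that any track reaching the two-dimensional cell $q_{ab}$ must have started both $a$ and $b$ before terminating either, so its event ipomset must contain two precedence-incomparable occurrences with labels $a$ and $b$; but every pair of events in the chain $(ab)^\omega$ is precedence-comparable. To make this rigorous, I would invoke the uniqueness of sparse step decompositions (Theorem~\ref{th:sparse-dec}): any $\omega$-track in $X_1$ with event ipomset $(ab)^\omega$ induces a step decomposition of $(ab)^\omega$ refining its sparse one, whose starters and terminators each carry a single event, so only cells of dimension at most one are visited. Since every Büchi-accepting $\omega$-track of $X_1$ must visit $q_{ab}$ (a two-dimensional cell) infinitely often, no accepting track has event ipomset $(ab)^\omega$, and therefore $L(X_1)$ is not down-closed.
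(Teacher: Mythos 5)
Your proof is correct and follows essentially the same route as the paper: both use the torus $X_1$, show $\loset{a\\b}^\omega\in L(X_1)$ via the track alternating $q_0\arrO{ab}q_{ab}\arrI{ab}q_0$, and rule out $(ab)^\omega$ because any track realizing it can only visit cells of dimension at most one, hence never the accepting cell $q_{ab}$. Your width/dimension argument for the last step is a more explicit justification of what the paper simply asserts (that the only track with event $(ab)^\omega$ is the one-dimensional one), so no substantive difference.
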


\begin{proof}
  In the Büchi $\omega$-HDA $X_1$ of Ex.~\ref{ex:buchi}, we have  $\loset{a \\ b}^\omega = \ev(q_0,\arrO{ab},q_{ab},\arrI{ab},q_0,...) \in L(X_1)$ (as it goes infinitely many times through $q_{ab} \in \top_{X_1}$). But $(ab)^\omega \subsu \loset{a \\ b}^\omega$, and the only $\omega$-track whose event is this $\omega$-ipomset is $(q_0,\arrO{a},q_a,\arrI{a},q_0,\arrO{b},q_b,\arrI{b},q_0,...)$ which is not accepting. Thus $(ab)^\omega \not\in L(X_1)$. The same can be done for the Muller $\omega$-HDA $X_4$ of Ex.~\ref{ex:muller}.
\end{proof}

\section{$\omega$-rational languages}
\label{sec:w-rat}

We want to extend the notion of rationality to languages of  interval $\omega$-ipomsets and capture the expressiveness of $\omega$-HDAs.
Thus, we introduce a non-nested $\omega$-iteration over languages of finite interval ipomsets.
On the other hand, the rational operations defined in~\cite{DBLP:conf/concur/FahrenbergJSZ22} take subsumption into account. 
As languages of $\omega$-HDA are not down-closed, we cannot just extend these to $\omega$-ipomsets,
but need to redefine them without subsumption closure.

We define the $\omega$-\emph{rational operations} over $2^\infiipoms$ by:
\begin{itemize}
    \item
        For $(L,M) \subset \infiipoms \times \infiipoms$: $L + M = L \cup M$
    \item
        For $(L,M) \subset \iipoms \times \infiipoms$: $L * M = LM = \set{P * Q \midbar P \in L,\, Q \in M,\, T_P \cong S_Q}$
    \item
         For $L \subset \iipoms$: $L^+ = \bigcup_{n \geq 1}L^n$ (with $L^1 = L$ and $L^{n+1} = L * L^n$)
    \item
         For $L \subset \iipoms$: $L^\omega =  \set{P_1 * P_2 * \dots \midbar \forall i,\, P_i \in L, T_{P_i} \cong S_{P_{i\+1}} \text{ and } |\set{i : P_i \not\in \Id}| = + \infty }$
\end{itemize}
As in the finite case, the Kleene iteration is the non-empty $^+$ instead of $^*$ because $L^0 = \Id$ is not regular since it has infinite width.
In addition, the operation $\parallel$ defined over languages of finite interval ipomsets (and over finite  \cite{gischer1988equational} and $\omega$-series-parallel pomsets \cite{Kuske00}) is not used in our case. 
The reason is that it would typically not give interval $\omega$-ipomsets $(a^\omega \parallel b^\omega$ for example).
In \cite[Prop.~16]{DBLP:conf/concur/FahrenbergJSZ22}, it is shown that $\Drat$ may also be obtained from $\emptyset$ and $\Xi$ (the set of starters and terminators over $\Sigma$) using down-closed rational operations \emph{without} $\|$.
We define $\omega$-rational languages $\infrat$ as the smallest set such that:
\begin{itemize}
    \item 
        $\emptyset \in \infrat$ and for $U \in \Xi$, $\set{U} \in \infrat$;
    \item
        if $L,M \in \infrat$, then $L+M \in \infrat$;
    \item
        if $L,M \in \infrat$ and $L \subset \ipoms$, then $L*M \in \infrat$;
    \item
        if $L \in \infrat$ and $L \subset \ipoms$, then $L^+ \in \infrat$ and $L^\omega \in \infrat$.
\end{itemize}
We will often use $U$ instead of $\set{U}$ for ease of reading. We define the \emph{width of a language} $L$ by $\ms{wd}(L) = \sup{ \ms{wd}(P) \midbar P \in L}$. 
As they are built inductively from $\Xi$, using  finitely many $\omega$-rational operations, all $\omega$-rational languages have  finite width.

\begin{proposition}
\label{prop:wrat-finite-width}
    For $L \in \infrat$, $\ms{wd}(L) < + \infty$.
\end{proposition}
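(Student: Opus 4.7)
The plan is to argue by structural induction on the definition of $\infrat$, showing at each clause that the constructed language inherits finite width. The two base clauses are immediate: $\ms{wd}(\emptyset)=0$, and for $U\in\Xi$ the ipomset $U$ is discrete with empty precedence order, so $\ms{wd}(\set{U})=|U|<+\infty$.

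The heart of the argument is a uniform \emph{antichain-localization} lemma: for any $P\in\ipoms$ and $Q\in\infipoms$ with $T_P\cong S_Q$, every $<$-antichain $A$ of $P*Q$ satisfies $A\subset P$ or $A\subset Q$ (identifying $T_P$ with $S_Q$). Indeed, if $x\in A\cap(P\setminus T_P)$ and $y\in A\cap(Q\setminus S_Q)$, then by the definition of $*$ we would have $x<_{P*Q}y$, contradicting antichainness. Since the orders on $P$ and on $Q$ are preserved by gluing, this yields $\ms{wd}(P*Q)\leq\max(\ms{wd}(P),\ms{wd}(Q))$. From this the clauses $L+M$, $L*M$, and $L^+$ follow easily by taking suprema: $\ms{wd}(L+M)=\max(\ms{wd}(L),\ms{wd}(M))$, and $\ms{wd}(L^n)\leq\ms{wd}(L)$ for every $n\geq 1$ by induction on $n$, hence $\ms{wd}(L^+)\leq\ms{wd}(L)$.

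For the $\omega$-iteration clause, let $m=\ms{wd}(L)<+\infty$ (given by the induction hypothesis) and consider $P=P_0*P_1*\cdots\in L^\omega$. Since $\ms{wd}(P_i)\leq m$ for all $i$, Lemma~\ref{lem:infinite-product-well-defined} guarantees that $P$ is a \emph{valid} $\omega$-ipomset; in particular, every $<_P$-antichain is \emph{finite} by Definition~\ref{def:validity}. Given a finite antichain $A$ of $P$, let $i^\star=\max\set{i\in\Nat\midbar A\cap P_i\neq\emptyset}$, which exists by finiteness. The restriction of $<_P$ to $P_0\sqcup\cdots\sqcup P_{i^\star}$ coincides with the order of the finite prefix $P_0*\cdots*P_{i^\star}$, so iterating the bound from the previous paragraph gives $|A|\leq\ms{wd}(P_0*\cdots*P_{i^\star})\leq\max_{i\leq i^\star}\ms{wd}(P_i)\leq m$. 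Taking the supremum over $A$ and over $P\in L^\omega$ yields $\ms{wd}(L^\omega)\leq m$, closing the induction.

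The only genuinely delicate step is the $\omega$-iteration case, where one has to convert the \emph{a priori} infinite combinatorial object $P$ into a finite one before invoking the gluing bound; this is precisely where validity (and hence Lemma~\ref{lem:infinite-product-well-defined}) is essential, since without the finite-antichain condition the localization to a finite prefix would fail.
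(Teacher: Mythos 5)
Your overall strategy---structural induction on $\infrat$, with the localization lemma $\ms{wd}(P*Q)\leq\max(\ms{wd}(P),\ms{wd}(Q))$ driving the $*$, $^+$ and $^\omega$ clauses---is the right one, and the base cases and finite clauses are fine. But one step in the $\omega$-iteration case fails as written: the index $i^\star=\max\set{i\in\Nat\midbar A\cap P_i\neq\emptyset}$ need not exist, even for a finite antichain $A$. An event lying in $T_{P_i}\cong S_{P_{i\+1}}$ for all large $i$ (an interface event that is never terminated) belongs, after the quotient defining the infinite gluing, to \emph{every} factor from some point on. Concretely, for $L=\set{\loset{\ibu a\ibu\\ b\ibu}}*\set{\loset{\ibu a\ibu\\ \ibu b}}=\set{\loset{\ibu a\ibu\\ b}}\in\rat$ and $P=\loset{\ibu a\ibu\\ b}^\omega\in L^\omega$, the antichain $A=\set{a,b_1}$ (with $b_1$ the $b$-event of the first factor) satisfies $a\in P_i$ for all $i$, so the set of indices you maximize over is all of $\Nat$. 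Finiteness of $A$ bounds the number of its elements, not the number of factors each element inhabits.

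The repair is small. Each $x\in A$ has a well-defined \emph{first} factor $f(x)=\min\set{i\midbar x\in P_i}$ (finite by construction of the infinite gluing), so you may instead set $i^\star=\max_{x\in A}f(x)$, which exists since $A$ is finite; then $A$ lies inside the finite prefix $P_0*\cdots*P_{i^\star}$, whose gluing order is contained in the restriction of $<_P$, and your iterated binary bound gives $|A|\leq\max_{i\leq i^\star}\ms{wd}(P_i)\leq m$. Alternatively, note that the set of factors containing a given event is an interval of consecutive integers, and that two events with disjoint such intervals are necessarily $<_P$-comparable (the earlier one has left $T_{P_i}$, the later one is outside $S_{P_j}$, and $i\+1\leq j$); hence the intervals of the elements of an antichain pairwise overlap, and by the one-dimensional Helly property any finite antichain is contained in a single $P_i$, giving $|A|\leq\ms{wd}(P_i)\leq m$ directly---this version even bounds arbitrary antichains, since any $m\+1$ elements of one would form a finite antichain of size $m\+1$. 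With either patch the induction closes and the rest of your argument stands.
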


Thus, Prop.~\ref{prop:wrat-finite-width} and Lem.~\ref{lem:infinite-product-well-defined} imply that $\infrat \subset 2^\infipoms$, that is, $\omega$-rational operations preserve validity. 
In addition, since $\Xi \subset \iipoms$ and by Cor.~\ref{cor:inf-product-interval}, we have:

\begin{proposition}
\label{prop:infrat-in-infiipoms}
    For all $L \in \infrat$, we have $L \subset \infiipoms$.
\end{proposition}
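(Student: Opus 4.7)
The plan is to proceed by structural induction on the construction of $L$ as an $\omega$-rational language, peeling off the outermost operation at each step and appealing to already-established closure properties of the interval condition.

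For the base cases, $\emptyset \subseteq \infiipoms$ trivially, and for $U \in \Xi$ the ipomset $U$ is discrete (i.e.\ $<_U = \emptyset$), so the interval condition $w < y \wedge x < z \Rightarrow w < z \vee x < y$ is vacuously satisfied; hence $\set{U} \subseteq \iipoms \subseteq \infiipoms$. The union step $L + M$ is immediate since $\infiipoms$ is closed under set union. For the gluing step, if $L \subseteq \iipoms$ and $M \subseteq \infiipoms$ by induction, then for any $P \in L$ and $Q \in M$ with $T_P \cong S_Q$ the gluing $P*Q$ is interval: when $Q \in \iipoms$ this is the finite case already recorded in~\cite[Lem.~41]{DBLP:journals/mscs/FahrenbergJSZ21}, and when $Q \in \wiipoms$ the same argument works, since an interval representation of $P*Q$ is obtained by shifting the one of $Q$ past that of $P$. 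The Kleene-plus case $L^+$ follows by a trivial induction on $n$ from the gluing case.

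The key case is $L^\omega$. Assume inductively that $L \subseteq \iipoms$. Let $P = P_0 * P_1 * \cdots \in L^\omega$, so that each $P_i \in L$, consecutive interfaces match, and infinitely many $P_i$ lie outside $\Id$. By Prop.~\ref{prop:wrat-finite-width}, $\ms{wd}(L) < +\infty$, so there exists $m \in \Nat$ with $\ms{wd}(P_i) \leq m$ for all $i$. Together with the infiniteness of non-identity factors, Lem.~\ref{lem:infinite-product-well-defined} then yields that $P$ is a valid $\omega$-ipomset. Finally, since every $P_i \in L \subseteq \iipoms$, Cor.~\ref{cor:inf-product-interval} upgrades this to $P \in \wiipoms \subseteq \infiipoms$, completing the induction.

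No step is particularly hard; the main subtlety is in the $L^\omega$ case, where one must remember to combine two separate prerequisites before invoking Cor.~\ref{cor:inf-product-interval}: validity of the infinite gluing (which is where Prop.~\ref{prop:wrat-finite-width} and Lem.~\ref{lem:infinite-product-well-defined} enter) and the interval property of each finite factor (which comes from the inductive hypothesis together with the earlier base and composition cases).
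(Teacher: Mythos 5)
Your proof is correct and follows essentially the same route as the paper: a structural induction over the $\omega$-rational operations whose base case is $\Xi \subset \iipoms$, whose $\omega$-iteration case combines Prop.~\ref{prop:wrat-finite-width} and Lem.~\ref{lem:infinite-product-well-defined} for validity with Cor.~\ref{cor:inf-product-interval} for the interval property. The paper merely leaves the induction implicit, so your write-up is just a more explicit rendering of the same argument.
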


We write $\wrat = \infrat \cap 2^\wiipoms$ for \emph{$\omega$-rational} languages, and $\rat = \infrat \cap 2^\iipoms$ for finite ones. 
Note that $\rat$ is not the class of rational languages defined in~\cite{DBLP:conf/concur/FahrenbergJSZ22} (that we denote $\Drat$ for down-closed rational).
For example, $\set{\loset{a \ibu \\b \ibu}} * \set{\loset{\ibu a \\\ibu b}} = \set{\loset{a\\b}} \in \rat$ as the gluing of a starter and a terminator, but is not in $\Drat$ (because not down-closed). 
We show in Prop.~\ref{prop:finite-rationality} that up to down-closure, both notions are equivalent in the finite case. 

\section{$\omega$-rationality vs.\ $\omega$-regularity}
\label{sec:relation-of-w-languages}

In this section, we explore the connections between $\omega$-rationality and $\omega$-regularity. 
We show that $\omega$-regular languages are $\omega$-rational.
The proof relies on a type of classical $\omega$-automata derived from $\omega$-HDAs called \emph{ST-automata}.
On the other hand, the opposite does not hold.
Indeed, we show that Muller acceptance is more expressive than Büchi acceptance and that there are $\omega$-rational languages that are not Muller $\omega$-regular.

\subsection{ST-automata for $\omega$-HDA}
\label{sec:ST-automata}

An ST-automaton is a plain ($\omega$-)automaton over $\Xi$, with an additional labeling of states, that can produce ($\omega$-)iipomsets if the letters are glued. It especially can mimic the behavior of an ($\omega$-)HDA with a canonical translation.
We use here the most recent definition of these objects, introduced in \cite{PIPI},
which subsumes other variants that
have been used in \cite{DBLP:conf/concur/FahrenbergJSZ22,
	amrane.23.ictac,
	amrane2024languages,
        conf/birthday/AmraneBCFS25,
        AMRANE2025115156,
	DBLP:journals/lites/Fahrenberg22,
	DBLP:conf/adhs/Fahrenberg18}.

We let $\Xib=(\Xib, {\cdot}, \varepsilon)$ denote the free monoid on $\Xib=\Xi$ seen as an alphabet,
using concatenation instead of gluing
and $\varepsilon$ for the empty word (which is different from the letter $\eps \in \Xib$).

An \emph{ST-automaton} $A$ is a finite labeled automaton $(Q,E,I,F,\lambda)$ over the (infinite) alphabet $\Xib$, where $Q$ is the set of states, $E \subseteq Q \times (\Xib \setminus \Id) \times Q$ is the set of transitions, $I\subseteq Q$ are the initial states, $F$ is an acceptance condition and $\lambda : Q \maps \conc$ is a labeling of  states that is coherent with $E$, meaning that for all $(p,\ilo{S}{U}{T},q) \in E,\, \lambda(p) = S$ and $ \lambda(q) = T$.

A \emph{path} $\pi = (q_0,e_0,q_1,e_1,\dots, e_{n-1}, q_n)$ in an ST-automaton $A$
is an alternation of states and transitions such that ${e_i = (q_i,P_i,q_{i+1}) \in E}$.
Its label is $l(\pi) =\id_{\lambda(q_0)}\cdot P_0 \cdot \id_{\lambda(q_1)} \cdot$ $\ldots \cdot P_{n-1} \cdot \id_{\lambda(q_n)}$ seen as a word. 
We say that $\pi$ is accepting if $q_0 \in I$ and $q_n \in F$.
The language of an ST-automaton $A$, denoted $L(A)$, is the set of labels of its accepting paths. The same can be done for $\omega$-paths, which are accepted according to a Büchi or Muller condition.
\begin{remark}
  Path labels of ST-automata are elements of $\Id \cdot (\Xib\cdot\Id)^*$ (or $\Id \cdot (\Xib\cdot \Id)^\omega$ in the infinite case). In particular, the labeling of states and their consideration in the path labels forbid to have the empty word $\varepsilon$ as the label of a path.
  Note also that labels of states are not used twice:
  given $\pi_1 = (q_0,e_0,\dots, e_{i-1}, q_i)$ and $\pi_2 = (q_i,e_i,,\dots, e_{n-1}, q_n)$, we have $l(\pi_1) \cdot l(\pi_2) = \id_{\lambda(q_0)}\cdot P_0 \cdot \id_{\lambda(q_1)} \cdot\ldots\cdot P_{i-1} \cdot \id_{\lambda(q_i)} \cdot P_{i+1} \cdot\ldots\cdot P_{n-1} \cdot \id_{\lambda(q_n)}$.
  We are thus only able to concatenate $l(\pi_1)$ and $l(\pi_2)$ if the end of $\pi_1$ has the same label as the start of $\pi_2$.
  We will consider ($\omega$-)rational expressions of ST-automata as classical ($\omega$-)rational expressions, the only difference being that the concatenation operator behaves as stated above.
\end{remark}
In the following, we build an ST-automaton $ST(X)$ from an ($\omega$-)HDA $X$, following \cite{PIPI}.
The intuition is that each cell (of any dimension) becomes a state, and for each upstep $(p=\delta_A^0(q),{\arrO{A}},q)$ resp.\ downstep $(p,\arrI{B},\delta_B^1(p)=q)$ in $X$, a transition is introduced from the state corresponding to $p$ to the one corresponding to $q$, labeled with $\starter{\ev(q)}{A}$ resp.\ $\terminator{\ev(p)}{B}$), to mimic the behavior of the HDA. More formally, for an HDA $(X,\bot_X,F_X)$, we associate the ST-automaton $ST(X)=(Q,E,I,F,\lambda)$ with:

\begin{itemize}
    \item 
        $Q = X$, $I = \bot_X$, $F =F_X$, $\lambda = \ev$,
    \item
        $E = \set{ (\delta_A^0(q),\starter{\ev(q)}{A},q) \midbar A \subset \ev(q)} \cup \set{(q,\terminator{\ev(q)}{A},\delta_A^1(q)) \midbar A \subset \ev(q)}$.
\end{itemize}
%Note that lower faces in $X$ are inverted to get the transitions labeled by starters.
By construction, there is a one-to-one correspondence between tracks in $X$ and paths in $ST(X)$: with a track $\alpha = (q_0,\phi_0,q_1,...)$ of $X$ we associate the path $ST(\alpha)= (q_0,e_0,q_1,...)$ of $ST(X)$ such that
    \begin{itemize}
        \item
            $(q_i,\phi_i,q_{i \+ 1}) = (\delta_{A_i}^0(q_{i \+ 1}),\arrO{A_i},q_{i \+ 1}) \implies A_i \subset \ev(q_{i \+ 1})$, $e_i = (q_i,\starter{\ev(q_{i \+ 1})}{A_i},q_{i \+ 1}) \in E$
        \item
            $(q_i,\phi_i,q_{i \+ 1}) = (q_i,\arrI{A_i},\delta_{A_i}^1(q_i)) \implies A_i \subset \ev(q_i)$, $e_i = (q_i,\terminator{\ev(q_i)}{A_i},q_{i \+ 1}) \in E$
    \end{itemize}
This proves the following lemma:
\begin{lemma}
	\label{lem:equiv-path}
	A track $\alpha$ is accepting in $X$ if and only if $ST(\alpha)$ is accepting in $ST(X)$.
\end{lemma}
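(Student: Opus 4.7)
The plan is to exploit the fact that the translation $\alpha \mapsto ST(\alpha)$ is, on the level of the underlying sequence of cells/states, the identity. By construction $Q = X$, $I = \bot_X$, $F = F_X$, and $\lambda = \ev$, so the set of cells of $X$ and the set of states of $ST(X)$ coincide, together with the designated initial and accepting subsets. Given a track $\alpha = (q_0, \phi_0, q_1, \phi_1, \dots)$, the associated path $ST(\alpha) = (q_0, e_0, q_1, e_1, \dots)$ visits exactly the same sequence of objects $(q_i)_i$; only the ``connective'' data $\phi_i$ (upstep or downstep with some set $A_i$) is repackaged into a transition $e_i$ carrying the starter $\starter{\ev(q_{i\+1})}{A_i}$ or terminator $\terminator{\ev(q_i)}{A_i}$. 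The well-definedness of each $e_i \in E$ is immediate from the definition of $E$, and the converse direction (a path in $ST(X)$ yields a track in $X$) is equally direct.

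With this correspondence fixed, I would verify the equivalence of acceptance by checking each clause separately. The initial condition matches trivially: $\src(\alpha) = q_0 \in \bot_X$ iff $q_0 \in I$. For the terminal/recurrence condition, I split into the three cases treated in the paper:
\begin{itemize}
    \item Finite tracks: $\tgt(\alpha) = q_n$ and $F = \top_X$, so $\tgt(\alpha) \in \top_X$ iff $q_n \in F$.
    \item Büchi: the set $\Inf(\alpha) \subseteq X$ of cells visited infinitely often by $\alpha$ equals the set of states visited infinitely often by $ST(\alpha)$, because both are defined from the same sequence $(q_i)_i$; since $F = \top_X$, the condition $\Inf(\alpha) \cap \top_X \neq \emptyset$ transfers verbatim.
    \item Muller: same observation, with $F = F_X \subseteq 2^X = 2^Q$, so $\Inf(\alpha) \in F_X$ iff the state-Inf of $ST(\alpha)$ lies in $F$.
\end{itemize}

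There is no real obstacle here, as the lemma is essentially a bookkeeping statement: the construction of $ST(X)$ is engineered precisely so that cells become states, initials become initials, accepting cells/sets become accepting states/sets, and upsteps/downsteps become transitions of matching source and target. The only point worth making explicit is that the bijection between tracks and paths preserves not just the endpoints but the entire underlying cell sequence, which immediately yields preservation of $\Inf$ and hence of both Büchi and Muller acceptance.
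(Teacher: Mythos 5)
Your proposal is correct and matches the paper's argument: the paper likewise derives the lemma directly from the construction of $ST(X)$ (cells become states, $I=\bot_X$, $F=F_X$, and the track/path share the same underlying cell sequence), so acceptance transfers verbatim in the finite, Büchi, and Muller cases. Your version merely spells out the case analysis that the paper leaves implicit.
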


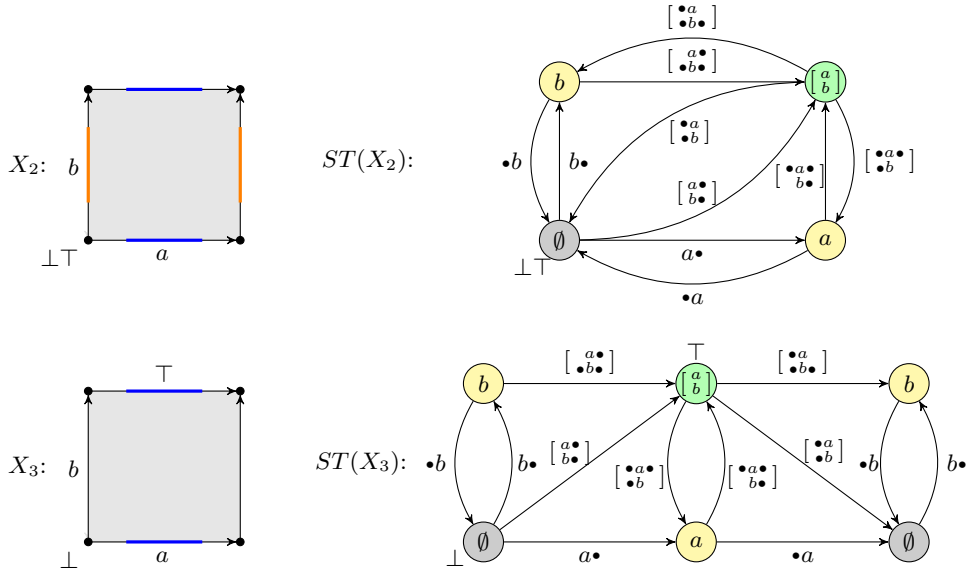
\begin{figure}[!htb]

	\centering
	\begin{tikzpicture}

        \begin{scope}[shift={(0,5.5)},y=1cm]

            \node at (-0.8,1) {$X_2$:};
            
            \filldraw[color=black!10!white] (0,0)--(2,0)--(2,2)--(0,2)--(0,0);
            
            \filldraw (0,0) circle (0.05);
            \filldraw (2,0) circle (0.05);
            \filldraw (0,2) circle (0.05);
            \filldraw (2,2) circle (0.05);
            
            \path (0,0) edge node[below]{$a$}(1.95,0);
            \path (0,0) edge node[left] {$b$} (0,1.95);
            \path (2,0) edge (2,1.95);
            \path (0,2) edge (1.95,2);
    
            \node[below left] at (0,0) {$\bot \top$};
            
            \path (0.5,0) edge[color=blue,-, very thick] (1.5,0);
            \path (0.5,2) edge[color=blue,-, very thick] (1.5,2);
            \path (0,0.5) edge[color=orange,-, very thick] (0,1.5);
            \path (2,0.5) edge[color=orange,-, very thick] (2,1.5);
        
        \end{scope}
        
		\begin{scope}[shift={(6.2,5.5)}, scale=.7]%[shift={(6.5,5)}]

            \node at (-3.6,1.5) {$ST(X_2)$:};
  
			\node[circle,draw=black,fill=black!20,inner sep=0pt,minimum size=15pt] (i) at (0,0) {$\emptyset$};
			\node[circle,draw=black,fill=yellow!40,inner sep=0pt,minimum size=15pt] (b) at (0,3) {$b$};
			\node[circle,draw=black,fill=yellow!40,inner sep=0pt,minimum size=15pt] (a) at (5,0) {$a$};
			\node[circle,draw=black,fill=green!30,inner sep=0pt,minimum size=15pt] (ab) at (5,3) {$\loset{a\\b}$};
   
			\path (i) edge node[right] {$b \ibu$} (b);
            \path (b) edge[bend right] node[left] {$\ibu b$} (i);
            
			\path (a) edge node[below left=-0.08cm] {$\loset{ \ibu a \ibu \\ \nibu b \ibu }$\!} (ab);
            \path (ab) edge[bend left] node[right] {$\loset{ \ibu a \ibu \\ \ibu b \nibu }$} (a);
   
            \path (i) edge node[below] {$a \ibu$} (a);
            \path (a) edge[bend left] node[below] {$\ibu a$} (i);
            
            \path (b) edge node[above] {$\loset{ \nibu a \ibu \\ \ibu b \ibu }$} (ab);
            \path (ab) edge[bend right] node[above] {$\loset{ \ibu a \nibu \\ \ibu b \ibu }$} (b);

            \path (i) edge[bend right] node[above left=-0.13cm] {$\loset{ a \ibu \\ b \ibu }$} (ab);
            \path (ab) edge[bend right] node[below right=-0.1cm] {$\loset{ \ibu a \\ \ibu b }$} (i);

			\node at (-0.5,-0.5) {$\bot \top$};
   
		\end{scope}

        \begin{scope}[shift={(0,1.5)}]

            \node at (-0.8,1) {$X_3$:};
        
            \filldraw[color=black!10!white] (0,0)--(2,0)--(2,2)--(0,2)--(0,0);
            
            \filldraw (0,0) circle (0.05);
            \filldraw (2,0) circle (0.05);
            \filldraw (0,2) circle (0.05);
            \filldraw (2,2) circle (0.05);
            
            \path (0,0) edge node[below]{$a$}(1.95,0);
            \path (0,0) edge node[left] {$b$} (0,1.95);
            \path (2,0) edge (2,1.95);
            \path (0,2) edge (1.95,2);
    
            \node[below left] at (0,0) {$\bot$};
            \node[above] at (1,2) {$\top$};
            
            \path (0.5,0) edge[color=blue,-, very thick] (1.5,0);
            \path (0.5,2) edge[color=blue,-, very thick] (1.5,2);
        
        \end{scope}
        
		\begin{scope}[shift={(5.2,1.5)}, scale=.7]

            \node at (-2.3,1.5) {$ST(X_3)$:};
  
			\node[circle,draw=black,fill=black!20,inner sep=0pt,minimum size=15pt] (i) at (0,0) {$\emptyset$};
			\node[circle,draw=black,fill=black!20,inner sep=0pt,minimum size=15pt] (e) at (8,0) {$\emptyset$};S
			\node[circle,draw=black,fill=yellow!40,inner sep=0pt,minimum size=15pt] (b1) at (0,3) {$b$};
			\node[circle,draw=black,fill=yellow!40,inner sep=0pt,minimum size=15pt] (a) at (4,0) {$a$};
			\node[circle,draw=black,fill=yellow!40,inner sep=0pt,minimum size=15pt] (b2) at (8,3) {$b$};
			\node[circle,draw=black,fill=green!30,inner sep=0pt,minimum size=15pt] (ab) at (4,3) {$\loset{a\\b}$};
   
			\path (i) edge[bend right] node[right=-0.05cm] {$b \ibu$} (b1);
            \path (b1) edge[bend right] node[left] {$\ibu b$} (i);
			\path (a) edge[bend right] node[below right=-0.1cm] {$\loset{ \ibu a \ibu \\ \nibu b \ibu }$} (ab);
            \path (ab) edge[bend right] node[below left=-0.1cm] {$\loset{ \ibu a \ibu \\ \ibu b \nibu }$} (a);
			\path (e) edge[bend right] node[right] {$b \ibu$} (b2);
            \path (b2) edge[bend right] node[left=-0.1cm] {$\ibu b$} (e);
            
            \path (i) edge node[below] {$a \ibu$} (a);
            \path (b1) edge node[above] {$\loset{ \nibu a \ibu \\ \ibu b \ibu }$} (ab);
            \path (a) edge node[below] {$\ibu a$} (e);
            \path (ab) edge node[above] {$\loset{ \ibu a \nibu \\ \ibu b \ibu }$} (b2);

            \path (i) edge node[above left=-0.15cm] {$\loset{ a \ibu \\ b \ibu }$} (ab);
            \path (ab) edge node[above right=-0.13cm] {$\loset{ \ibu a \\ \ibu b }$} (e);

			\node at (-0.3,-0.3) {$\bot\ \ \ $};
			\node at (4,3.6) {$\top$};
   
		\end{scope}
		
	\end{tikzpicture}
 
	\caption{ST-automata of $X_2$ and $X_3$ of Ex.~\ref{ex:buchi}. Cells of dimension 0 are in gray, dimension 1 in yellow, and dimension 2 in green.}
	\label{fig:ST-example}
\end{figure}

Since the labeling $\lambda$ is coherent with $E$, a word produced by an ST-automaton $A$ is \emph{coherent} in the sense that if $P_1\cdot P_2\cdot \ldots$ is the label of a path in $A$, then $T_{P_i} \cong S_{P_{i\+1}}$ for all $i \geq 0$ (and thus it can be seen as the gluing of starters and terminators). To be more precise, the set of coherent finite words over $\Xib$ is denoted $\coh(\Xib) = \set{U_1\cdot \ldots \cdot U_n \midbar U_i \in \Xib,\ T_{U_i} \cong S_{U_{i\+1}}}$.

For $\omega$-words we denote by $ \wcoh(\Xib)$ the set of coherent infinite words over $\Xib$, defined by 
$${\wcoh(\Xib) = \set{U_1\cdot U_2 \cdot \ldots \midbar U_i \in \Xib,\ \#\set{i : U_i \not\in \Id} = + \infty \text{ and } T_{U_i} = S_{U_{i\+1}}}}$$
and as usual we write $\infcoh(\Xib) = \coh(\Xib) \cup \wcoh(\Xib)$.

\begin{lemma}
\label{lem:ST-coh}
    Let $A$ be an ST-automaton.
    \begin{itemize}
        \item
            If $A$ accepts finite words then $L(A) \subset \coh(\Xib)$.
        \item
            If $A$ accepts infinite words then $L(A) \subset \wcoh(\Xib)$.
    \end{itemize}
\end{lemma}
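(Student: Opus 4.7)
The plan is to unpack the coherence requirement that is built into the definition of an ST-automaton, namely that for every transition $(p, \ilo{S}{U}{T}, q) \in E$ we have $\lambda(p) = S$ and $\lambda(q) = T$. Once this is spelled out, both parts of the lemma reduce to checking that consecutive letters in a path label have matching target/source, which is essentially a direct computation.

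First, I take an accepting path $\pi = (q_0, e_0, q_1, e_1, \dots)$ (finite or infinite) with transitions $e_i = (q_i, P_i, q_{i\+1}) \in E$, and write its label as
\[
l(\pi) \;=\; \id_{\lambda(q_0)} \cdot P_0 \cdot \id_{\lambda(q_1)} \cdot P_1 \cdot \id_{\lambda(q_2)} \cdots
\]
Viewed as a sequence of letters over $\Xib$, this alternates between identities $\id_{\lambda(q_i)}$ and transition labels $P_i = \ilo{S_i}{P_i}{T_i}$, so each pair of consecutive letters has one of the two shapes $(\id_{\lambda(q_i)}, P_i)$ or $(P_i, \id_{\lambda(q_{i\+1})})$. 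In the first case, the target of $\id_{\lambda(q_i)}$ is $\lambda(q_i)$, and by coherence applied to $e_i$ we have $S_i = \lambda(q_i)$; in the second, the target of $P_i$ is $T_i = \lambda(q_{i\+1})$ by the same coherence, matching the source of $\id_{\lambda(q_{i\+1})}$. Thus every consecutive pair in the sequence has matching (in fact equal) target and source, giving $l(\pi) \in \coh(\Xib)$ in the finite case.

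For the infinite case, the coherence argument above works unchanged for every pair of consecutive letters, so it only remains to verify the extra condition $\#\set{i : U_i \not\in \Id} = +\infty$. This is immediate from the definition of ST-automaton, where $E \subseteq Q \times (\Xib \setminus \Id) \times Q$: each transition label $P_i$ lies in $\Xib \setminus \Id$, so an infinite path contributes infinitely many non-identity letters to $l(\pi)$, hence $l(\pi) \in \wcoh(\Xib)$.

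There is no real obstacle here; the only care needed is in the bookkeeping of the alternation of state-identity labels and transition labels, making sure that the interleaved identities $\id_{\lambda(q_i)}$ do not disturb the coherence chain (they do not, since each sits between two letters whose boundary conclist is exactly $\lambda(q_i)$) and do not interfere with the infiniteness condition in $\wcoh(\Xib)$ (they do not, since the non-identity letters $P_i$ are already infinite in number).
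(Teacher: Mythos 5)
Your proof is correct and follows essentially the same route as the paper's: both arguments unpack the coherence condition $\lambda(p)=S$, $\lambda(q)=T$ on transitions to match consecutive targets and sources in the alternating sequence of state identities and transition labels, and both derive the infinitude of non-identity letters from $E\subseteq Q\times(\Xib\setminus\Id)\times Q$. Your write-up is somewhat more explicit on the bookkeeping, while the paper additionally points out that even a single-state path has the non-empty label $\id_{\lambda(q)}$, so $\varepsilon\notin L(A)$ — a degenerate case your argument covers implicitly since every path label contains at least one state identity.
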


Thus, one can ``glue'' words accepted by an ST-automaton to obtain an ($\omega$-)iipomset.
Let $\Psi \colon \infcoh(\Xib) \to \infipoms$ be the \emph{gluing function} defined by $\Psi(P_1 \cdot P_2 \cdot \dots) = P_1 * P_2 * \dots$.
For a set $L \subset \infcoh(\Xib)$, we also define $\Psi(L) = \set{\Psi(P) \midbar P \in L}$.
In \cite{PIPI}, the authors show the following for ST-automata derived from HDAs over finite iipomsets:
\begin{proposition}[\cite{PIPI}]
\label{prop:eq-ST-HDA}
    For any HDA $X$, $L(X) = \Psi(L(ST(X)))$.
\end{proposition}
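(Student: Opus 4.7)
The proof plan is to leverage the bijection between tracks in $X$ and paths in $ST(X)$ provided by the construction $\alpha \mapsto ST(\alpha)$, and show that gluing the label of $ST(\alpha)$ recovers exactly the event ipomset $\ev(\alpha)$. Since Lem.~\ref{lem:equiv-path} already transports the acceptance condition across this bijection, it suffices to establish the identity $\Psi(l(ST(\alpha))) = \ev(\alpha)$ for every (finite or $\omega$-)track $\alpha$.

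First I would make the bijection concrete. A track $\alpha = (q_0, \phi_1, q_1, \ldots)$ gives rise to the path $ST(\alpha) = (q_0, e_0, q_1, \ldots)$ where each edge $e_i$ carries the label $P_i$, with $P_i = \starter{\ev(q_{i\+1})}{A_i}$ for an upstep and $P_i = \terminator{\ev(q_i)}{A_i}$ for a downstep. By definition of the ST-automaton construction, $S_{P_i} = \ev(q_i) = \lambda(q_i)$ and $T_{P_i} = \ev(q_{i\+1}) = \lambda(q_{i\+1})$, so the word $l(ST(\alpha)) = \id_{\lambda(q_0)} \cdot P_0 \cdot \id_{\lambda(q_1)} \cdot P_1 \cdot \ldots$ is coherent (as is any path label, by Lem.~\ref{lem:ST-coh}). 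Conversely, every accepting path in $ST(X)$ begins at an initial state $q_0 \in I = \bot_X$ and uses only edges of the two forms above, hence arises from a unique track $\alpha$ with $ST(\alpha) = \pi$.

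Next I would compute the gluing. Because $S_{P_i} \cong T_{\id_{\lambda(q_i)}}$ and $T_{P_i} \cong S_{\id_{\lambda(q_{i\+1})}}$, the inserted identities are absorbed by gluing: $\id_{\lambda(q_i)} * P_i = P_i$ and $P_i * \id_{\lambda(q_{i\+1})} = P_i$. A straightforward induction on path length (respectively, an appeal to the infinite gluing $P_0 * P_1 * \cdots$ with identity factors removed, as noted right after its definition in Section~\ref{sec:operation}) then gives
\begin{equation*}
\Psi(l(ST(\alpha))) \;=\; P_0 * P_1 * \cdots \;=\; \ev(\alpha),
\end{equation*}
matching the inductive definition of $\ev(\alpha)$ on step-tracks. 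The $\omega$-case uses Lem.~\ref{lem:infinite-product-well-defined}; validity of the gluing is guaranteed because the widths of the $P_i$ are bounded by $\max_{q \in X} |\ev(q)|$, which is finite as $X$ has finitely many cells.

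Combining the two directions, an accepting track $\alpha$ in $X$ yields an accepting path $ST(\alpha)$ in $ST(X)$ with $\Psi(l(ST(\alpha))) = \ev(\alpha)$, giving $L(X) \subset \Psi(L(ST(X)))$; and every accepting path $\pi$ in $ST(X)$ is $ST(\alpha)$ for a unique accepting track $\alpha$, giving the reverse inclusion. The main obstacle is purely bookkeeping: checking that the coherence of edge labels (source/target interface matching $\lambda$ on states) is exactly what makes identities absorbable under $*$, so that the word-level concatenation in $\Xib$ and the ipomset-level gluing agree on path labels; once this is pinned down, both the finite and the Büchi/Muller $\omega$-versions follow uniformly since the acceptance condition is preserved cell-wise by $\alpha \mapsto ST(\alpha)$.
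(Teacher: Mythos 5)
Your proposal is correct and follows essentially the same route as the paper: the track-to-path correspondence $\alpha \mapsto ST(\alpha)$ preserving acceptance (the paper's Lem.~\ref{lem:equiv-path}) combined with the identity $\ev(\alpha) = \Psi(l(ST(\alpha)))$ obtained by absorbing the state-label identities under gluing via coherence of $\lambda$ (the paper's Lem.~\ref{lem:equiv-evalutation}). The paper imports the finite statement from the cited reference and proves the $\omega$-version by exactly this two-lemma decomposition, so your argument matches it in both structure and substance.
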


We show  that the proposition holds also for $\omega$-HDAs. 
First, since the labeling of states is coherent with the transitions, we have the following by applying definitions.

\begin{lemma}
\label{lem:equiv-evalutation}
    For all $\omega$-tracks $\alpha$ of $X$, we have $\ev(\alpha) = \Psi(l(ST(\alpha))$.
\end{lemma}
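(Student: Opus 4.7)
My plan is to unfold both sides of the equation step by step and observe that they agree modulo absorption of identities in the gluing composition.

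First I would decompose the $\omega$-track as $\alpha = (q_0,\phi_1,q_1,\phi_2,q_2,\dots)$, where each individual segment $s_i = (q_{i-1},\phi_i,q_i)$ is either an upstep or a downstep. Extending the inductive definition of $\ev$ to the infinite case by infinite gluing, we get $\ev(\alpha) = \ev((q_0)) * \ev(s_1) * \ev(s_2) * \cdots = \id_{\ev(q_0)} * \ev(s_1) * \ev(s_2) * \cdots$, where $\ev(s_i)$ is a starter $\starter{\ev(q_i)}{A_i}$ if $s_i$ is an upstep $\arrO{A_i}$, and a terminator $\terminator{\ev(q_{i-1})}{A_i}$ if $s_i$ is a downstep $\arrI{A_i}$. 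The assumption that $\alpha$ is a valid $\omega$-track (so that $\ev(\alpha)$ is a valid $\omega$-iipomset) ensures the infinite gluing is well-defined.

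Next I would unfold the right-hand side. By construction of $ST(X)$, each step $s_i$ of $\alpha$ corresponds to a transition $e_{i-1}$ of $ST(\alpha)$, and by the explicit form of $E$ in the definition of $ST(X)$, the label $P_{i-1}$ of $e_{i-1}$ is exactly $\ev(s_i)$: for an upstep $(\delta_{A_i}^0(q_i),\arrO{A_i},q_i)$, the transition is $(q_{i-1},\starter{\ev(q_i)}{A_i},q_i)$, and symmetrically for downsteps. Applying the definition of path labels for ST-automata (with $\lambda = \ev$) yields $l(ST(\alpha)) = \id_{\ev(q_0)} \cdot P_0 \cdot \id_{\ev(q_1)} \cdot P_1 \cdots$, and applying the gluing function $\Psi$ gives $\Psi(l(ST(\alpha))) = \id_{\ev(q_0)} * \ev(s_1) * \id_{\ev(q_1)} * \ev(s_2) * \cdots$.

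Finally I would invoke the identity-absorption property noted just after the definition of infinite gluing (namely, $P_0 * \cdots * P_{i-1} * \id * P_{i+1} * \cdots = P_0 * \cdots * P_{i-1} * P_{i+1} * \cdots$), which, applied to each of the inserted $\id_{\ev(q_i)}$'s, shows that both expressions equal $\ev(s_1) * \ev(s_2) * \cdots$; hence $\ev(\alpha) = \Psi(l(ST(\alpha)))$. The only subtle point, and really the only thing worth checking carefully, is that the infinite absorption of identities is legitimate: this follows because the sparseness-independent infinite gluing is defined cell-wise via the disjoint union modulo the interface identifications, and removing an identity factor $\id_{\ev(q_i)}$ corresponds exactly to collapsing the two consecutive interface identifications at position $i$ into a single one, leaving the resulting $\omega$-ipomset unchanged up to the unique isomorphism given by Proposition~\ref{prop:unicity-iso}. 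This is the only place where validity and the validity-uniqueness of isomorphisms come into play.
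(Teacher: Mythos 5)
Your proposal is correct and follows essentially the same route as the paper: unfold both sides, note that by construction the transition labels $P_i$ of $ST(\alpha)$ coincide with the event ipomsets of the corresponding up/downsteps, and absorb the interleaved identities $\id_{\ev(q_i)}$ using the remark following the definition of infinite gluing. Your extra paragraph justifying the legitimacy of infinite identity absorption is a reasonable elaboration of a step the paper takes for granted, but does not change the argument.
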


We can now show the $\omega$-equivalent of Prop.~\ref{prop:eq-ST-HDA}, by combining Lem.~\ref{lem:equiv-path} and Lem.~\ref{lem:equiv-evalutation}:

\begin{proposition}
\label{prop:eq-ST-wHDA}
    For any (Büchi or Muller) $\omega$-HDA $X$, $L(X) = \Psi(L(ST(X)))$.
\end{proposition}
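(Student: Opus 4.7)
The plan is to mimic the proof of Prop.~\ref{prop:eq-ST-HDA} by reducing to the two auxiliary lemmas already at hand, but first making explicit that the correspondence $\alpha \mapsto ST(\alpha)$ extends verbatim to infinite tracks and preserves the sets $\Inf(\cdot)$, so that both the Büchi and Muller acceptance conditions transfer between $X$ and $ST(X)$.

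I would argue both inclusions separately. For $L(X) \subseteq \Psi(L(ST(X)))$, take $P \in L(X)$, so $P = \ev(\alpha)$ for some accepting $\omega$-track $\alpha = (q_0, \phi_0, q_1, \phi_1, \dots)$ in $X$. The construction of $ST(\alpha)$ given just before Lem.~\ref{lem:equiv-path} applies unchanged to $\omega$-tracks, producing an $\omega$-path $ST(\alpha) = (q_0, e_0, q_1, e_1, \dots)$ in $ST(X)$ with the same sequence of cells/states. Lem.~\ref{lem:equiv-path}, extended to the infinite case, then gives that $ST(\alpha)$ is accepting in $ST(X)$, so $l(ST(\alpha)) \in L(ST(X))$. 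Finally, Lem.~\ref{lem:equiv-evalutation} yields $\Psi(l(ST(\alpha))) = \ev(\alpha) = P$, establishing $P \in \Psi(L(ST(X)))$.

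For the reverse inclusion, take $P \in \Psi(L(ST(X)))$, so $P = \Psi(w)$ for some $w \in L(ST(X))$, which is the label of an accepting $\omega$-path $\pi = (q_0, e_0, q_1, e_1, \dots)$ in $ST(X)$. By construction of $ST(X)$, each transition $e_i$ is either of the form $(q_i, \starter{\ev(q_{i+1})}{A_i}, q_{i+1})$ with $q_i = \delta^0_{A_i}(q_{i+1})$, or of the form $(q_i, \terminator{\ev(q_i)}{A_i}, q_{i+1})$ with $q_{i+1} = \delta^1_{A_i}(q_i)$. Reading each $e_i$ backwards recovers a unique upstep or downstep in $X$, yielding an $\omega$-track $\alpha$ with $ST(\alpha) = \pi$. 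Lem.~\ref{lem:equiv-path} then shows $\alpha$ is accepting in $X$, and Lem.~\ref{lem:equiv-evalutation} gives $\ev(\alpha) = \Psi(l(\pi)) = \Psi(w) = P$, so $P \in L(X)$.

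The only nontrivial point, and the main obstacle, is the extension of Lem.~\ref{lem:equiv-path} to $\omega$-tracks under both Büchi and Muller conditions; this is handled by the observation that $ST(\cdot)$ is a sequence-preserving bijection on cells, so that $\Inf(\alpha) = \Inf(ST(\alpha))$ as subsets of $X = Q$, while $\src(\alpha) = q_0$ coincides with the first state of $ST(\alpha)$ and $\bot_X = I$, $\top_X = F$ (resp.\ $F_X = F$) by definition of $ST(X)$. Consequently the Büchi condition $\Inf(\alpha) \cap \top_X \neq \emptyset$ is literally the Büchi condition on $ST(\alpha)$, and likewise the Muller condition $\Inf(\alpha) \in F_X$ matches the Muller condition on $ST(\alpha)$, so Lem.~\ref{lem:equiv-path} carries over without modification.
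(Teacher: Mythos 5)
Your proof is correct and follows essentially the same route as the paper: the paper likewise obtains the result by combining the track--path correspondence $\alpha \mapsto ST(\alpha)$ with Lem.~\ref{lem:equiv-path} (acceptance equivalence) and Lem.~\ref{lem:equiv-evalutation} (label equivalence). Your explicit check that $\Inf(\alpha)=\Inf(ST(\alpha))$, so that both Büchi and Muller conditions transfer, is exactly the point the paper leaves implicit.
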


\subsection{$\omega$-HDAs are $\omega$-rational}
\label{sec:eq-ST-wHDA}

The formalism of ST-automata provides a strong tool for HDAs. 
It allows  to use classical theorems of $\omega$-automata. 
This helps in particular to show that $\omega$-regular languages are $\omega$-rational 	as we will see in this section.

Indeed, the language of any ST-automaton $A$ is also the one of a rational expression $e_A$, using the operations $+,.\,,^*$ and letters $U_i \in \Xib$. 
Similarly, given an ST-automaton $A'$ over infinite words, one can effectively build an equivalent $\omega$-rational expression $e_{A'}$ \cite{mcnaughton1966testing}. 

A ($\omega$-)rational expression is said \emph{positive} if it has no occurrences of $^*$ and $\varepsilon$.
Since ST-automata do not accept $\varepsilon$, ST-automata languages can be expressed by positive expressions:

\begin{lemma}
\label{lem:positive-expression}
    Any rational expression $e$ of an ST-automaton can be transformed into a positive rational expression $\overline{e}$ such that $\mc{L}(e)=\mc{L}(\overline{e})$.
\end{lemma}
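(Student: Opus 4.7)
The plan is to proceed by structural induction on $e$. For every rational subexpression, I will build in parallel a positive rational expression $\overline{e}$ together with a Boolean flag $b_e \in \{0,1\}$, maintaining throughout the invariant
\[
\mc{L}(e) = \mc{L}(\overline{e}) \cup E(b_e), \qquad \text{where } E(0) = \emptyset \text{ and } E(1) = \{\varepsilon\}.
\]
The base cases set $\overline{\emptyset} = \emptyset$ with $b = 0$; $\overline{\varepsilon} = \emptyset$ with $b = 1$; and $\overline{U} = U$ for each letter $U \in \Xib$ with $b = 0$, since letters, being identities/starters/terminators, are never equal to the empty word. Sums are immediate: $\overline{f+g} = \overline{f}+\overline{g}$ and $b_{f+g} = b_f \vee b_g$.

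For concatenation $e = f \cdot g$, I distribute the product $(\mc{L}(\overline{f}) \cup E(b_f)) \cdot (\mc{L}(\overline{g}) \cup E(b_g))$ into four pieces and collect: $\overline{f \cdot g}$ consists of the summand $\overline{f} \cdot \overline{g}$, augmented by $\overline{f}$ if $b_g = 1$ and by $\overline{g}$ if $b_f = 1$; accordingly $b_{f \cdot g} = b_f \wedge b_g$. The crucial case is Kleene star: using $\mc{L}(f)^\ast = \mc{L}(f)^+ \cup \{\varepsilon\}$ together with the identity $L^+ = (L\setminus\{\varepsilon\})^+$, which follows directly from $L^+ = \bigcup_{n \geq 1} L^n$ (a single $\varepsilon$ factor in any finite product can be removed without changing the word), I take $\overline{f^\ast} = \overline{f}^+$ and $b_{f^\ast} = 1$.

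For $\omega$-iteration, the case $e = f^\omega$ is analogous: elements of $\mc{L}(f)^\omega$ are infinite and thus never equal $\varepsilon$; moreover, inserting or removing $\varepsilon$-factors from an infinite concatenation does not change the resulting infinite word, so $\mc{L}(f)^\omega = \mc{L}(\overline{f})^\omega$ whenever the latter is non-empty. I therefore set $\overline{f^\omega} = \overline{f}^\omega$ and $b_{f^\omega} = 0$.

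To conclude, I apply the construction to the given expression $e$ of the ST-automaton $A$. By Lem.~\ref{lem:ST-coh}, $\varepsilon \notin L(A) = \mc{L}(e)$, which forces the top-level flag $b_e$ to equal $0$ and hence $\mc{L}(e) = \mc{L}(\overline{e})$, as required. The only mildly delicate point is the star case, where one must justify that dropping $\varepsilon$ from the base language before taking Kleene plus does not change the resulting language; everything else is routine structural bookkeeping.
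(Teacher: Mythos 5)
Your proof is correct and rests on the same core idea as the paper's: eliminate $^*$ via $L^* = \varepsilon + L^+$, push the $\varepsilon$'s outward, and use Lem.~\ref{lem:ST-coh} (i.e.\ $\varepsilon \notin L(A)$) to discard the residual $\varepsilon$ at the top level. The difference is organizational. The paper ``develops'' the expression informally and, for the $\omega$-case, first invokes the normal form $\bigcup X \cdot Y^\omega$ of $\omega$-rational languages before treating $\varepsilon \in X$ and $\varepsilon \in Y$ by hand; your structural induction with the flag $b_e$ and the invariant $\mc{L}(e) = \mc{L}(\overline{e}) \cup E(b_e)$ makes the ``development'' precise, handles arbitrarily nested expressions uniformly, and avoids the normal-form theorem altogether. (The paper does additionally record that the transformed $\omega$-expression remains of the form $\bigcup X \cdot Y^\omega$, which it reuses in the proof of Th.~\ref{prop:buchi-wform}; your construction preserves that shape too, but you would need to state it if you wanted to reuse it.) One small imprecision: the identity $L^+ = (L \setminus \{\varepsilon\})^+$ you invoke in the star case fails when $\varepsilon \in L$, since the left-hand side then contains $\varepsilon$ while the right-hand side does not; the identity you actually need, and which your construction implements, is $L^* = (L\setminus\{\varepsilon\})^+ \cup \{\varepsilon\}$. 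This is harmless here because you set $b_{f^*}=1$, so the missing $\varepsilon$ is restored by the flag. Your $\omega$-case correctly relies on the convention that $L^\omega$ contains only genuinely infinite words (infinitely many non-empty factors), which is exactly the convention the paper uses when it writes $X \cdot Y^\omega = X \cdot Y'^{\,\omega}$ for $Y = \varepsilon + Y'$.
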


We define inductively a function $\Psi'$ transforming a positive rational expression over  $\Xib$  into a set of ipomsets in the usual way by replacing $\cdot$ by $*$.
%\begin{definition}[Function $\Psi'$]
%\label{def:psi'}
%    \begin{itemize}
%        \item
%            If $e = \emptyset$ then $\Psi'(\emptyset) = \emptyset$
%        \item 
%            If $e = U$ (for $U \in \Xib$) then $\Psi'(U) = \set{U}$
%        \item
%            If $e = e_1 + e_2$ then $\Psi'(e_1 +e_2) = \Psi'(e_1) + \Psi'(e_2)$
%        \item
%            If $e = e_1.e_2$ then $\Psi'(e_1.e_2) = \Psi'(e_1) * \Psi'(e_2)$
%        \item
%            If $e = e_1^+$ then $\Psi'(e_1^+) = \Psi'(e_1)^+$ 
%        \item
%            If $e = e_1^\omega$ then $\Psi'(e_1^\omega) = \Psi'(e_1)^\omega$
%    \end{itemize}   
We directly have:

\begin{lemma}
\label{lem:psi'-in-wrat}
    Let $e(U_1,...,U_n)$ be a positive rational expression over $\Xib$.
    \begin{itemize}
        \item
            If $e$ is rational, then $\Psi'(e) \in  \rat$.
        \item
            If $e$ is $\omega$-rational, then $\Psi'(e) \in  \wrat$.
    \end{itemize}
\end{lemma}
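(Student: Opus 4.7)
The plan is to proceed by structural induction on the positive rational expression $e$, matching each syntactic constructor with its corresponding $\omega$-rational operation on languages of iipomsets. Since $\Psi'$ is defined by literally replacing word concatenation ``$\cdot$'' on $\Xib$ by gluing ``$*$'' on ipomsets, the shape of the proof essentially writes itself; what must be checked at each step is that $\Psi'$ commutes with the constructor.

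\textbf{Base case.} For a letter $e = U$ with $U \in \Xib = \Xi$, $\Psi'(e) = \{U\} \subseteq \iipoms$. By the definition of $\infrat$, $\{U\} \in \infrat$, hence $\{U\} \in \rat$. The $\omega$-rational case is vacuous here since letters are finite.

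\textbf{Inductive cases.} For $e = e_1 + e_2$, I would use $\Psi'(e_1 + e_2) = \Psi'(e_1) \cup \Psi'(e_2)$, applying the induction hypothesis and closure of $\infrat$ under $+$; if one subexpression is rational and the other $\omega$-rational, the union lies in $\infrat$ and contains $\omega$-iipomsets, placing it in $\wrat$. For $e = e_1 \cdot e_2$, a positive $\omega$-rational expression forces $e_1$ rational and $e_2$ either rational or $\omega$-rational; then $\Psi'(e_1 \cdot e_2) = \Psi'(e_1) * \Psi'(e_2)$ follows from the associativity of $*$ and from the fact that the gluing $P * Q$ on ipomsets is defined exactly when $T_P \cong S_Q$, matching the coherence condition built into the definition of the operation $*$ on languages. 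For $e = e_1^+$, the expression $e_1$ must be rational (so that $e_1^+ \in \coh(\Xib)^*$), and $\Psi'(e_1^+) = \bigcup_{n\ge 1} \Psi'(e_1)^n = \Psi'(e_1)^+$; closure of $\infrat$ under $^+$ gives the result. For $e = e_1^\omega$, again $e_1$ is rational, and $\Psi'(e_1^\omega) = \Psi'(e_1)^\omega$.

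\textbf{Main obstacle.} The one point that deserves more care than the rest is the $\omega$-iteration case, because the language-level $L^\omega$ imposes two side conditions: coherence of consecutive target/source interfaces and the requirement that infinitely many $P_i \notin \Id$. The first is handled by Lem.~\ref{lem:ST-coh}, which ensures that all infinite words produced by the ST-layer are in $\wcoh(\Xib)$, so that the componentwise gluing is well-defined. The second is a consequence of $e$ being \emph{positive} (no $\varepsilon$ and no~$^*$, as guaranteed by Lem.~\ref{lem:positive-expression}): every word in $\mc{L}(e_1)$ contains at least one proper letter, so every $\omega$-word in $\mc{L}(e_1^\omega)$ contains infinitely many proper starters or terminators. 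Combined with Prop.~\ref{prop:wrat-finite-width} and Lem.~\ref{lem:infinite-product-well-defined}, this yields validity of the resulting infinite gluing, placing $\Psi'(e_1^\omega)$ in $\wrat$.
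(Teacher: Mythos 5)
Your structural induction is exactly the argument the paper has in mind: it states the lemma as holding ``directly,'' because $\Psi'$ sends each constructor of a positive expression over $\Xib$ to the corresponding $\omega$-rational operation, and $\infrat$ is by definition the closure of $\emptyset$ and the singletons $\set{U}$, $U\in\Xi$, under those operations. So the approach is the right one and matches the paper's.

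One step is stated incorrectly, however. In the union case you write that if one summand is rational and the other $\omega$-rational, the union ``contains $\omega$-iipomsets, placing it in $\wrat$.'' That is not what membership in $\wrat$ means: $\wrat=\infrat\cap 2^{\wiipoms}$, so a language is in $\wrat$ only if \emph{every} element is an $\omega$-iipomset. A mixed union contains finite iipomsets and therefore lies in $\infrat$ but in neither $\rat$ nor $\wrat$. Your argument survives only because this case cannot arise: an $\omega$-rational expression denotes a set of $\omega$-words, and in particular the normal form $\bigcup_i X_i\cdot Y_i^\omega$ delivered by Lem.~\ref{lem:positive-expression} has every top-level summand $\omega$-rational. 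You should rule the mixed case out on these grounds rather than assert it lands in $\wrat$. Two smaller remarks: membership in $\rat$ resp.\ $\wrat$ also requires the language to consist of \emph{interval} ipomsets, which follows from Prop.~\ref{prop:infrat-in-infiipoms} once $\Psi'(e)\in\infrat$ is established, so it is worth citing; and in the $^\omega$ case the ``infinitely many non-identities'' requirement is built into the definition of the operation $L^\omega$ itself (if $\Psi'(e_1)\subseteq\Id$ one simply gets $\emptyset\in\wrat$), so positivity is not what membership in $\wrat$ hinges on here --- it is needed for the language-preservation statement of Lem.~\ref{lem:psi'-psi}, not for this lemma.
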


In addition, we can show by induction that $\Psi'$ preserves languages. 
%(the proof is straightforward and so only present in the arXiv version)}:
%(see \cite{DBLP:journals/corr/abs-2503-07881})
\begin{lemma}
\label{lem:psi'-psi}
    If $e$ is an ($\omega$-)rational expression over $\Xib$, then $\Psi'(e) = \Psi(\mc{L}(e))$
\end{lemma}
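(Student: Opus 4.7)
The plan is to proceed by structural induction on the positive rational expression $e$ over $\Xib$. The key tool throughout is the associativity of the gluing composition $*$: whenever $w = U_1 \cdot U_2 \cdots U_n \in \coh(\Xib)$ is split as $w = w_1 \cdot w_2$ into two coherent subwords, we have $\Psi(w) = \Psi(w_1) * \Psi(w_2)$, and an analogous statement holds for infinite gluings by the definition in Section~\ref{sec:operation}.

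For the base cases, $e = \emptyset$ gives $\Psi'(e) = \emptyset = \Psi(\emptyset)$, and $e = U$ with $U \in \Xib$ gives $\mc{L}(e) = \set{U}$ (a one-letter coherent word), so $\Psi(\mc{L}(e)) = \set{U} = \Psi'(e)$. For $e = e_1 + e_2$, both $\mc{L}$ and $\Psi'$ distribute over sums and $\Psi$ commutes with unions, so the inductive hypothesis yields the result immediately. For $e = e_1 \cdot e_2$, unrolling the definitions gives
\[ \Psi(\mc{L}(e_1 \cdot e_2)) = \set{ \Psi(w_1) * \Psi(w_2) \midbar w_i \in \mc{L}(e_i),\ w_1 \cdot w_2 \in \coh(\Xib) }, \]
and the coherence of $w_1 \cdot w_2$ is exactly $T_{\Psi(w_1)} \cong S_{\Psi(w_2)}$. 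By the inductive hypothesis this equals $\Psi'(e_1) * \Psi'(e_2) = \Psi'(e)$. The Kleene-plus case $e = e_1^+$ then follows by iterating the product case over $n \geq 1$.

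The main obstacle is the $\omega$-case $e = e_1^\omega$, where $\mc{L}(e_1^\omega)$ consists of all $\omega$-concatenations of words in $\mc{L}(e_1)$ and one must match membership in $\wcoh(\Xib)$ with the infinite-gluing condition defining $\Psi'(e_1^\omega)$. The key technical claim needed here is that for any coherent finite word $w = U_1 \cdots U_n \in \coh(\Xib)$, the ipomset $\Psi(w)$ lies in $\Id$ if and only if every letter $U_k$ of $w$ lies in $\Id$. One direction is easy: a coherent sequence of identities $\id_U \cdot \id_U \cdots \id_U$ glues to $\id_U$. For the converse, if some $U_k$ is a proper starter (resp.\ terminator), then the events in $T_{U_k} \setminus S_{U_k}$ (resp.\ $S_{U_k} \setminus T_{U_k}$) appear in $\Psi(w)$ but are not in $S_{\Psi(w)} = S_{U_1}$ (resp.\ not in $T_{\Psi(w)} = T_{U_n}$), forcing $\Psi(w) \notin \Id$. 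This claim ensures that for any $w_1 w_2 \cdots$ with $w_i \in \mc{L}(e_1)$, the condition ``infinitely many letters of $w_1 w_2 \cdots$ are not in $\Id$'' defining $\wcoh(\Xib)$ is equivalent to ``infinitely many $P_i := \Psi(w_i) \notin \Id$'', matching the condition in the definition of $\Psi'(e_1^\omega)$. Combined with the associativity of the infinite gluing and the inductive hypothesis applied to each $w_i$, this completes the argument.
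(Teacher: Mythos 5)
Your proof is correct and follows the same route the paper takes: the paper merely asserts that the lemma ``can be shown by induction'' (deferring details to the long version), and your structural induction on positive expressions, using associativity of gluing to split $\Psi(w_1\cdot w_2)=\Psi(w_1)*\Psi(w_2)$ and matching the concatenability condition with $T_{\Psi(w_1)}\cong S_{\Psi(w_2)}$, is exactly that argument. The one genuinely non-routine step --- that a coherent finite word glues to an element of $\Id$ iff all of its letters are in $\Id$, which reconciles the $\wcoh(\Xib)$ membership condition with the clause $|\set{i : P_i \not\in \Id}| = +\infty$ in the definition of $L^\omega$ --- is identified and justified correctly.
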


Thus, languages of ST-automata are rational:
\begin{lemma}
\label{lem:ST-rat}
    Let $A$ be an ST-automaton.
    \begin{itemize}
        \item
            If $A$ accepts finite words, then $\Psi(L(A)) \in \rat$.
        \item
            If $A$ accepts infinite words, then $\Psi(L(A)) \in \wrat$.
    \end{itemize}
\end{lemma}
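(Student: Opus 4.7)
The plan is to chain together the three preceding lemmas to obtain the result almost mechanically. First I would invoke Kleene's theorem in the finite case, and McNaughton's theorem~\cite{mcnaughton1966testing} in the infinite case, applied to the ST-automaton $A$ viewed as a classical (possibly $\omega$-)automaton over the infinite alphabet $\Xib$. This yields a rational expression $e_A$ over $\Xib$ (respectively, an $\omega$-rational expression of the form $\sum_i F_i \cdot G_i^\omega$ in the infinite case) such that $\mathcal{L}(e_A) = L(A)$.

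Next I would apply Lemma~\ref{lem:positive-expression} to replace $e_A$ by an equivalent positive expression $\overline{e}_A$, so that $\mathcal{L}(\overline{e}_A)=L(A)$ and $\overline{e}_A$ contains no Kleene star and no $\varepsilon$. This step is important because $\Psi'$ is defined on positive expressions only: neither $\varepsilon$ nor the star $^*$ (which would include $\varepsilon$) can be translated into the ipomset world, as the Kleene iteration on iipomsets is $^+$ rather than $^*$.

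Having $\overline{e}_A$ in hand, I would apply Lemma~\ref{lem:psi'-psi} to conclude that
\[
\Psi'(\overline{e}_A) \;=\; \Psi(\mathcal{L}(\overline{e}_A)) \;=\; \Psi(L(A)).
\]
Finally, Lemma~\ref{lem:psi'-in-wrat} gives the rationality of the image: if $A$ is a finite-word ST-automaton then $\overline{e}_A$ is rational, hence $\Psi'(\overline{e}_A) \in \rat$; if $A$ is an $\omega$-ST-automaton then $\overline{e}_A$ is $\omega$-rational, hence $\Psi'(\overline{e}_A) \in \wrat$. Combining the two displayed equalities gives $\Psi(L(A)) \in \rat$ (respectively $\wrat$), which is the claim.

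The only subtlety worth double-checking, and the step I expect to require a little care, is that the $\omega$-rational expression produced by McNaughton's theorem can actually be made positive in a way compatible with the gluing structure: the $\omega$-iteration $L^\omega$ defined in Section~\ref{sec:w-rat} requires \emph{infinitely many} non-identity factors, so one must ensure that the languages $G_i$ in the normal form $\sum_i F_i \cdot G_i^\omega$ contain at least one non-identity word (otherwise $G_i^\omega$ collapses). This is precisely guaranteed by Lemma~\ref{lem:ST-coh} together with the fact that ST-automaton transitions are labeled in $\Xib \setminus \Id$, so any $\omega$-path produces infinitely many proper starters/terminators; hence positivity and the coherence condition defining $\wcoh(\Xib)$ line up with the $\omega$-iteration from Section~\ref{sec:w-rat}, and the argument goes through.
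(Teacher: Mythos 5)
Your proof is correct and follows essentially the same route as the paper: Kleene/McNaughton to get an expression over $\Xib$, Lemma~\ref{lem:positive-expression} to make it positive, then Lemmas~\ref{lem:psi'-psi} and~\ref{lem:psi'-in-wrat} to transfer rationality through $\Psi$. Your closing remark about the $\omega$-iteration requiring infinitely many non-identity factors is exactly the point the paper handles via Lemma~\ref{lem:ST-coh} and the positivity transformation, so nothing is missing.
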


\begin{theorem}
    \label{th:w-rationality}
    Büchi and Muller $\omega$-regular languages are $\omega$-rational.
\end{theorem}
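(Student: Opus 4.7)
The plan is to assemble the theorem directly from the chain of results already prepared in the preceding subsections, so the role of the proof is to verify that those results compose cleanly in both the Büchi and the Muller case. Fix an $\omega$-HDA $X$ (with either acceptance condition) and form the ST-automaton $ST(X)$. By Prop.~\ref{prop:eq-ST-wHDA} we have $L(X) = \Psi(L(ST(X)))$, so it suffices to show that $\Psi(L(ST(X))) \in \wrat$. But this is exactly the second bullet of Lem.~\ref{lem:ST-rat}, so unrolling what is hidden in that lemma is what the proof actually has to do.

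First I would recall that $ST(X)$ is a finite automaton over the alphabet $\Xib$ with a Büchi (resp.\ Muller) acceptance condition inherited from $X$. Invoking McNaughton's theorem on ST-automata viewed as ordinary $\omega$-automata over $\Xib$ produces an $\omega$-rational expression $e$ over $\Xib$ with $\mathcal{L}(e) = L(ST(X))$. Then Lem.~\ref{lem:positive-expression} lets me replace $e$ by an equivalent positive $\omega$-rational expression $\overline e$ containing no occurrences of ${}^\ast$ or $\varepsilon$; this step is needed because $\rat$ and $\wrat$ are built from $^+$ and $^\omega$, not from $^\ast$, and $\varepsilon$ does not correspond to any ipomset.

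Next I would push $\Psi$ through the expression. By Lem.~\ref{lem:psi'-psi}, $\Psi(\mathcal{L}(\overline e)) = \Psi'(\overline e)$, and by Lem.~\ref{lem:psi'-in-wrat} the latter belongs to $\wrat$ whenever $\overline e$ is $\omega$-rational. Chaining these equalities with Prop.~\ref{prop:eq-ST-wHDA} gives
\[
    L(X) \;=\; \Psi(L(ST(X))) \;=\; \Psi(\mathcal{L}(\overline e)) \;=\; \Psi'(\overline e) \;\in\; \wrat \;\subseteq\; \infrat,
\]
which is the desired conclusion. The same argument works without modification when $X$ is Muller, since McNaughton's theorem applies equally to Büchi and Muller $\omega$-automata and $ST(X)$ inherits the acceptance condition verbatim.

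The main obstacle is really hidden in the appeal to McNaughton's theorem combined with the coherence structure of $\Xib$: one has to make sure that the $\omega$-rational expression built from $ST(X)$ only recognises coherent $\omega$-words, so that applying $\Psi$ makes sense and produces valid $\omega$-iipomsets. This is precisely what Lem.~\ref{lem:ST-coh} guarantees, since $\mathcal{L}(\overline e) = L(ST(X)) \subset \wcoh(\Xib)$, and the infinite-gluing validity then follows from Prop.~\ref{prop:wrat-finite-width} together with Lem.~\ref{lem:infinite-product-well-defined}. Once this coherence/validity bookkeeping is done, the theorem is a one-line composition of the lemmas above.
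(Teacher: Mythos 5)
Your proposal is correct and follows exactly the paper's own route: form $ST(X)$, apply Prop.~\ref{prop:eq-ST-wHDA} to get $L(X)=\Psi(L(ST(X)))$, and conclude by Lem.~\ref{lem:ST-rat}; your unrolling of that lemma (McNaughton, positivity via Lem.~\ref{lem:positive-expression}, and the $\Psi'$ lemmas) matches the paper's own proof of Lem.~\ref{lem:ST-rat}. Nothing to add.
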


\begin{proof}
    Let $(X,\bot_X,F_X)$ be a Büchi or Muller $\omega$-HDA. By definition of $\omega$-HDAs, $X$ is finite, hence so is its ST-automaton $ST(X)$. By Prop.~\ref{prop:eq-ST-wHDA}, $L(X) = \Psi(L(ST(X)))$, and by Lem.~\ref{lem:ST-rat}, $\Psi(L(ST(X))) \in \wrat$, thus all $\omega$-regular languages are $\omega$-rational.
\end{proof}

As a corollary of the above and Prop.~\ref{prop:infrat-in-infiipoms} we have:

\begin{corollary}
\label{cor:Buchi-Muller-finite-width}
    If $L$ is an $\omega$-regular language, then $\ms{wd}(L) < +\infty$ and $L \subset \wiipoms$.
\end{corollary}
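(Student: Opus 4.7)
The plan is to obtain the corollary by assembling three previously established results: Theorem~\ref{th:w-rationality}, Proposition~\ref{prop:wrat-finite-width}, and Proposition~\ref{prop:infrat-in-infiipoms}, together with a short observation that rules out finite ipomsets in $L$.

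First I would fix an $\omega$-regular language $L$, i.e.\ $L = L(X)$ for some Büchi or Muller $\omega$-HDA $X$. By Theorem~\ref{th:w-rationality}, $L \in \wrat \subseteq \infrat$. Applying Proposition~\ref{prop:wrat-finite-width} directly yields $\ms{wd}(L) < +\infty$, which is the first claim. Applying Proposition~\ref{prop:infrat-in-infiipoms} gives $L \subseteq \infiipoms = \iipoms \cup \wiipoms$, so it only remains to exclude finite iipomsets from $L$.

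For this last point I would argue from the definition of $\omega$-HDAs: every accepting run is an $\omega$-track $\alpha = (q_0,\phi_1,q_1,\dots)$, and its event ipomset is defined as the infinite gluing $\ev(\alpha) = \ev(\alpha_1) * \ev(\alpha_2) * \cdots$ of infinitely many starter/terminator steps. Both Büchi and Muller conditions force at least one cell to be visited infinitely often, so infinitely many of these steps are proper (non-identity) starters or terminators; hence by Lemma~\ref{lem:infinite-product-well-defined} together with the already established bound on $\ms{wd}(L)$, the infinite gluing is well-defined and yields a genuine $\omega$-ipomset (rather than a finite one). Thus $L \subseteq \wipoms$, and intersecting with $\infiipoms$ gives $L \subseteq \wiipoms$.

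There is essentially no obstacle: the two conclusions are straightforward consequences of the preceding propositions, with the only subtlety being the observation that $L \subseteq \wipoms$ — which is built into the very notion of $\omega$-HDA and does not require additional work. The corollary is then immediate from combining the three ingredients above.
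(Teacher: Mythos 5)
Your proposal is correct and follows essentially the same route as the paper: Theorem~\ref{th:w-rationality} combined with Propositions~\ref{prop:wrat-finite-width} and~\ref{prop:infrat-in-infiipoms}. Your final step excluding finite iipomsets is actually redundant, since Theorem~\ref{th:w-rationality} already places $L$ in $\wrat = \infrat \cap 2^{\wiipoms}$, which gives $L \subset \wiipoms$ directly by definition.
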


%\begin{remark}
%   Seeing Corollary~\ref{cor:Buchi-Muller-finite-width}, we note that we could have been more restrictive on the Definition~\ref{def:validity} of valid $\omega$-ipomsets, by requiring all $<$-antichain to be bounded by a number. However, this property is not required for most of the proofs about $\omega$-ipomsets, and allows the definition of $\omega$-HDA with an infinite number of cells.
%\end{remark}

\begin{example}
\label{ex:wrat-languages}
    In Fig.~\ref{fig:ST-example} are represented two $\omega$-HDAs with their corresponding ST-automata. From this we can directly compute their language in $\Xib^\omega$ (we omit the letters in $\Id$), and then use $\Psi$ to compute the languages of the original Büchi $\omega$-HDAs:
    \begin{align*}
            L(ST(X_2)) =& \Big(
            a \ibu \cdot \ibu a + b \ibu \cdot \ibu b
            + 
            (a\ibu \cdot \loset{\ibu a\ibu \\ \nibu b\ibu} + \loset{a \ibu \\ b \ibu} + b\ibu \cdot \loset{\nibu a\ibu \\ \ibu b\ibu})
            \cdot
            (\loset{\ibu a\ibu \\ \ibu b\nibu} \cdot\loset{\ibu a\ibu \\ \nibu b\ibu} + \loset{\ibu a\nibu \\ \ibu b\ibu} \cdot\loset{\nibu a\ibu \\ \ibu b\ibu})^*
            \\
            &\cdot(\loset{\ibu a \ibu \\ \ibu b \nibu} \cdot \ibu a + \loset{\ibu a \\ \ibu b} + \loset{\ibu a \nibu \\ \ibu b \ibu} \cdot \ibu b)
            \Big)^\omega\\
             L(ST(X_3)) =&
            (b \ibu \cdot \ibu b)^*
            \cdot
            (a \ibu \cdot \loset{\ibu a \ibu \\ \nibu b \ibu} + b \ibu \cdot \loset{\nibu a \ibu \\ \ibu b \ibu} + \loset{ a \ibu \\ b \ibu})
            \cdot
            (\loset{\ibu a \ibu \\ \ibu b \nibu} \cdot \loset{\ibu a \ibu \\ \nibu b \ibu})^\omega
    \end{align*}
    Languages in $\wcoh(\Xib)$ can be turned into the language of the original HDA using the function $\Psi$ and Prop.~\ref{prop:eq-ST-HDA}. We just replace the occurrences of $A^*$ by $A^+ + \varepsilon$ and develop the expression to avoid the word $\varepsilon$, then remove multiple occurrences of the same ipomset in unions:
    \begin{itemize}
        \item 
            $L(X_2) = \Psi(L(ST(X_2))) = (a + b + \loset{ a \\ b } +\loset{ a \ibu \\ b \ibu } *  (\loset{\ibu a \ibu \\\ibu bb \ibu } + \loset{\ibu aa \ibu \\\ibu b \ibu })^+ * \loset{ \ibu a \\ \ibu b})^\omega$
        \item
            $L(X_3) = \Psi(L(ST(X_3))) = b^+ * \loset{ a \ibu \\ b \ibu } * \loset{ \ibu a \ibu \\ \ibu bb \ibu }^\omega + \loset{ a \ibu \\ b \ibu } * \loset{ \ibu a \ibu \\ \ibu bb \ibu }^\omega$
    \end{itemize}
\end{example}

As we saw in the previous section, our definition of rational operations is different from the original one \cite{DBLP:conf/concur/FahrenbergJSZ22} since  we do not take subsumption closure into account.
Nevertheless:
\begin{proposition}
\label{prop:finite-rationality}
    $\rat\down = \Drat$.
\end{proposition}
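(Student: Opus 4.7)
The plan is to reduce the statement to three commutation identities that show down-closure interchanges the rational operations of $\rat$ with the down-closed rational operations of $\Drat$, and then apply them in a mutual structural induction. For $L, M \subseteq \iipoms$ I would prove:
\begin{align*}
    (L \cup M)\down &= L\down \cup M\down, \\
    (L * M)\down &= L\down *\sd M\down, \\
    (L^+)\down &= (L\down)^{+\sd}.
\end{align*}
The first identity is immediate from the definition of $\down$. The third follows from the second by a short induction showing $(L^n)\down = (L\down)^{n\sd}$, together with commutation of $\down$ with arbitrary unions. Thus the only nontrivial identity is the one for gluing.

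For the gluing identity, the inclusion $(L*M)\down \subseteq L\down *\sd M\down$ is direct since $L \subseteq L\down$ and $M \subseteq M\down$: given $R \subsu P * Q$ with $P \in L,\, Q \in M$, we already have $P * Q \in L\down *\sd M\down$ (before taking its down-closure). For the reverse inclusion, suppose $R \subsu P'*Q'$ with $P' \subsu P \in L$ and $Q' \subsu Q \in M$. The key observation is that subsumptions of conclists are isomorphisms: event order is total on a conclist, and a bijection preserving a total strict order in one direction must preserve it in both directions. Hence the subsumptions $P' \to P$ and $Q' \to Q$ restrict to isomorphisms $T_{P'} \cong T_P$ and $S_{Q'} \cong S_Q$, which are compatible with the interface identification $T_{P'} \cong S_{Q'}$ used to define $P'*Q'$; in particular $T_P \cong S_Q$, so $P*Q$ exists. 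Gluing the two subsumptions along the shared interface then yields a subsumption $P'*Q' \subsu P*Q$, so $R \subsu P*Q \in L*M$ and hence $R \in (L*M)\down$. This gluing-of-subsumptions step is the main technical obstacle; it reduces to a careful unpacking of the definition of gluing together with the fact (used throughout the paper) that conclist isomorphisms are unique.

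With these identities available, recall the alternative characterization of $\Drat$ from \cite[Prop.~16]{DBLP:conf/concur/FahrenbergJSZ22}: $\Drat$ is the smallest class containing $\emptyset$ and $\set{U}$ for $U \in \Xi$, closed under $\cup$, $*\sd$ and $^{+\sd}$. For $U \in \Xi$, $\set{U}\down = \set{U}$ because the constraint $T_U = U$ (for a starter) or $S_U = U$ (for a terminator) forces any subsumption $P \subsu U$ to have $<_P = \emptyset$, making $P$ a discrete ipomset and hence, by the same total-order argument on $\evord$, isomorphic to $U$. The generators of the two classes therefore agree after applying $\down$. Both inclusions $\rat\down \subseteq \Drat$ and $\Drat \subseteq \rat\down$ now follow by straightforward structural inductions that rewrite $\down$-wrapped $\rat$-expressions into $\Drat$-expressions and lift $\Drat$-expressions back to $\rat$-expressions, using the commutation identities at each inductive step.
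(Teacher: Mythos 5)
Your proposal is correct, and the forward inclusion $\rat\down \subseteq \Drat$ is essentially the paper's argument: the same three commutation identities (union is immediate, $(L*M)\down = L\down *\sd M\down$ is the paper's Lem.~\ref{lem:rat-eq-*}, whose key step $P'*Q' \subsu P*Q$ the paper imports from the literature and you re-derive by gluing the two subsumptions along the interface, and $(L^+)\down = (L\down)^{+\sd}$ is Cor.~\ref{cor:rat-eq-+}), followed by the same structural induction. Where you genuinely diverge is the reverse inclusion $\Drat \subseteq \rat\down$. The paper proves it semantically: given $L \in \Drat$, the Kleene theorem (Th.~\ref{th:kleene}) yields an HDA $X$ with $L = L(X)$, then $L = \Psi(L(ST(X))) \in \rat$ by Prop.~\ref{prop:eq-ST-HDA} and Lem.~\ref{lem:ST-rat}, and finally $L = L\down$ since HDA languages are down-closed (Prop.~\ref{prop:HDA-down-closed}). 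You instead run a purely syntactic structural induction over $\parallel$-free $\Drat$-expressions (legitimately available via the cited Prop.~16 characterization, which the paper also invokes), applying the same commutation identities in reverse to lift each $\Drat$-expression to a $\rat$-expression whose down-closure it equals. Your route is more elementary and self-contained -- it needs neither the Kleene theorem nor ST-automata -- at the cost of depending on the $\parallel$-free normal form for $\Drat$; the paper's route is essentially free given that Lem.~\ref{lem:ST-rat} and the ST-automaton translation are already developed for Th.~\ref{th:w-rationality}, and it does not need that normal form. Both your auxiliary claims check out: subsumptions between conclists are isomorphisms because $\evord$ is total and $<$ empty, and $\set{U}\down = \set{U}$ for $U \in \Xi$ because the interface-preservation clause of subsumption forces $T_P = P$ (resp.\ $S_P = P$) and hence $<_P = \emptyset$ -- your justification here is in fact more careful than the paper's terse parenthetical.
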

\begin{proof}
	The inclusion $\rat\down \subseteq \Drat$ is done by induction.
	The other follows from Lem.~\ref{lem:ST-rat} and the fact that languages of $\Drat$ are closed under subsumption (see App.~\ref{app:proof-relation-of-w-language} for details).
\end{proof}

\subsection{Muller vs Büchi vs $\omega$-rational}

In the classical $\omega$-theory, rational languages are as expressive as Muller and Büchi automata.
For $\omega$-HDA, these are no longer true. We describe and prove here some differences:

\begin{theorem}
    \label{prop:Muller-vs-Büchi}
    The Muller $\omega$-regular class is strictly bigger than the Büchi $\omega$-regular class.
\end{theorem}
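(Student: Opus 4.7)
The containment of Büchi $\omega$-regular languages in Muller $\omega$-regular ones is the trivial direction and was observed just above the statement (via the encoding $F_X = \set{A \subset X \mid A \cap \top_X \neq \emptyset}$). So I only need to produce one Muller $\omega$-regular language that is not recognized by any Büchi $\omega$-HDA.

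For the witness I would take $L := L(X_4)$, with $X_4$ the Muller $\omega$-HDA of Ex.~\ref{ex:muller}. Since the acceptance family is $\{\{q_0, q_{ab}\}\}$, accepted tracks eventually perform only the concurrent step $q_0 \arrO{ab} q_{ab} \arrI{ab} q_0$, so $L$ consists exactly of the $\omega$-iipomsets of the form $P \cdot \loset{a\\b}^\omega$ where $P$ is traced by a finite track from $q_0$ to $q_0$ in the torus. In particular, $a^n \cdot \loset{a\\b}^\omega \in L$ for every $n\ge 0$, whereas any $\omega$-iipomset with infinitely many pure $a$-only steps — for instance $(a \cdot \loset{a\\b})^\omega$ — does not belong to $L$.

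Suppose, toward a contradiction, that a Büchi $\omega$-HDA $Y$ satisfies $L(Y) = L$. Fix an accepting $\omega$-track $\pi$ of $Y$ with $\ev(\pi) = \loset{a\\b}^\omega$; each of its cells is either a $0$-cell or a $2$-cell of event $\loset{a\\b}$. By pigeonhole on the finite cell-set of $Y$, I pick a $0$-cell $p$ visited infinitely often by $\pi$, together with some $c \in \top_Y \cap \Inf(\pi)$. Applied to any $2$-cell $q$ following a $p$-visit, the precubical identities produce $1$-cells $u = \delta_a^0(q)$ of event $\{b\}$ and $v = \delta_b^0(q)$ of event $\{a\}$ with $\delta_b^0(u) = \delta_a^0(v) = p$; hence $Y$ admits both $a$-only and $b$-only upsteps out of $p$. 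The plan is then to exhibit a finite ``$a$-only detour'' $\tau$ at $p$ (a closed sub-track from $p$ back to $p$ whose event ipomset contains a pure $a$-step) and splice it between the successive returns of $\pi$ to $p$. The resulting $\omega$-track $\pi^*$ still visits $c$ infinitely often, so remains accepting in $Y$, yet its event ipomset has infinitely many pure $a$-steps and therefore lies outside $L$, giving $\ev(\pi^*) \in L(Y) \setminus L$, a contradiction.

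The main difficulty lies in producing the detour $\tau$. The easy case is $\delta_a^1(v) = p$, which immediately yields the loop $p \arrO{a} v \arrI{a} p$. Otherwise, one must combine the accepting tracks of $Y$ for the ipomsets $a^n \cdot \loset{a\\b}^\omega \in L(Y)$ — all of which contain pure $a$-steps in their prefixes — with a pigeonhole argument over the finite cell-set of $Y$ in order to locate a closed walk at $p$ whose event ipomset contains such a step. Once the detour is obtained, validity of the spliced infinite gluing follows from Lem.~\ref{lem:infinite-product-well-defined}, its intervalness from Cor.~\ref{cor:inf-product-interval}, and well-definedness of the splice from unicity of conclist isomorphisms (Prop.~\ref{prop:unicity-iso}). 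The essential insight is structural: the precubical axioms forbid a $2$-cell from being present without its lower $1$-cell faces, so a Büchi HDA cannot ``commit'' to a purely $\loset{a\\b}$-tail region without also making pure $a$- and $b$-steps permanently available around the Büchi-witness cell — which is precisely the expressiveness gap that separates Büchi from Muller in this setting.
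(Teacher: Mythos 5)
Your overall strategy matches the paper's: same witness language $L(X_4)$, same contradiction structure (take an accepting Büchi track realizing $\loset{a\\b}^\omega$-tails, use the precubical boundary of an infinitely-visited $2$-cell to inject $ab$-shaped behaviour infinitely often, and observe that Büchi acceptance survives the surgery while membership in $L(X_4)$ does not). However, there is a genuine gap at the one step that actually carries the proof. You formulate the surgery as splicing a \emph{closed} ``$a$-only detour'' $\tau$ at a fixed $0$-cell $p$, i.e.\ a walk from $p$ back to $p$ containing a pure $a$-step, and you explicitly leave its construction as a plan (``one must combine\dots with a pigeonhole argument\dots''). Such a closed walk need not exist: the $a$-upstep out of $p$ leads to $v=\delta^0_b(q)$ and then $\delta^1_a(v)$, which is some other $0$-cell with no guaranteed path back to $p$; and the pigeonhole over accepting tracks for $a^n\cdot\loset{a\\b}^\omega$ only yields a closed $a$-walk at \emph{some} $0$-cell in the transient part of $Y$, not at your chosen $p$, with no reason for the two to be connected by a cycle. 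The paper sidesteps this entirely: it does not build a loop at a $0$-cell but replaces the diagonal subtrack $\delta^0_{ab}(p)\arrO{ab}p\arrI{ab}\delta^1_{ab}(p)$ by the boundary track $\delta^0_{ab}(p)\arrO{a}\delta^0_b(p)\arrI{a}\delta^1_a(\delta^0_b(p))\arrO{b}\delta^1_a(p)\arrI{b}\delta^1_{ab}(p)$, which has the \emph{same source and target} by the precubical identities and event ipomset $ab$. Existence is then automatic from precubicality, no connectivity argument is needed, and replacing only every other occurrence keeps $\Inf$ from shrinking. You should restate your surgery in this endpoint-preserving form rather than as a loop at $p$.

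Two smaller inaccuracies: the cells of a track with event $\loset{a\\b}^\omega$ are not restricted to $0$- and $2$-cells (a non-sparse track such as $p\arrO{a}v\arrO{b}q\arrI{a}u\arrI{b}p'$ realizes $\loset{a\\b}$ while passing through $1$-cells), so the existence of an infinitely-visited $0$-cell needs the separate observation that between consecutive $\loset{a\\b}$ blocks all events have terminated, forcing a visit to a cell with empty event set. Also, the appeals to Lem.~\ref{lem:infinite-product-well-defined} and Cor.~\ref{cor:inf-product-interval} are unnecessary here: the modified sequence is by construction a track of the finite precubical set $Y$, so its event ipomset is a valid interval $\omega$-ipomset for the same reasons as any other $\omega$-track's.
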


\begin{proof}
    As for the classical $\omega$-theory, Büchi $\omega$-regular languages are Muller $\omega$-regular.
    
    However, the converse is false. Take the Muller $\omega$-HDA $X_4$ of Ex.~\ref{ex:muller}. The only cycle (up to shift) using only states of $F_{X_4}$ is $(q_0, \arrO{ab},q_{ab}, \arrI{ab}, q_0, ...)$, so the language of $X_4$ is such that every accepted $\omega$-ipomset ends by $\loset{a \\ b}^\omega$. Assume that there is a Büchi $\omega$-HDA $Y$ that recognizes $L(X_4)$. Let $\alpha$ be an accepting $\omega$-track of $Y$, which exists because $L(X_4) \neq \emptyset$. We have $\ev(\alpha)= A*\loset{a \\ b}^\omega$ for some $A \in \iipoms$. So $\alpha$ goes through $Y[\loset{a \\ b}]$ an infinite number of times. However $Y$ is finite, so there is $p \in Y[\loset{a \\ b}]$ such that $\alpha$ goes through $p$ an infinite number of times. Thus there is a subtrack $\alpha_p := (\delta^0_{ab}(p), \arrO{ab}, p, \arrI{ab}, \delta^1_{ab}(p))$ which appears an infinite number of times in $\alpha$. (It may be a track equivalent to it, but as there is a finite number of them, it suffices to choose one of them that is used an infinite number of times.) As $Y$ is a precubical set, we can define the track $\alpha'_p = (\delta^0_{ab}(p) ,\arrO{a}, \delta^0_b(p), \arrI{a}, \delta^1_a(\delta^0_b(p)), \arrO{b}, \delta^1_a(p), \arrI{b}, \delta^1_{ab}(p))$ with label $ab$, and then replace one in two occurrences of $\alpha_p$ by $\alpha'_p$. The cells visited an infinite number of times by $\alpha$ are still visited an infinite number of times, hence the new $\omega$-track is also accepting. However, its label does not end by $\loset{a \\b}^\omega$ because $ab$ appears an infinite number of times. So $L(X_4)$ is not Büchi $\omega$-regular.
\end{proof}

\begin{theorem}
    \label{prop:wrational-not-muller}
    Some languages of $\wrat$ cannot be expressed by Muller $\omega$-HDA.
\end{theorem}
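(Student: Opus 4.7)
The strategy is to exhibit a concrete $\omega$-rational language that cannot be Muller recognized, exploiting the fact that any $2$-cell of a precubical set admits a detour along its boundary which replaces a concurrent step $\loset{a\\b}$ by a sequential $ab$. Take $L = \set{\loset{a\\b}^\omega}$. It is $\omega$-rational: writing $\loset{a\\b} = \loset{a\ibu\\b\ibu} * \loset{\ibu a\\\ibu b}$ shows $\set{\loset{a\\b}} \in \rat$ since both factors belong to $\Xi$, and then $L = \set{\loset{a\\b}}^\omega \in \wrat$ because $S_{\loset{a\\b}} = T_{\loset{a\\b}} = \emptyset$ and every iterate is a proper iipomset.

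Assume for contradiction that some Muller $\omega$-HDA $Y = (X, \bot_X, F_X)$ satisfies $L(Y) = L$, and pick an accepting $\omega$-track $\alpha$ with $\ev(\alpha) = \loset{a\\b}^\omega$. From the sparse decomposition of $\loset{a\\b}^\omega$ (an infinite alternation of the starter $\loset{a\ibu\\b\ibu}$ and terminator $\loset{\ibu a\\\ibu b}$), one sees that, up to track equivalence, $\alpha$ must traverse $2$-cells with event $\loset{a\\b}$ via $\arrO{ab}/\arrI{ab}$ pairs infinitely often. Finiteness of $X$ then forces a single cell $p \in X[\loset{a\\b}]$ to be visited in this way infinitely many times.

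Isolate one such visit, a subtrack $\delta^0_{ab}(p) \arrO{ab} p \arrI{ab} \delta^1_{ab}(p)$ (of event $\loset{a\\b}$), and replace it by the boundary detour
\[
  \delta^0_{ab}(p) \arrO{a} \delta^0_b(p) \arrI{a} \delta^1_a(\delta^0_b(p)) \arrO{b} \delta^1_a(p) \arrI{b} \delta^1_{ab}(p),
\]
whose event is the sequential $ab$. This detour is well defined by the precubical identities around $p$ and stays inside $X$. Call the resulting $\omega$-track $\alpha'$. Only one occurrence has been altered, so each of the three new cells is traversed exactly once while $p$ is still visited infinitely often; consequently $\Inf(\alpha') = \Inf(\alpha) \in F_X$, and $\alpha'$ remains accepting. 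However, $\ev(\alpha')$ coincides with $\loset{a\\b}^\omega$ except at one position where the formerly concurrent pair now carries a precedence $a < b$: thus $\ev(\alpha') \subsu \loset{a\\b}^\omega$ with strict subsumption, so $\ev(\alpha') \neq \loset{a\\b}^\omega$ and $\ev(\alpha') \in L(Y) \setminus L$, contradicting $L(Y) = L$.

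The main obstacle is the $\Inf$-bookkeeping under the Muller condition: altering too many visits would either drop $p$ from $\Inf(\alpha')$ or promote boundary cells into it, possibly pushing $\Inf(\alpha')$ out of $F_X$. The crux is that a \emph{single} local modification preserves $\Inf$ exactly while already producing an event word outside $L$, which is precisely the kind of change Muller acceptance cannot rule out.
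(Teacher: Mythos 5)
Your proof is correct and follows essentially the same approach as the paper: exhibit $\loset{a\\b}^\omega \in \wrat$, then replace one $\arrO{ab}/\arrI{ab}$ traversal of a square by its boundary detour, observe that $\Inf$ is unchanged so Muller acceptance is preserved, and derive a contradiction from the strictly subsumed event ipomset. The only (harmless) difference is that you pigeonhole a square visited infinitely often before modifying one visit, whereas the paper simply modifies the first such visit — any single finite alteration leaves $\Inf$ intact, so the pigeonhole step is not needed.
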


\begin{proof}
    Let $L_r = (\loset{a \ibu \\ b\ibu} * \loset{\ibu a \\ \ibu b})^\omega = \loset{a \\ b}^\omega\in \wrat$. Suppose that $(X,\bot_X,F_X)$ is a Muller $\omega$-HDA such that $L(X)= L_r$. An accepting $\omega$-track $\alpha$ (with label $\ev(\alpha) = \loset{a \\ b}^\omega$) in $X$ must start by a track (equivalent to) $\alpha_p := (\delta^0_{ab}(p), \arrO{ab}, p, \arrI{ab}, \delta^1_{ab}(p))$ with $p \in X[\loset{a\\b}]$. As $X$ is a precubical set, $\alpha'_p = (\delta^0_{ab}(p) ,\arrO{a}, \delta^0_b(p), \arrI{a}, \delta^1_a(\delta^0_b(p)), \arrO{b}, \delta^1_a(p), \arrI{b}, \delta^1_{ab}(p))$ is well-defined. If $\beta$ is such that $\alpha = \alpha_p * \beta$, then $\alpha' = \alpha'_p * \beta$ is a track in $X$ which is accepting (as Muller acceptance conditions only care about states seen an infinite number of times). Then $\ev(\alpha') = ab\loset{a\\b}^\omega \in L(X)$ but $ab\loset{a\\b}^\omega \not\in L_r$, and we conclude by contradiction.
\end{proof}

Note that defining $\omega$-rational operations with subsumption closure (following~\cite{DBLP:conf/concur/FahrenbergJSZ22}) would have made
$\omega$-rational languages \emph{less} expressive than $\omega$-HDAs.
For example, as $\{\loset{a \\ b}\}\down = \set{ab, ba, \loset{a \\ b}}$, the language $L(X_4)$ of Ex.~\ref{ex:muller} would then not be $\omega$-rational.

As $\wrat$ is more expressive than Büchi and Muller $\omega$-HDAs, it is natural to inquire about the form of languages of $\wrat$ with the same expressiveness. 
As a first step, define a \emph{locally down-closed lasso} (\emph{ldl}) $\omega$-language to be a language of the form $\bigcup\nolimits_{i \leq n} M_i\down * ((R_i^+)\down)^\omega$ for $n \in \Nat$ and  $M_i, R_i \in \rat$ for all $i \leq n$.

\begin{theorem}
    \label{prop:buchi-wform}
    Any Büchi regular $\omega$-language is a locally down-closed lasso $\omega$-language.
\end{theorem}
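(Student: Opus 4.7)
The plan is to decompose any accepting $\omega$-track of a Büchi $\omega$-HDA at the visits to an accepting cell, which reduces the problem to two finite-HDA language computations that can be handled by the Kleene theorem for finite HDAs (Th.~\ref{th:kleene}) together with Prop.~\ref{prop:finite-rationality}.

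Let $(X,\bot_X,\top_X)$ be a Büchi $\omega$-HDA, and for subsets $Q,Q'\subset X$ define $L_{Q,Q'}(X)=\set{\ev(\alpha)\midbar \alpha \text{ track of } X,\,\src(\alpha)\in Q,\,\tgt(\alpha)\in Q'}$. The first step is to establish the decomposition
\[
L(X) \;=\; \bigcup_{p\in\top_X} L_{\bot_X,p}(X) * L_{p,p}(X)^\omega.
\]
For the $\subset$ direction, an accepting $\omega$-track $\alpha$ visits some $p\in\top_X$ infinitely often; after contracting trivial (empty-face) steps via track equivalence, split $\alpha=\alpha_0 * \alpha_1 * \alpha_2 *\cdots$ at consecutive visits to $p$, giving $\ev(\alpha_0)\in L_{\bot_X,p}(X)$ and, for $i\geq 1$, $\ev(\alpha_i)\in L_{p,p}(X)\setminus\Id$ (a non-trivial loop at $p$ must contain at least one upstep or downstep with non-empty $A$, whose starter or terminator produces internal events after gluing, so the glued ipomset cannot be an identity). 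For $\supset$, concatenate the witnessing tracks; the constraint $|\set{i\,:\, P_i\notin\Id}|=+\infty$ in the definition of $^\omega$ guarantees that the resulting $\omega$-track genuinely visits $p$ infinitely often, hence is Büchi-accepting.

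The second step applies Th.~\ref{th:kleene} to the finite HDAs $(X,\bot_X,\set{p})$ and $(X,\set{p},\set{p})$, yielding $L_{\bot_X,p}(X),\,L_{p,p}(X)\in\Drat$. By Prop.~\ref{prop:finite-rationality}, there exist $M_p,\,R_p\in\rat$ with $L_{\bot_X,p}(X)=M_p\down$ and $L_{p,p}(X)=R_p\down$. The third step observes that $L_{p,p}(X)$ is closed under gluing at $p$, since the concatenation of two tracks $p\to p$ is again such a track; hence $R_p^n\subset L_{p,p}(X)$ for every $n\geq 1$, whence $(R_p^+)\down\subset L_{p,p}(X)$, and conversely $R_p\subset R_p^+$ gives $L_{p,p}(X)=R_p\down\subset (R_p^+)\down$, so $L_{p,p}(X)=(R_p^+)\down$. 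Combining, $L(X)=\bigcup_{p\in\top_X} M_p\down * ((R_p^+)\down)^\omega$, which is the required locally down-closed lasso form since $\top_X$ is finite.

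The main obstacle lies in carrying out the first step rigorously: one must use track equivalence to eliminate trivial single-cell visits to $p$, and must verify that every non-trivial loop at $p$ yields a non-identity event so that the $^\omega$-operator on the right-hand side captures exactly the Büchi-accepted tracks on the left. Once this is settled, the remaining steps reduce to invoking the finite theory (Th.~\ref{th:kleene}, Prop.~\ref{prop:finite-rationality}) and the elementary closure-under-gluing argument for $L_{p,p}(X)$.
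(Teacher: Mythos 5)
Your proof is correct and arrives at the same lasso decomposition as the paper, but by a genuinely different technical route. The paper transfers the whole problem to the ST-automaton $ST(X)$ via Prop.~\ref{prop:eq-ST-wHDA}, invokes the classical decomposition $\bigcup_{q_i\in I,\,q_f\in F} W(q_i,q_f)\cdot W(q_f,q_f)^\omega$ of Büchi automaton languages, rewrites $W(q_f,q_f)=(W(q_f,q_f))^++\varepsilon$, and pushes everything back through $\Psi$ using Lem.~\ref{lem:psi'-psi} and Prop.~\ref{prop:finite-rationality}; the burden of justifying the split at an infinitely recurring state is thus carried entirely by classical $\omega$-automata theory. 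You instead perform the split directly on $\omega$-tracks of $X$, which forces you to handle by hand exactly what the ST-automaton route hides: trivial empty-face steps, loops at $p$ whose event ipomset is an identity, and the fact that infinitely many factors must be non-identities so that the $^\omega$-product is a well-defined element of $\wiipoms$ (for the $\supseteq$ direction you should also note that Lem.~\ref{lem:infinite-product-well-defined} and Cor.~\ref{cor:inf-product-interval} apply because widths are bounded by the dimension of $X$). Your treatment of these points is sound; the only slight overstatement is that every inter-visit piece $\alpha_i$ yields a non-identity --- some may still be identities even after contraction, but this is harmless since the definition of $L^\omega$ tolerates identity factors as long as infinitely many are proper, which validity of $\ev(\alpha)$ guarantees. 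Your third step, deriving $L_{p,p}(X)=(R_p^+)\down$ from closure of $L_{p,p}(X)$ under gluing of loops at $p$, is a clean semantic replacement for the paper's syntactic manipulation of $W(q_f,q_f)$. What the paper's route buys is brevity and reuse of established machinery; what yours buys is an argument that stays entirely inside the HDA and makes transparent why the correct shape of the periodic part is $(R^+)\down$ rather than $R\down$.
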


Thus, $\omega$-HDAs allow local subsumptions in some finite factors coming from $M_i$ and $R_i^+$. 
Using $R_i^+$ instead of $R_i$ more accurately reflects the behaviour of Büchi $\omega$-HDAs.
Indeed, for $R = \set{ \loset{\ibu a \\ \ibu b} * \loset{a \ibu \\ b \ibu}}$ and $M= \loset{a \ibu \\ b \ibu}$, $M$ and $R$ are down-closed, but $M * R^\omega = M\down * (R\down)^\omega = \set{ \loset{a \\ b}^\omega}$ which cannot be recognized by a Büchi $\omega$-HDA as seen in the proof of Th.~\ref{prop:Muller-vs-Büchi}.

\begin{example}
\label{ex:buchi-wform}
  We can use again Ex.~\ref{ex:wrat-languages}, and take $L(X_3)$ which is Büchi $\omega$-regular:
    \begin{align*}
            L(X_3)  = b^+ * \loset{ a \ibu \\ b \ibu } * \loset{ \ibu a \ibu \\ \ibu bb \ibu }^\omega + \loset{ a \ibu \\ b \ibu } * \loset{ \ibu a \ibu \\ \ibu bb \ibu }^\omega = (b^+ + \eps) * \loset{ a \ibu \\ b \ibu } * (\loset{ \ibu a \ibu \\ \ibu bb \ibu }^+)^\omega 
    \end{align*}
    For $M =  (b^+ + \eps) * \loset{ a \ibu \\ b \ibu }$ and $R =\loset{ \ibu a \ibu \\ \ibu bb \ibu }$, we have $L(X_3) = M\down * ((R^+)\down)^\omega$.
\end{example}

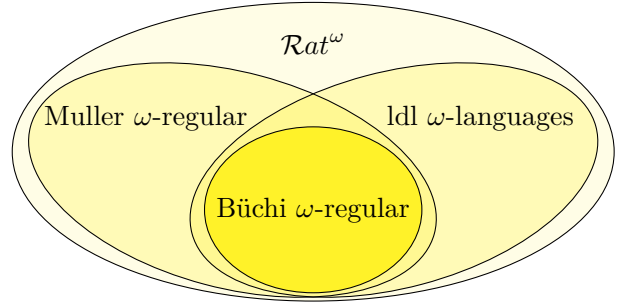
\begin{figure}[!h]

    \def\rot{15}
    \centering
    
    \begin{tikzpicture}[scale=1.1]
        \tikzstyle{every node}=[font=\large]

        \fill[fill = yellow!50, nearly transparent]  (0,0.6) ellipse (3.6cm and 1.8cm);
        \draw (0,0.6) ellipse (3.6cm and 1.8cm);

        \fill[fill = yellow!90, nearly transparent, rotate=-\rot]  (-1,0) ellipse (2.5cm and 1.3cm);
        \fill[fill = yellow!90, nearly transparent, rotate=\rot]  (1,0) ellipse (2.5cm and 1.3cm);
        \draw[rotate=-\rot] (-1,0) ellipse (2.5cm and 1.3cm);
        \draw[rotate=\rot] (1,0) ellipse (2.5cm and 1.3cm);

        \fill[fill = yellow!90]  (0,-0.1) ellipse (1.3cm and 1.0cm);
        \draw (0,-0.1) ellipse (1.3cm and 1.0cm);

        \node at (0,1.9) {$\wrat$};
        \node at (-2,1) {Muller $\omega$-regular};
        \node at (2,1) {ldl $\omega$-languages};
        \node at (0,-0.1) {Büchi $\omega$-regular};
        
    \end{tikzpicture}

\caption{Inclusion of classes of $\omega$-languages, with Büchi $\subsetneq$ Muller $\subsetneq \wrat$ }
\label{fig:language-diagram}
\end{figure}

\section{Conclusion and future work}

We have defined interval $\omega$-pomsets with interfaces ($\omega$-iipomsets),
shown that isomorphisms between them are unique, and that they admit a unique sparse decomposition.
Then, we have introduced $\omega$-higher-dimensional automata ($\omega$-HDAs) over $\omega$-iipomsets as a generalization of HDAs over finite iipomsets.
We have studied their languages under Büchi and Muller acceptance conditions and demonstrated that Muller is more expressive than Büchi. 

Unlike in the finite case, languages of $\omega$-HDAs are not closed under subsumption. 
In pursuit of a Kleene-like theorem, we have adapted the rational operations on finite iipomsets to disregard subsumption,
showing that up to subsumption, the notion of rational language remains unchanged under this modification.
We have also introduced a non-nested $\omega$-iteration to these operations and defined $\omega$-rational languages over $\omega$-iipomsets. 
However, we show that this natural class is bigger than the one of $\omega$-iipomset languages recognized by $\omega$-HDAs.
The diagram in Fig.~\ref{fig:language-diagram} summarizes our results about $\omega$-HDAs and $\omega$-rational languages.
In the diagram, the inclusion of Büchi $\omega$-regular languages into locally down-closed lasso $\omega$-languages
is shown as strict (cf.~Th.~\ref{prop:buchi-wform}); but we conjecture it to be an equivalence.

The challenge in establishing a Kleene-like theorem arises from the precubical identities
that force some factors to be subsumption-closed.
A solution may be found by passing to the \emph{partial} HDAs of \cite{DBLP:conf/calco/FahrenbergL15, DBLP:conf/fossacs/Dubut19}.
Intuitively, partial HDAs relax the face maps to be partial functions,
satisfying precubical inclusions rather than precubical identities.
However, while the language theory of HDAs has seen much recent progress, partial HDAs remain largely unexplored.
The use of such objects is particularly relevant, as with our adapted rational operations, we have shown that the language of any ST-automaton is ($\omega$-)rational, which is \emph{not} the case for the subsumption-closed rational operations of \cite{DBLP:conf/concur/FahrenbergJSZ22}.
This notably implies that ``our'' rational languages are closed under bounded-width complement, whereas the original definition is only closed under bounded pseudocomplement \cite{amrane.23.ictac}.

This work represents a first step toward developing a theory of $\omega$-HDAs.
Such a theory should be well-suited for modeling non-terminating concurrent systems with dependence and independence relations.
In particular it would avoid some problems with state-space explosion; for example, an infinite execution where event $a$ must precede $b$ is modeled as the subsumption closure of the pomset with $a < b$ and all other events occurring in parallel, instead of considering all interleavings separately.
Developing a logical characterization, similar to the finite case \cite{amrane2024logic}, would also be of interest.

\bibliographystyle{plainurl}
\bibliography{ref.bib}

@preamble{ "\newcommand{\Afirst}[1]{#1} " }

@preamble{ "\newcommand{\afirst}[1]{#1} " }

@string{mscs = "{Math. Struct. Comput. Sci.}"}

@string{mscs = "{Mathematical Structures in Computer Science}"}

@string{springer = "{Springer-Verlag}"}

@string{springer = "{Springer}"}

@string{lncs = "{LNCS}"}

@string{lncs = "{Lect. Notes Comput. Sci.}"}

@string{lncs = "{Lecture Notes in Computer Science}"}

@string{tcs = "{Th. Comp. Sci.}"}

@string{tcs = "{Theor. Comput. Sci.}"}

@string{tcs = "{Theoretical Computer Science}"}

@string{fundinf = "{Fund. Inf.}"}

@string{fundinf = "{Fundamentae Informatica}"}

@string{infcomp = "{Inf. Comput.}"}

@string{infcomp = "{Information and Computation}"}

@string{lipics = "{Leibniz International Proceedings in Informatics}"}

@string{lites = "{Leibniz Transactions on Embedded Systems}"}

@article{DBLP:journals/corr/abs-2503-07881,
	author       = {Luc Passemard and
	Amazigh Amrane and
	Uli Fahrenberg},
	title        = {Higher-Dimensional Automata : Extension to Infinite Tracks},
	journal      = {CoRR},
	volume       = {abs/2503.07881},
	year         = {2025},
	url          = {https://doi.org/10.48550/arXiv.2503.07881},
	doi          = {10.48550/ARXIV.2503.07881},
	eprinttype    = {arXiv},
	eprint       = {2503.07881},
	timestamp    = {Sun, 13 Apr 2025 20:58:55 +0200},
	biburl       = {https://dblp.org/rec/journals/corr/abs-2503-07881.bib},
	bibsource    = {dblp computer science bibliography, https://dblp.org}
}

@inproceedings{DBLP:conf/calco/FahrenbergL15,
  author    = {Uli Fahrenberg and
               Axel Legay},
  title     = {Partial Higher-dimensional Automata},
  booktitle = {CALCO},
  pages     = {101--115},
  year      = 2015,
  crossref  = {DBLP:conf/calco/2015},
  url       = {http://dx.doi.org/10.4230/LIPIcs.CALCO.2015.101},
  doi       = {10.4230/LIPIcs.CALCO.2015.101},
  timestamp = {Thu, 05 Nov 2015 15:50:08 +0100},
  biburl    = {http://dblp.uni-trier.de/rec/bib/conf/calco/FahrenbergL15},
  bibsource = {dblp computer science bibliography, http://dblp.org}
}

@proceedings{DBLP:conf/calco/2015,
  editor    = {Lawrence S. Moss and
               Pawel Sobocinski},
  title     = {6th Conference on Algebra and Coalgebra in Computer
                  Science, {CALCO} 2015, June 24-26, 2015, Nijmegen,
                  The Netherlands},
  series    = {LIPIcs},
  volume    = {35},
  publisher = {Schloss Dagstuhl - Leibniz-Zentrum f{\"u}r Informatik},
  year      = {2015},
  url       = {http://www.dagstuhl.de/dagpub/978-3-939897-84-2},
  isbn      = {978-3-939897-84-2},
  timestamp = {Thu, 05 Nov 2015 15:50:08 +0100},
  biburl    = {http://dblp.uni-trier.de/rec/bib/conf/calco/2015},
  bibsource = {dblp computer science bibliography, http://dblp.org}
}

@inproceedings{DBLP:conf/fossacs/Dubut19,
  author    = {J{\'{e}}r{\'{e}}my Dubut},
  title     = {Trees in Partial Higher Dimensional Automata},
  booktitle = {FOSSACS},
  pages     = {224--241},
  year      = 2019,
  crossref  = {DBLP:conf/fossacs/2019},
  url       = {https://doi.org/10.1007/978-3-030-17127-8\_13},
  doi       = {10.1007/978-3-030-17127-8\_13},
  timestamp = {Tue, 14 May 2019 10:00:55 +0200},
  biburl    = {https://dblp.org/rec/bib/conf/fossacs/Dubut19},
  bibsource = {dblp computer science bibliography, https://dblp.org}
}

@proceedings{DBLP:conf/fossacs/2019,
  editor    = {Miko{\l}aj Boja{\'{n}}czyk and
               Alex Simpson},
  title     = {Foundations of Software Science and Computation Structures - 22nd
               International Conference, {FOSSACS} 2019, Held as Part of the European
               Joint Conferences on Theory and Practice of Software, {ETAPS} 2019,
               Prague, Czech Republic, April 6-11, 2019, Proceedings},
  series    = lncs,
  volume    = 11425,
  publisher = springer,
  year      = 2019,
  url       = {https://doi.org/10.1007/978-3-030-17127-8},
  doi       = {10.1007/978-3-030-17127-8},
  isbn      = {978-3-030-17126-1},
  timestamp = {Fri, 05 Apr 2019 15:31:01 +0200},
  biburl    = {https://dblp.org/rec/bib/conf/fossacs/2019},
  bibsource = {dblp computer science bibliography, https://dblp.org}
}

@book{PePin04,
    author       = {Dominique Perrin and Jean{-}Eric Pin},
    title        = {Infinite words - automata, semigroups, logic and games},
    series       = {Pure and applied mathematics series},
    volume       = {141},
    publisher    = {Elsevier Morgan Kaufmann},
    year         = {2004},
    isbn         = {978-0-12-532111-2},
    timestamp    = {Mon, 15 Jul 2019 16:26:46 +0200},
    biburl       = {https://dblp.org/rec/books/daglib/0016866.bib},
    bibsource    = {dblp computer science bibliography, https://dblp.org}
}

@inproceedings{Farwer01,
  author       = {Berndt Farwer},
  editor       = {Erich Gr{\"{a}}del and
                  Wolfgang Thomas and
                  Thomas Wilke},
  title        = {omega-Automata},
  booktitle    = {Automata, Logics, and Infinite Games: {A} Guide to Current Research
                  [outcome of a Dagstuhl seminar, February 2001]},
  series       = {Lecture Notes in Computer Science},
  volume       = {2500},
  pages        = {3--20},
  publisher    = {Springer},
  year         = {2001},
  url          = {https://doi.org/10.1007/3-540-36387-4\_1},
  doi          = {10.1007/3-540-36387-4\_1},
  timestamp    = {Tue, 14 May 2019 10:00:51 +0200},
  biburl       = {https://dblp.org/rec/conf/dagstuhl/Farwer01.bib},
  bibsource    = {dblp computer science bibliography, https://dblp.org}
}

@article{DBLP:journals/fuin/Grabowski81,
  author    = {Jan Grabowski},
  title     = {On partial languages},
  journal   = fundinf,
  volume    = 4,
  number    = 2,
  pages     = 427,
  year      = 1981,
  timestamp = {Wed, 25 Feb 2004 12:11:49 +0100},
  biburl    = {http://dblp.uni-trier.de/rec/bib/journals/fuin/Grabowski81},
  bibsource = {dblp computer science bibliography, http://dblp.org}
}

@article{DBLP:journals/tcs/Glabbeek06,
  author    = {Rob J. van Glabbeek},
  title     = {\afirst{On} the Expressiveness of Higher Dimensional Automata},
  journal   = tcs,
  volume    = 356,
  number    = 3,
  year      = 2006,
  pages     = {265-290},
  ee        = {http://dx.doi.org/10.1016/j.tcs.2006.02.012},
  bibsource = {DBLP, http://dblp.uni-trier.de}}

@article{DBLP:journals/mscs/FahrenbergJSZ21,
  author    = {Uli Fahrenberg and
               Christian Johansen and
               Georg Struth and
               Krzysztof Ziemia{\'n}ski},
  title     = {Languages of Higher-Dimensional Automata},
  journal   = mscs,
  volume    = 31,
  number    = 5,
  pages     = {575--613},
  year      = 2021,
  url       = {https://doi.org/10.1017/S0960129521000293},
  doi       = {10.1017/S0960129521000293},
  timestamp = {Wed, 23 Feb 2022 11:16:44 +0100},
  biburl    = {https://dblp.org/rec/journals/mscs/FahrenbergJSZ21.bib},
  bibsource = {dblp computer science bibliography, https://dblp.org},
}

@article{Pratt86pomsets,
  title = {Modeling Concurrency with Partial Orders},
  author = {Pratt, Vaughan R.},
  pages = {33-71},
  year = 1986,
  journal = {Journal of Parallel Programming},
  number = 1,
  volume = 15
}

@InCollection{WinskelN95-Models,
  author = 	 {Winskel, Glynn and Nielsen, Mogens},
  title = 	 {Models for Concurrency},
  booktitle =    {Handbook of Logic in Computer Science},
  crossref =	 {handbook},
  chapter =	 4,
  type =         {volume}
}

@Book{handbook,
  title = 	 {Handbook of Logic in Computer Science},
  editor =	 {Abramsky, Samson and Gabbay, Dov M. and Maibaum,
                  Thomas S.E.},
  publisher = 	 {Clarendon Press, Oxford},
  year = 	 1995
}

@article{alur2023robust,
	author       = {Rajeev Alur and
	Caleb Stanford and
	Christopher Watson},
	title        = {A Robust Theory of Series Parallel Graphs},
	journal      = {Proc. {ACM} Program. Lang.},
	volume       = {7},
	number       = {{POPL}},
	pages        = {1058--1088},
	year         = {2023},
	url          = {https://doi.org/10.1145/3571230},
	doi          = {10.1145/3571230},
	timestamp    = {Fri, 10 Feb 2023 23:34:54 +0100},
	biburl       = {https://dblp.org/rec/journals/pacmpl/AlurSW23.bib},
	bibsource    = {dblp computer science bibliography, https://dblp.org}
}

@inproceedings{DBLP:conf/adhs/Fahrenberg18,
  author    = {Uli Fahrenberg},
  title     = {Higher-Dimensional Timed Automata},
  booktitle = {ADHS},
  pages     = {109--114},
  year      = 2018,
  crossref  = {DBLP:conf/adhs/2018},
  url       = {https://doi.org/10.1016/j.ifacol.2018.08.019},
  doi       = {10.1016/j.ifacol.2018.08.019},
  timestamp = {Mon, 08 Oct 2018 13:49:26 +0200},
  biburl    = {https://dblp.org/rec/bib/conf/adhs/Fahrenberg18},
  bibsource = {dblp computer science bibliography, https://dblp.org}
}

@proceedings{DBLP:conf/adhs/2018,
  editor    = {Alessandro Abate and
               Antoine Girard and
               Maurice Heemels},
  title     = {6th {IFAC} Conference on Analysis and Design of Hybrid Systems, {ADHS}
               2018, Oxford, UK, July 11-13, 2018},
  series    = {IFAC-PapersOnLine},
  volume    = 51,
  publisher = {Elsevier},
  year      = 2018,
  url       = {https://www.sciencedirect.com/journal/ifac-papersonline/vol/51/issue/16},
  timestamp = {Mon, 08 Oct 2018 13:49:16 +0200},
  biburl    = {https://dblp.org/rec/bib/conf/adhs/2018},
  bibsource = {dblp computer science bibliography, https://dblp.org}
}

@article{DBLP:journals/lites/Fahrenberg22,
	author    = {Uli Fahrenberg},
	title     = {Higher-Dimensional Timed and Hybrid Automata},
	journal   = lites,
	volume    = 8,
	number    = 2,
	pages     = {03:1--03:16},
	year      = 2022,
	url       = {https://doi.org/10.4230/LITES.8.2.3},
	doi       = {10.4230/LITES.8.2.3},
	timestamp = {Fri, 27 Jan 2023 15:21:48 +0100},
	biburl    = {https://dblp.org/rec/journals/lites/Fahrenberg22.bib},
	bibsource = {dblp computer science bibliography, https://dblp.org}
}

@article{DBLP:journals/iandc/FahrenbergJSZ22,
  author    = {Uli Fahrenberg and
               Christian Johansen and
               Georg Struth and
               Krzysztof Ziemia{\'n}ski},
  title     = {Posets With Interfaces as a Model for Concurrency},
  journal   = infcomp,
  volume    = 285,
  number    = {B},
  pages     = 104914,
  year      = 2022,
  url       = {https://doi.org/10.1016/j.ic.2022.104914},
  doi       = {10.1016/j.ic.2022.104914},
  timestamp = {Tue, 28 Jun 2022 21:08:18 +0200},
  biburl    = {https://dblp.org/rec/journals/iandc/FahrenbergJSZ22.bib},
  bibsource = {dblp computer science bibliography, https://dblp.org},
}

@inproceedings{DBLP:conf/concur/FahrenbergJSZ22,
  author =	{Fahrenberg, Uli and Johansen, Christian and Struth, Georg and Ziemia\'{n}ski, Krzysztof},
  editor =	{Klin, Bartek and Lasota, S{\l}awomir and Muscholl, Anca},
  title     = {A {Kleene} Theorem for Higher-Dimensional Automata},
  booktitle =	{CONCUR},
  series =	lipics,
  volume    = 243,
  pages     = {29:1--29:18},
  publisher = {Schloss Dagstuhl - Leibniz-Zentrum f{\"{u}}r Informatik},
  year      = 2022,
  url       = {https://doi.org/10.4230/LIPIcs.CONCUR.2022.29},
  doi       = {10.4230/LIPIcs.CONCUR.2022.29},
  timestamp = {Tue, 21 Mar 2023 20:57:51 +0100},
  biburl    = {https://dblp.org/rec/conf/concur/FahrenbergJSZ22.bib},
  bibsource = {dblp computer science bibliography, https://dblp.org},
}

@inproceedings{DBLP:journals/corr/abs-2210-08298,
  author       = {Uli Fahrenberg and
                  Krzysztof Ziemia{\'n}ski},
  editor       = {Lu{\'{\i}}s Gomes and
                  Robert Lorenz},
  title        = {A {Myhill}-{Nerode} Theorem for Higher-Dimensional Automata},
  booktitle    = {PETRI NETS},
  series       = lncs,
  volume       = 13929,
  pages        = {167--188},
  publisher    = springer,
  year         = 2023,
  url          = {https://doi.org/10.1007/978-3-031-33620-1\_9},
  doi          = {10.1007/978-3-031-33620-1\_9},
  timestamp    = {Wed, 31 May 2023 14:56:55 +0200},
  biburl       = {https://dblp.org/rec/conf/apn/FahrenbergZ23.bib},
  bibsource    = {dblp computer science bibliography, https://dblp.org}
}

@InProceedings{amrane.23.ictac,
  author       = {Amazigh Amrane and
                  Hugo Bazille and
                  Uli Fahrenberg and
                  Krzysztof Ziemia{\'n}ski},
  editor       = {Erika {\'{A}}brah{\'{a}}m and
                  Clemens Dubslaff and
                  Silvia Lizeth Tapia Tarifa},
  title        = {Closure and Decision Properties for Higher-Dimensional Automata},
  booktitle    = {ICTAC},
  series       = lncs,
  volume       = {14446},
  pages        = {295--312},
  publisher    = springer,
  year         = {2023},
  url          = {https://doi.org/10.1007/978-3-031-47963-2\_18},
  doi          = {10.1007/978-3-031-47963-2\_18},
  timestamp    = {Fri, 01 Dec 2023 18:39:19 +0100},
  biburl       = {https://dblp.org/rec/conf/ictac/AmraneBFZ23.bib},
  bibsource    = {dblp computer science bibliography, https://dblp.org}
}

@article{AMRANE2025115156,
title = {Closure and Decision Properties for Higher-Dimensional Automata},
journal = tcs,
volume = 1036,
pages = 115156,
year = 2025,
issn = {0304-3975},
doi = {https://doi.org/10.1016/j.tcs.2025.115156},
author = {Amazigh Amrane and Hugo Bazille and Uli Fahrenberg and Krzysztof Ziemia{\'n}ski},
note = {\url{https://arxiv.org/abs/2305.02873}}
}

@inproceedings{amrane2024languages,
	author       = {Amazigh Amrane and
	Hugo Bazille and
	Emily Clement and
	Uli Fahrenberg},
	editor       = {Lars Michael Kristensen and
	Jan Martijn E. M. van der Werf},
	title        = {Languages of Higher-Dimensional Timed Automata},
	booktitle    = {Application and Theory of Petri Nets and Concurrency - 45th International
	Conference, {PETRI} {NETS} 2024, Geneva, Switzerland, June 26-28,
	2024, Proceedings},
	series       = {Lecture Notes in Computer Science},
	volume       = {14628},
	pages        = {197--219},
	publisher    = {Springer},
	year         = {2024},
	url          = {https://doi.org/10.1007/978-3-031-61433-0\_10},
	doi          = {10.1007/978-3-031-61433-0\_10},
	timestamp    = {Wed, 19 Jun 2024 22:24:05 +0200},
	biburl       = {https://dblp.org/rec/conf/apn/AmraneBCF24.bib},
	bibsource    = {dblp computer science bibliography, https://dblp.org}
}

@inproceedings{Kuske00,
  author       = {Dietrich Kuske},
  editor       = {Ugo Montanari and
                  Jos{\'{e}} D. P. Rolim and
                  Emo Welzl},
  title        = {Infinite Series-Parallel Posets: Logic and Languages},
  booktitle    = {Automata, Languages and Programming, 27th International Colloquium,
                  {ICALP} 2000, Geneva, Switzerland, July 9-15, 2000, Proceedings},
  series       = lncs,
  volume       = 1853,
  pages        = {648--662},
  publisher    = springer,
  year         = 2000,
  url          = {https://doi.org/10.1007/3-540-45022-X\_55},
  doi          = {10.1007/3-540-45022-X\_55},
  timestamp    = {Tue, 14 May 2019 10:00:44 +0200},
  biburl       = {https://dblp.org/rec/conf/icalp/Kuske00.bib},
  bibsource    = {dblp computer science bibliography, https://dblp.org}
}

@InProceedings{Winskel87,
    author="Winskel, Glynn",
    editor="Brauer, W. and Reisig, W. and Rozenberg, G.",
    title="Event structures",
    booktitle="Petri Nets: Applications and Relationships to Other Models of Concurrency",
    year="1987",
    publisher="Springer Berlin Heidelberg",
    address="Berlin, Heidelberg",
    pages="325--392",
    isbn="978-3-540-47926-0"
}

@inproceedings{PIPI,
	author       = {Amazigh Amrane and
	Hugo Bazille and
	Emily Clement and
	Uli Fahrenberg and
	Krzysztof Ziemianski},
	editor       = {Uli Fahrenberg and
	Wesley Fussner and
	Roland Gl{\"{u}}ck},
	title        = {Presenting Interval Pomsets with Interfaces},
	booktitle    = {Relational and Algebraic Methods in Computer Science - 21st International
	Conference, RAMiCS 2024, Prague, Czech Republic, August 19-22, 2024,
	Proceedings},
	series       = {Lecture Notes in Computer Science},
	volume       = {14787},
	pages        = {28--45},
	publisher    = {Springer},
	year         = {2024},
	url          = {https://doi.org/10.1007/978-3-031-68279-7\_3},
	doi          = {10.1007/978-3-031-68279-7\_3},
	timestamp    = {Tue, 20 Aug 2024 15:43:59 +0200},
	biburl       = {https://dblp.org/rec/conf/RelMiCS/AmraneBCFZ24.bib},
	bibsource    = {dblp computer science bibliography, https://dblp.org}
}

@book{Kunen80,
	author = {Kunen, Kenneth},
	editor = {},
	publisher = {North-Holland},
	title = {Set Theory: An Introduction to Independence Proofs},
	year = {1980}
}

@inproceedings{lodaya98kleene,
	author =        {Lodaya, K. and Weil, P.},
	booktitle =     {Foundations of Software Technology and Theoretical
	Computer Science},
	pages =         {355--366},
	title =         {A {K}leene Iteration for Parallelism},
	year =          {1998}
}

@InProceedings{LW98:Algebra,
	author="Lodaya, K. and Weil, P.",
	title="Series-parallel posets: algebra, automata and languages",
	booktitle="STACS'98",
	pages="555--565",
	year="1998",
	volume="1373",
	series=LNCS,
	publisher=SPRINGER,
}

@Article{LW00:sp,
	author="Lodaya, K. and Weil, P.",
	title="Series-parallel languages and the bounded-width property",
	journal=TCS,
	year="2000",
	volume="237",
	number="1--2",
	pages="347--380",
}

@Article{lodaya01kleene,
	author="Lodaya, K. and Weil, P.",
	title="Rationality in algebras with a series operation",
	journal=infcomp,
	year="2001",
	volume="171",
	pages="269--293",
}

@inproceedings{DBLP:conf/apn/FanchonM09,
	author    = {Jean Fanchon and
	R{\'{e}}mi Morin},
	title     = {Pomset Languages of Finite Step Transition Systems},
	booktitle = {PETRI NETS},
	pages     = {83--102},
	year      = 2009,
	crossref  = {DBLP:conf/apn/2009},
	url       = {http://dx.doi.org/10.1007/978-3-642-02424-5_7},
	doi       = {10.1007/978-3-642-02424-5_7},
	timestamp = {Tue, 23 Jun 2009 11:07:27 +0200},
	biburl    = {http://dblp.uni-trier.de/rec/bib/conf/apn/FanchonM09},
	bibsource = {dblp computer science bibliography, http://dblp.org}
}

@proceedings{DBLP:conf/apn/2009,
  editor    = {Giuliana Franceschinis and
               Karsten Wolf},
  title     = {Applications and Theory of Petri Nets, 30th
                  International Conference, {PETRI} {NETS} 2009,
                  Paris, France, June 22-26, 2009. Proceedings},
  series    = lncs,
  volume    = 5606,
  publisher = springer,
  year      = 2009,
  url       = {http://dx.doi.org/10.1007/978-3-642-02424-5},
  doi       = {10.1007/978-3-642-02424-5},
  isbn      = {978-3-642-02423-8},
  timestamp = {Tue, 23 Jun 2009 11:05:41 +0200},
  biburl    = {http://dblp.uni-trier.de/rec/bib/conf/apn/2009},
  bibsource = {dblp computer science bibliography, http://dblp.org}
}

@article{GKM06,
	author       = "Blaise Genest and
	Dietrich Kuske and
	Anca Muscholl",
	title        = "{A {Kleene} theorem and model checking algorithms for existentially bounded communicating automata}",
	journal      = "Information and Computation",
	year         = 2006,
	volume       = 204,
	number       = 6,
	pages        = "920--956",
	doi       = {10.1016/j.ic.2006.01.005},
}

@phdthesis{Bednarczyk87-async,
	author = {Bednarczyk, Marek A.},
	title = {Categories of Asynchronous Systems},
	year = {1987},
	school = {University of Sussex, UK}
}

@inproceedings{muller1963infinite,
	title={Infinite sequences and finite machines},
	author={Muller, David E},
	booktitle={Proceedings of the Fourth Annual Symposium on Switching Circuit Theory and Logical Design (swct 1963)},
	pages={3--16},
	year={1963},
	organization={IEEE Computer Society}
}

@incollection{buchi1966symposium,
	title={Symposium on decision problems: On a decision method in restricted second order arithmetic},
	author={B{\"u}chi, J Richard},
	booktitle={Studies in Logic and the Foundations of Mathematics},
	volume={44},
	pages={1--11},
	year={1966},
	publisher={Elsevier}
}

@article{mcnaughton1966testing,
	title={Testing and generating infinite sequences by a finite automaton},
	author={McNaughton, Robert},
	journal={Information and control},
	volume={9},
	number={5},
	pages={521--530},
	year={1966},
	publisher={Elsevier}
}

@article{carton2008automata,
	title={Automata and semigroups recognizing infinite words.},
	author={Carton, Olivier and Perrin, Dominique and Pin, Jean-Eric},
	journal={Logic and automata},
	volume={2},
	pages={133--168},
	year={2008}
}

@article{ebinger1996logical,
	title={Logical definability on infinite traces},
	author={Ebinger, Werner and Muscholl, Anca},
	journal={Theoretical Computer Science},
	volume={154},
	number={1},
	pages={67--84},
	year={1996},
	publisher={Elsevier}
}

@inproceedings{diekert1993deterministic,
	title={Deterministic asynchronous automata for infinite traces},
	author={Diekert, Volker and Muscholl, Anca},
	booktitle={Annual Symposium on Theoretical Aspects of Computer Science},
	pages={617--628},
	year={1993},
	organization={Springer}
}

@article{gastin1994extension,
	title={An extension of {Kleene}'s and {Ochma{\'n}ski}'s theorems to infinite traces},
	author={Gastin, Paul and Petit, Antoine and Zielonka, Wies{\l}aw},
	journal={Theoretical Computer Science},
	volume={125},
	number={2},
	pages={167--204},
	year={1994},
	publisher={Elsevier}
}

@article{Zielonka87,
	author    = {Wies{\l}aw Zielonka},
	title     = {Notes on Finite Asynchronous Automata},
	journal   = {RAIRO -- Informatique {T}h{\'e}orique et
	Applications},
	volume    = {21},
	number    = {2},
	pages     = {99--135},
	year      = {1987},
	doi       = {10.1051/ita/1987210200991},
}

@InProceedings{Pratt91-geometry,
	author =	 {Pratt, Vaughan R.},
	title =	 {Modeling Concurrency with Geometry},
	booktitle =	 {POPL},
	year =	 1991,
	pages =        {311--322},
	publisher =	 {ACM Press},
	address = {New York City},
	annote =         {\url{http://boole.stanford.edu/pub/cg.ps.gz}}
}

@Misc{Glabbeek91-hda,
	author =	 {van Glabbeek, Rob J.},
	title =	 {Bisimulations for Higher Dimensional Automata},
	howpublished = {Email message},
	year =	 1991,
	month = {June},
	note =	 {\url{http://theory.stanford.edu/~rvg/hda}}
}

@inproceedings{amrane2024logic,
	title={Logic and languages of higher-dimensional automata},
	author={Amrane, Amazigh and Bazille, Hugo and Fahrenberg, Uli and Fortin, Marie},
	booktitle={International Conference on Developments in Language Theory},
	pages={51--67},
        url = {https://doi.org/10.1007/978-3-031-66159-4_5},
        doi = {10.1007/978-3-031-66159-4_5},
	year={2024},
	organization={Springer}
}

@inproceedings{amrane2025petrinetshigherdimensionalautomata,
  title={Petri Nets and Higher-Dimensional},
  author={Amrane$^1$, Amazigh and Bazille, Hugo and Fahrenberg, Uli and H{\'e}lou{\"e}t, Lo{\"\i}c},
  booktitle={Application and Theory of Petri Nets and Concurrency: 46th International Conference, PETRI NETS 2025, Paris, France, June 22--27, 2025, Proceedings},
  volume={15714},
  pages={18},
  year={2025},
  organization={Springer Nature}
}

@incollection{conf/birthday/AmraneBCFS25,
  author =        {Amrane, Amazigh and Bazille, Hugo and Clement, Emily and
                   Fahrenberg, Uli and Schlehuber-Caissier, Philipp},
  booktitle = {Design and Verification of Cyber-Physical Systems},
  editor = {Andreas Rauh and Bernd Finkbeiner and Paul Kr{\"o}ger},
  publisher = springer,
  title =         {Higher-Dimensional Timed Automata for Real-Time Concurrency},
  year =          {2025},
  note = {To be published. \url{https://arxiv.org/abs/2401.17444}},
}

@article{journals/mpsy/Fishburn70,
	title = "Intransitive Indifference with Unequal Indifference
	Intervals",
	journal = "Journal of Mathematical Psychology",
	volume = 7,
	number = 1,
	pages = "144-149",
	year = 1970,
	issn = "0022-2496",
	doi = "http://dx.doi.org/10.1016/0022-2496(70)90062-3",
	url = "http://www.sciencedirect.com/science/article/pii/0022249670900623",
	author = "Fishburn, Peter C."
}

@article{gischer1988equational,
	title={The equational theory of pomsets},
	author={Gischer, Jay L},
	journal={Theoretical Computer Science},
	volume={61},
	number={2-3},
	pages={199--224},
	year={1988},
	publisher={Elsevier}
}

@inproceedings{duret.22.cav,
  author = {Alexandre~Duret-Lutz and Etienne Renault and Maximilien
		  Colange and Florian Renkin and Alexandre Gbaguidi~Aisse and
		  Philipp Schlehuber-Caissier and Thomas Medioni and Antoine
		  Martin and J{\'e}r{\^o}me Dubois and Cl{\'e}ment Gillard
		  and Henrich Lauko},
  title = {From {S}pot 2.0 to {S}pot 2.10: What's New?},
  booktitle = {34th International Conference on Computer Aided Verification (CAV'22)},
  year = 2022,
  volume = 13372,
  series = lncs,
  pages = {174--187},
  publisher = springer,
  doi = {10.1007/978-3-031-13188-2_9}
}
\newpage\appendix
\section{Complementary proofs of Section~\ref{sec:infinite-ipomsets}}
\label{app:proof-infinite-ipomset}

\begin{proof}[Proof of Proposition~\ref{prop:unicity-iso}]

Let $P$ be an $\omega$-ipomset and define $P_0$ as the set of $<_P$-minimal elements of $P$ and $P_{i+1}$ the set of $<_P$-minimal elements of $P \setminus \{P_0,\dots,P_i\}$ for all $i < \omega$.
Note that each $P_i$ is an $<_P$-antichain and linearly ordered by $\evord_P$.
Define the binary relation $\mc{R}_P$ by $x \mc{R}_P y$ iff ($x \in P_i$, $y \in P_j$ and $i <j$) or $(x,y \in P_i$ and $x \evord_P y$).
Then $(P,\mc{R}_P)$ is a strict well-order.
In addition,  an isomorphism between $\omega$-ipomsets $P$ and $Q$	is an isomorphism between $(P,\mc{R}_P)$ and $(Q,\mc{R}_Q)$.
We conclude by noting that well-ordering isomorphisms are unique (see~\cite[Lem.~6.2]{Kunen80}).
\end{proof}

%\begin{proof}[Proof of unicity of well-ordering isomorphisms]
%Let $(P,\mc{R}_P)$ and $(Q,\mc{R}_Q)$ be two well-ordered sets, and $f,g : P \maps Q$ two isomorphisms and suppose that $f \neq g$. Let $x$ be the $\mc{R}_P$-least element of $\set{ p \in P \midbar f(p) \neq g(p)}$. Let $q_1 = f(x)$ and $q_2 = g(x)$ (so $q_1 \neq q_2$), as $f$ and $g$ are bijections there is $a,b \in P \sth g(a) = q_1$ and $f(b) = q_2$. $\mc{R}_Q$ is total so either $q_1 \mc{R}_Q q_2$ or $q_2 \mc{R}_Q q_1$. If $q_1 \mc{R}_Q q_2$ then $g(a) \mc{R}_Q g(x)$, and because $g$ is an isomorphism of $\omega$-ipomsets, it implies that $a \mc{R}_P x$. By minimality of $x$, $f(a)=g(a)$ so $f(a) = f(x)$, by injectivity $a=x$ and so $q_1=g(a)=g(x)=q_2$, contradiction. Symmetrically, if $q_2 \mc{R}_Q q_1$, we get that $b=x$ and so $q_2=f(b)=f(x)=q_1$, contradiction.
%\end{proof}

\begin{proof}[Proof of Lemma~\ref{lem:infinite-product-well-defined}]
	
	Let $(P_i)_{i \in \Nat} \in \ipoms^\Nat$ such that $S_{P_i} \cong T_{P_{i\+1}}$, $I = \set{i \in \Nat \midbar P_i \not\in \Id}$ is infinite and there is $m \in \Nat$ such that $\ms{wd}(P_i) \leq m$. We want to show that it is a valid $\omega$-ipomset.
	It is routine to check that $P = P_0 * P_1 * \cdots$ is well-defined, however it can be a finite ipomset (for example for a constant sequence in $\Id$), or can define a not valid $\omega$-ipomset.  
	\begin{itemize}
		\item Infinity:
		By contradiction, suppose $P$ is finite. If so, there is an index $i_0$ such that, for $i > i_0$, $P_i$ does not start any event. Let $Q = P_0 * ... * P_{i_0} \in \ipoms$. For $i \in I$, $P_i \not\in \Id$ so it has to start or finish at least one event. Let $I' = I \cap \set{i \in \Nat \midbar i > i_0}$, as $I$ is infinite, $I'$ is infinite. For $ i \in I'$, $P_i$ has to end at least one event of $Q$. But $|Q|$ is finite and $I'$ is infinite, contradiction.
		\item Finite past:
		Let $z \in P$, there is $j \sth z \in P_j$. By definition of $<_P$, if $y <_P z$ then there is $i \leq j$ such that $y \in P_i$. So the $<_P$-predecessors of $z$ are in $(\bigcup_{i \leq j} (P_i,i) )_{/x \sim f(x)}$ which is finite.
		\item No infinite $<_P$-antichain:
		Let $A$ be an $<_P$-antichain, then there is $i$ such that $A \subset P_i$. So $|A| \leq \ms{wd}(P_i) \leq m$, so there is no infinite $<_P$-antichain.
	\end{itemize}
	Thus, $P$ is a valid $\omega$-ipomset.
\end{proof}

The following lemma is needed for the proof of Prop.~\ref{prop:eq-wiipoms} (see \cite{DBLP:journals/corr/abs-2503-07881} for a proof):
\begin{lemma}
	\label{lem:<<-linear}
	Let $P \in \wipoms$ and $E$ be its set of maximal $<_P$-antichains. If $\ll$ is a linear order on $E$, then $(E,\ll) \cong (\Nat,<)$.
\end{lemma}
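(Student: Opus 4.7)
The plan is to show that $(E,\ll)$ is a linear order in which every element has only finitely many $\ll$-predecessors and whose underlying set is infinite; any such order must be isomorphic to $(\Nat,<)$.

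First, I would prove that for every $A\in E$, the set $D(A)=\set{B\in E\midbar B\ll A}$ is finite. Consider $S_A=A\cup\set{x\in P\midbar \exists a\in A,\, x<_P a}$, that is, $A$ together with the union of the strict pasts of its elements. Since $P$ is a valid $\omega$-ipomset, $A$ is finite (finite antichains) and each element of $P$ has finite past, so $S_A$ is finite. I would then show that every $B\in D(A)$ satisfies $B\subset S_A$: take $b\in B$; either $b\in A\subset S_A$, or $b\notin A$, in which case maximality of $A$ forces $b$ to be $<_P$-comparable with some $a\in A$, but $B\ll A$ rules out $a<_P b$, so $b<_P a$ and hence $b\in S_A$. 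Thus $D(A)\subset 2^{S_A}$ is finite.

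Second, I would show that $E$ is infinite. Every $x\in P$ belongs to some maximal antichain (the antichain $\set{x}$ extends to a maximal one because $<_P$-antichains are finite), so $\bigcup E=P$. If $E$ were finite, $P$ would be a finite union of finite sets and hence finite, contradicting $P\in\wipoms$.

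Finally, I combine these observations. By finiteness of $D(A)$ and linearity of $\ll$, every nonempty $F\subset E$ has a $\ll$-minimum: for any $A\in F$, the minimum of the finite linearly-ordered subset $(D(A)\cup\set{A})\cap F$ is the $\ll$-minimum of $F$. Iteratively defining $A_0$ as the minimum of $E$ and $A_{n\+1}$ as the minimum of $E\setminus\set{A_0,\ldots,A_n}$ yields a chain $A_0\ll A_1\ll\cdots$. Any $B\in E$ must appear in this enumeration: otherwise linearity gives $A_n\ll B$ for all $n$ (since $B\ll A_n$ would force $B\in\set{A_0,\ldots,A_{n-1}}$ by the minimality defining $A_n$), whence $\set{A_0,A_1,\ldots}\subset D(B)$, contradicting $|D(B)|<\infty$. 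The map $n\mapsto A_n$ is then the required isomorphism $(\Nat,<)\cong(E,\ll)$. The main obstacle is the first step: extracting the inclusion $B\subset S_A$ by combining the maximality of antichains in $E$ with the defining property of $\ll$; once this is in place the rest is a routine well-ordering argument.
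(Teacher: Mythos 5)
Your proof is correct. Note that the paper does not actually contain a proof of this lemma --- it is stated in the appendix with a pointer to the long version --- so there is no in-paper argument to compare against; what you have written is a complete, self-contained proof. Your three steps are all sound and use the two validity conditions exactly where they are needed: finiteness of pasts makes $S_A=A\cup\bigcup_{a\in A}\set{x \midbar x<_P a}$ finite, while finiteness of antichains makes $A$ itself finite, guarantees that every singleton $\set{x}$ extends to a \emph{maximal} antichain (the greedy extension must terminate, else it would build an infinite antichain), and hence that $E$ covers the countably infinite set $P$ and is infinite. The pivotal deduction in the first step is also right: for $b\in B\setminus A$ with $B\ll A$, maximality of $A$ gives some $a\in A$ comparable with $b$, and $B\ll A$ means precisely that $a\nless_P b$ for every $a\in A$, forcing $b<_P a$ and thus $B\subseteq S_A$, so $D(A)$ embeds into the finite power set of $S_A$. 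The final extraction of an $\omega$-enumeration from ``linear, infinite, and every element has finitely many predecessors'' is routine and correctly executed.
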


\begin{proof}[Proof of Proposition~\ref{prop:eq-wiipoms}]
	We want to prove the equivalence between (1) $P$ is an interval $\omega$-ipomset, (2) $P$ has an interval representation and (3) the order $\ll$ on maximal antichains is linear.
	
	\begin{itemize}
		\item (2) $\implies$ (1) :
		Let $w,x,y,z \in P$ with $w <_P y$ and $x <_P z$. By (2) we take an interval representation $b,e$, then $e(w) < b(y)$, $e(x) < b(z)$. Suppose that $w \nless_P z$, by (2) $e(w) \nless b(z)$, as $<$ is linear on $\Nat \cup \set{+\infty}$, $b(z) \leq e(w)$, and thus $e(x) < b(y)$ so by (2), $x <_P y$. Thus, $w <_P z$ or $x <_P y$.        
		\item (1) $\implies$ (3) :
		Let $A$ and $B$ be two different maximal antichains and suppose that $A \not\ll B$ and $B \not\ll A$. Then there is $(x,y) \in A \times B$ such that $x < y $ and  $(x',y') \in A \times B$ such that $y' < x'$. By (1)  we have $x < x'$ or $y' < y$, but $A$ and $B$ are $<$-antichains, contradiction.
		\item (3) $\implies$ (2) :
		By Lem.~\ref{lem:<<-linear}, we can order the set $E$ of maximal antichains of $P$ such that $E = \set{ A_0,\, A_1,\, A_2,\,  ...}$ and $A_0 \ll A_1 \ll A_2 \ll ...$. We define $b(x) = \inf{i \midbar x \in X_i}$ and $e(x) = \sup{i \midbar x \in A_i}$. By definition we have $b(x) \leq e(x)$. Let $x,y \in P \sth x <_P y$, then for all $i < j$ if $x \in A_j$, we have $A_i \ll A_j$ so $y \not\in A_i$, and $y \not\in A_j$ (otherwise $X_j$ is not an antichain), thus $e(x) < b(y)$.  Let $x,y \in P \sth x \not<_P y$. Either $y <_P x$, then by the previous point $e(y) < b(x)$ and so $b(y) < e(x)$, or we have $y \not<_P x$ thus $\set{x,y}$ is include in a maximal antichain of $P$, then there is $i$ such that $x,y \in A_i$ and so $b(y) \leq i \leq e(x)$. Thus $x <_P y \equivalent e(x) < b(y)$.
	\end{itemize}
	So we have (1) $\equivalent$ (2) $\equivalent$ (3).
\end{proof}

Before proving Th.~\ref{th:sparse-dec} we need lemma that characterize the nature (starter versus terminator) of the first element of the decomposition (Lem.~\ref{lem:starter-vs-terminator}), and one that characterise it solely from the $P$ considered (Lem.~\ref{lem:def-start}). We define $P^m$ the $<_P$-minimal elements of $P$ (which can be seen as a conclist with the induced event order of $P$) and $P^s := \set{p\in P \midbar \forall q \in P - P^m,\, p < q} \subset P^m$.

\begin{lemma}
	\label{lem:po=pm}
	Let $ P = P_0 * P_1 * \dots$ be a sparse decomposition, then $P_0 = P^m$ (seen as conclists).
\end{lemma}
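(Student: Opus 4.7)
The plan is to establish the equality at two levels: first as underlying sets of events, $P_0 = P^m$, and then as conclists, by checking that the induced event order and labeling coincide. Since gluing preserves labels, the labeling comparison is immediate, and the main work splits into two set-theoretic inclusions together with one comparison of event orders.

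For the easy inclusion $P_0 \subseteq P^m$, I take $x \in P_0$ and rule out any $<_P$-predecessor $y$. Within $P_0$ this is impossible since $P_0 \in \Xi$ is discrete, so $<_{P_0} = \emptyset$. For $y \in P_i$ with $i > 0$, the clauses defining $<_P$ in the infinite gluing can only place $P_0$-elements \emph{below} later-component elements, never the other way around, so no such $y$ can satisfy $y <_P x$. For the event-order comparison, I observe that $\evord_P$ restricted to $P_0$ extends the total order $\evord_{P_0}$ by the transitive-closure definition of $\evord_P$; then totality of $\evord_{P_0}$ combined with irreflexivity and transitivity (hence antisymmetry) of $\evord_P$ forces the two relations to coincide: any $x \evord_P y$ in $P_0$ not already in $\evord_{P_0}$ would, by totality, yield $y \evord_{P_0} x$, hence $y \evord_P x$, contradicting antisymmetry.

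The reverse inclusion $P^m \subseteq P_0$ is the crux, and this is where sparseness enters. Given $y \in P \setminus P_0$, let $j \geq 1$ be the first step at which $y$ appears, so $y \in P_j \setminus S_{P_j}$. Because $S_U = U$ for any terminator $U$, the step $P_j$ must be a proper starter. I then exploit the alternating pattern: since $P_j$ is a proper starter, $P_{j-1}$ must be a proper terminator, so $P_{j-1}\setminus T_{P_{j-1}}$ is non-empty; picking any $x$ in this set yields $x <_P y$ via the gluing clause for consecutive components, forcing $y \notin P^m$. The main obstacle is the boundary case $j = 1$ together with the two possible types of $P_0$. If $P_0$ is itself a proper starter, then alternation forces $P_1$ to be a proper terminator, so the case $j = 1$ cannot arise (it would require $P_1$ to be a starter) and one always has $j \geq 2$ with $P_{j-1}$ a proper terminator. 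If instead $P_0$ is a proper terminator, then $j = 1$ is directly handled by taking $x \in P_0 \setminus T_{P_0}$, and $j \geq 3$ by the same alternation argument. A unified parity analysis on the position of starters versus terminators covers both subcases, completing the inclusion.
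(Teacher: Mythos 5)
Your proof is correct and follows essentially the same route as the paper's: the inclusion $P_0 \subseteq P^m$ comes from discreteness of $P_0$ and the forward-only gluing order, and the converse uses alternation to locate a proper terminator whose non-target events precede any event first appearing outside $P_0$. The only cosmetic differences are that you pick the terminator witness in $P_{j-1}$ for each element $y$ rather than fixing it once in $P_0$ or $P_1$ as the paper does, and that you additionally spell out the (routine) agreement of event orders and labels, which the paper leaves implicit in ``seen as conclists''.
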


\begin{proof}[Proof of Lemma~\ref{lem:po=pm}]
	By definition, we already have that $P_0 \subset P^m$. Let's show that $P - P_0 \subset P - P^m$ (which is equivalent to $P^m \subset P_0$). Let $x \in P - P_0$, there is $i \geq 1 \sth x \in  P_i - S_{P_i}$.
	\begin{itemize}
		\item If $P_0$ is a (proper) terminator:
		it exists $y \in P_0 - T_{P_0}$ and thus $y < x$
		\item  If $P_0$ is a starter:
		then $P_1$ is a (proper) terminator so it exists $y \in P_1 - T_{P_1}$, and $i \geq 2$ (because $P_1 = S_{P_1}$), thus $y < x$
	\end{itemize}
	In both cases there is an element before $x$ so $x \not\in P^m$, so $P - P_0 \subset P - P^m$.
\end{proof}

\begin{lemma}
	\label{lem:starter-vs-terminator}
	Let $P = P_0 * P_1 * \dots$ be a sparse decomposition, then $P_0$ is a starter iff $S_P \subsetneq P^m$ (equivalently, $P_0$ is a terminator iff $S_P = P^m$).
\end{lemma}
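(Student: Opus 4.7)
The plan is straightforward: reduce the statement to a computation of $S_{P_0}$ by using Lemma~\ref{lem:po=pm} and the definition of infinite gluing, then do a two-case analysis according to whether $P_0$ is a proper starter or a proper terminator (the only two possibilities for a sparse decomposition).

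First, I would recall two observations. By Lemma~\ref{lem:po=pm}, $P^m = P_0$ as conclists, so the comparison $S_P \subsetneq P^m$ can be rephrased as $S_P \subsetneq P_0$. Next, the definition of infinite gluing directly yields $S_P = S_{P_0} \times \{0\}$, which is $S_{P_0}$ up to the canonical isomorphism. Hence the question reduces to comparing $S_{P_0}$ with $P_0$.

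Then I would split cases using sparseness, which forces $P_0$ to be proper (a proper starter or a proper terminator). If $P_0$ is a proper starter $\starter{U}{A}$ with $A \neq \emptyset$, then by definition $S_{P_0} = U - A = P_0 - A \subsetneq P_0 = P^m$, so $S_P \subsetneq P^m$. If $P_0$ is a proper terminator $\terminator{U}{A}$, then $S_{P_0} = U = P_0 = P^m$, so $S_P = P^m$. Since these two cases are exhaustive and mutually exclusive, the equivalences stated in the lemma follow immediately.

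There is no real obstacle here: the content of the lemma is just an unpacking of the definitions of starter, terminator, and the source interface of an infinite gluing, once Lemma~\ref{lem:po=pm} has been invoked to identify $P_0$ with the conclist of minimal elements $P^m$. The only subtlety to mention is that a sparse decomposition by definition excludes identities among the $P_i$, which is precisely what lets us conclude $A \neq \emptyset$ in the starter case and thereby get strict inclusion rather than equality.
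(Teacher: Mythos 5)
Your proof is correct and follows essentially the same route as the paper's: invoke Lemma~\ref{lem:po=pm} to identify $P_0$ with $P^m$, note $S_P = S_{P_0}$ from the gluing definition, and then observe that a proper starter has $S_{P_0} \subsetneq P_0$ while a proper terminator has $S_{P_0} = P_0$. The paper compresses this into one line, but the case analysis you spell out (including the role of properness in getting strict inclusion) is exactly what is implicit there.
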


\begin{proof}[Proof of Lemma~\ref{lem:starter-vs-terminator}]
	Using Lem.~\ref{lem:po=pm}, $P_0 = P^m$, and by definition $S_{P_0} = S_P$. Then we have that $P_0$ is a starter iff $ S_P = S_{P_0} \subsetneq P_0 = P^m$.
\end{proof}

\begin{lemma}
	\label{lem:def-start}
	Let $P = P_0 * P_1 * \dots$ be a sparse decomposition:
	\begin{itemize}
		\item
		If $P_0$ is a starter, $P_0 =\ \starter{P^m}{P^m - S_P}$
		\item
		If $P_0$ is a terminator, $P_0 = \terminator{P^m}{P^s}$
	\end{itemize}
\end{lemma}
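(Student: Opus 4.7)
The backbone of the proof is Lemma \ref{lem:po=pm}: we already know that the underlying conclist of $P_0$ coincides with $P^m$. So in both cases the task is only to identify which subset of events is ``acted on'' (started or terminated) by $P_0$. I would treat the two cases in turn, the starter case being an essentially trivial unfolding of definitions and the terminator case being the interesting one.

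For the starter case, write $P_0 = \starter{U}{B} = \ilo{U-B}{U}{U}$ by definition. Then $U = P_0 = P^m$ by Lemma \ref{lem:po=pm}, and $S_{P_0} = S_P$ because $P_0$ is the first factor of the gluing (so no identification shrinks the source). Hence $U - B = S_P$, giving $B = P^m - S_P$ and $P_0 = \starter{P^m}{P^m - S_P}$.

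For the terminator case, write $P_0 = \terminator{U}{B}$ with $U = P^m$ (again by Lemma \ref{lem:po=pm}), so the claim reduces to $B = P^s$, where $B = P_0 - T_{P_0}$ is the set of events terminated at step~$0$. I would prove the two inclusions separately. For $B \subseteq P^s$: if $p \in B$, then $p$ does not survive to $P_1$, so every $q \in P - P^m = P - P_0$, which first appears at some step $j \geq 1$ as an element of $P_j - S_{P_j}$, satisfies $p <_P q$ by the infinite-gluing definition of $<_P$ (using transitivity to bridge $j > 1$). Hence $p \in P^s$. For the converse $P^s \subseteq B$, suppose $p \in P_0 \setminus B = T_{P_0}$. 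Then $p$ is identified with some element of $S_{P_1}$ and survives into $P_1$. Sparseness and the fact that $P_0$ is a proper terminator force $P_1$ to be a proper starter, so there exists $q' \in P_1 - S_{P_1}$; let $q$ denote the corresponding event of $P$, which lies in $P - P^m$. Both $p$ and $q$ live in $P_1$, where $<_{P_1}$ is empty (the precedence order of a starter is empty). Therefore $p \not<_P q$, contradicting $p \in P^s$. This gives the converse inclusion and finishes the proof.

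The main subtlety is the second inclusion: one must carefully use sparseness (to guarantee that $P_1$ exists and is a proper starter, hence contains a genuinely new event $q \in P_1 - S_{P_1}$) and the fact that starters have empty precedence order (to conclude $p \not<_P q$ even though $p$ and $q$ both belong to the same factor $P_1$). A minor additional point worth mentioning is that the argument relies on $P_1$ existing, which is automatic in our infinite setting since $P$ is an $\omega$-ipomset, so there is no degenerate last-factor case to handle.
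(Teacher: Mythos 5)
Your proof is correct and follows essentially the same route as the paper's: reduce both cases to $P_0 = P^m$ via Lemma~\ref{lem:po=pm}, unfold definitions for the starter case, and for the terminator case prove $P_0 - T_{P_0} = P^s$ by two inclusions, using that $P_1$ is a proper starter with empty precedence order to show elements of $T_{P_0}$ are not in $P^s$. The only cosmetic difference is that you spell out slightly more explicitly why $p \not<_P q$ when both lie in $P_1$, which the paper leaves implicit.
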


\begin{proof}[Proof of Lemma~\ref{lem:def-start}]
	Using Lem.~\ref{lem:po=pm}, $P_0 = P^m$:
	\begin{itemize}
		\item
		If $P_0$ is a starter, by Lem.~\ref{lem:starter-vs-terminator} $S_P = S_{P_0} \subsetneq P_0 = P^m$ and so $P_0 = \starter{P_0}{P_0 - S_{P_0}} = \starter{P^m}{P^m - S_P}$.
		\item 
		If $P_0$ is a terminator, we want to show that $P^s = P^m - T_{P_0}$, thus $P_0 = \terminator{P^m}{P^s}$. First let $ p \in T_{P_0}$, then $ p \in S_{P_1} \subset P_1$. $P_1$ is a (proper) starter so there is $q \in P_1 -  S_{P_1}$ so $q \not\in P_0 = P^m$ and $p \not< q$ so $p \not\in P^s$, thus $P^s \subset P^m - T_{P_0}$. Then let $ p \in P^m - T_{P_0} = P_0 - T_{P_0}$ and $q \in P - P^m = P - P_0$, thus there is $i \geq 1 \sth q \in P_i - S_{P_i}$, so $p < q$. Thus $p \in P^s$ and $P^m - T_{P_0} \subset P^s$.
	\end{itemize}
	Which concludes the proof.
\end{proof}

We can now prove the fundamental decomposition of $\omega$-iipomsets:

\begin{proof}[Proof of Theorem~\ref{th:sparse-dec}]
	
	Let $P \in \wiipoms$, by Lem.~\ref{lem:<<-linear}, as $\ll$ is linear by Prop.~\ref{prop:eq-wiipoms}, we have that $X_0 \ll X_1 \ll \dots$ . For all $i \in \Nat$, we can define $E_i = X_i - X_{i\+1}$ and $B_i = X_{i\+1} - X_i$. As $X_i$ are maximal antichains, the sets are not empty and $\forall (x,y) \in E_i \times B_i,\, x <_P y$. We then define $U_{2i} = \terminator{X_i}{E_i}$ and $U_{2i\+1} =\ \starter{X_{i\+1}}{B_i}$ which are proper (the event order on $U_{2i}$ and $U_{2i\+1}$ is the one induced by $\evord_P$). If $S_P \subsetneq X_0$, we shift all indices by 1 and redefine $U_0 = \starter{X_0}{X_0 - S_P}$. Thus $U_0*U_1*U_2* \dots$ is a sparse decomposition, and it is routine to check that $P = U_0 * U_1 * U_2 \dots$, concluding of the existence of a sparse decomposition.
	
	Let $P = P_0 * P_1 * \dots = Q_0 * Q_1 * \dots$ be two different sparse decompositions. Without loss of generality, we can delete the common prefix and consider that $P_0 \neq Q_0$. If $S_P \subsetneq P^m$, by Lem.~\ref{lem:starter-vs-terminator} $P_0$ and $Q_0$ are (proper) starters, and by Lem.~\ref{lem:def-start}, $P_0 = \starter{P^m}{P^m - S_P} = Q_0$. If $S_P = P^m$, by Lem.~\ref{lem:starter-vs-terminator} $P_0$ and $Q_0$ are (proper) terminators, and by Lem.~\ref{lem:def-start}, $P_0 = \terminator{P^m}{P^s} = Q_0$, so contradiction.
\end{proof}

\begin{proof}[Proof of Propoposition~\ref{prop:seq-wiipoms}]
	We show the equivalence between (1) $P$ is an interval $\omega$-ipomset, (4) all prefixes of $P$ are interval ipomsets and $\pref{P}$ is infinite, and (5) an infinite number of prefixes of $P$ are interval ipomsets.
	
	\begin{itemize}
		\item (1)  $\implies$ (4) :
		Suppose for all $w,x,y,z \in P$ where $w < y$ and $x < z$, then $w < z$ or $x < y$. Then for all $A \in \pref{P}$, we have the same property for all  $w,x,y,z \in A \subset P$. Moreover, $\pref{P}$ is infinite as a direct consequence of the existence part of Th.~\ref{th:sparse-dec}.
		\item (4) $\implies$ (5) :
		Trivial.
		\item (5) $\implies$ (1) :
		Let $(A_i)_{i \in  I}$ be the infinite set of interval ipomsets in $\pref{P}$. Let $w,x,y,z \in P$ where $w < y$, $x < z$. By Lem.~\ref{lem:pref-exist}, there is a prefix $A_i$ with $\set{w,x,y,z} \subset A_i$, which is an interval ipomset, then $w < y$ or $x < z$.    
	\end{itemize}
	Which concludes the proof.
\end{proof}

\section{Complementary proofs of Section~\ref{sec:relation-of-w-languages}}
\label{app:proof-relation-of-w-language}

\begin{proof}[Proof of Lemma~\ref{lem:ST-coh}]
    Let $A = (Q,E,I,F,\lambda)$ be an ST-automaton. 
    As $\lambda$ is coherent with $E$, all interfaces coincide in a word produced by $A$. Then it remains to show that the languages does not contain not allowed words.
    
    If $A$ accepts finite words, then even for a path of one states $q$, the label contains at least the letter $\id_{\lambda(q)}$. Thus $\varepsilon \not\in L(A)$, and $L(A) \subset \coh(\Xib)$.

    If $A$ accepts infinite words, then for every infinite path, one letter out of two is in $\Xib \setminus \Id$, so there is an infinite number of non-identity letters, thus $L(A) \subset \wcoh(\Xib)$.
\end{proof}

\begin{proof}[Proof of Lemma \ref{lem:equiv-evalutation}]
	Let $\alpha = (q_0,\phi_0,q_1,...)$ be an $\omega$-track in $X$ and $ST(\alpha) = (q_0,e_0,q_1,...)$ with $e_i = (q_i,P_i,q_{i\+1})$. By definition:
	\begin{itemize}
		\item
		If $(q_i,\phi_i,q_{i\+1}) = (\delta_{A_i}^0(q_{i\+1}),\arrO{A_i},q_{i\+1})$, then $P_i = \ev(q_i,\phi_i,q_{i\+1})$
		\item
		If $(q_i,\phi_i,q_{i\+1}) = (q_i,\arrI{A_i},\delta_{A_i}^1(q_i))$, then $P_i = \ev(q_i,\phi_i,q_{i\+1})$
	\end{itemize}
	We recall that $\lambda$ is coherent, so for $(q_i,e_i,q_{i\+1}) \in E$ we have $\lambda(q_i) = T_{P_i}$, $\lambda(q_{i\+1}) = S_{P_{i\+1}}$, hence:
	\begin{align*}
		\ev(\alpha)
		&=\ \ev(q_0,\phi_0,q_1) * \ev(q_1,\phi_1,q_2) * \cdots
		=\ P_0 * P_1 * \cdots 
		=\ \id_{\lambda(q_0)} * P_0 * \id_{\lambda(q_1)} * P_1 * \cdots \\
		&=\ \Psi( \id_{\lambda(q_0)} \cdot P_0 \cdot \id_{\lambda(q_1)} \cdot P_1 \cdot \ldots) 
		=\ \Psi( l(ST(\alpha)) ) \qedhere
	\end{align*}
\end{proof}

\begin{proof}[Proof of Lemma~\ref{lem:positive-expression}]
    For an expression $e_A$ of finite words, we replace any occurrence of the Kleene star with the formula $L^* = \varepsilon + L^+$, and ``developp'' the expression. For ST-automaton $A$, $ \varepsilon \not\in L(A)$, the resulting expression after developing is such that there is no $\varepsilon$ anymore.
    
    For an $\omega$-rational expression $e_{A'}$, we know that $e_{A'}$ can be a finite union of the form $\bigcup X.Y^\omega$ with $X$ and $Y$ rational sets of finite words (see Th.~3.2 of~\cite{PePin04}). It suffices to show that we can remove the occurrences of $^*$ and $\varepsilon$ for an expression $X.Y^\omega$. We replace any occurrences of the Kleene star by $L^* = \varepsilon + L^+$. Then we develop the expression such that:
    \begin{itemize}
        \item
            If $\varepsilon \in X$, $X = \varepsilon + X'$ (with $\varepsilon \not\in X'$) and $X.Y^\omega = Y^\omega + X'.Y^\omega = Y.Y^\omega + X'.Y^\omega$
        \item
            If $\varepsilon \in Y$, $Y = \varepsilon + Y'$ (with $\varepsilon \not\in Y'$) and because of the definition of $^\omega$, we have $X.Y^\omega = X.Y'\ ^\omega$
    \end{itemize}
    Thus we can transform $e_A$ into a positive expression $\overline{e}_A$ (resp. $e_{A'}$ into $\overline{e}_{A'}$). For a formula over infinite words, it is moreover still of the form $\bigcup X.Y^\omega$.
\end{proof}

\begin{proof}[Proof of Lemma~\ref{lem:ST-rat}]
	
	Let $A$ be an ST-automaton of finite words. The ST-automaton is finite, so by the classical Kleene theorem, $L(A)$ can be represented by a rational expression $e_A(U_1,...,U_n)$, written $e_A$ for simplification. We have $\mc{L}(e_A) = L(A)$. By Lem~\ref{lem:positive-expression} , we can suppose $e_A$ to be positive, and by Lem~\ref{lem:psi'-psi}, $\Psi(L(A)) =\Psi(\mc{L}(e_A)) = \Psi'(e_A)$. By Lem~\ref{lem:psi'-in-wrat}, $\Psi'(e_A) \in \rat$, so $\Psi(L(A)) \in \rat$.
	
	The same can be done for an ST-automaton of infinite words $A'$. By the $\omega$-Kleene theorem (for example shown in~\cite{PePin04}), $L(A')$ can be represented by a $\omega$-rational positive expression $e_{A'}$, with $\mc{L}(e_{A'}) = L(A')$, and we have $\Psi(L(A')) = \Psi(\mc{L}(e_{A'}) = \Psi'(e_{A'})$, and by Lem~\ref{lem:psi'-in-wrat}, $\Psi(L(A')) \in \wrat$.
\end{proof}

The following results are needed for the induction part of the proof of Prop.~\ref{prop:finite-rationality}:
\begin{lemma}
\label{lem:rat-eq-*}
    For $L,M \subset \iipoms$, $(L * M)\down = L\down *\sd M\down$.
\end{lemma}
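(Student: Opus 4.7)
The plan is to prove both inclusions directly by unfolding the definitions of $L*M$, $L\down$, $*\sd$, and $\subsu$, using two folklore facts: (a) subsumption preserves source and target interfaces, so if $P' \subsu P$ then $S_{P'} \cong S_P$ and $T_{P'} \cong T_P$; and (b) gluing composition is monotone with respect to subsumption, i.e.\ if $P' \subsu P$, $Q' \subsu Q$, and the gluings $P'*Q'$ and $P*Q$ are both defined, then $P'*Q' \subsu P*Q$. Both are standard properties of ipomsets that follow immediately from the definition of a subsumption as a bijection.

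For the inclusion $(L*M)\down \subseteq L\down *\sd M\down$, take $R \in (L*M)\down$; by definition, there exist $P \in L$ and $Q \in M$ with $T_P \cong S_Q$ such that $R \subsu P*Q$. Since trivially $L \subseteq L\down$ and $M \subseteq M\down$, the ipomset $P*Q$ belongs to $\{A*B \mid A \in L\down,\, B \in M\down,\, T_A = S_B\}$, and hence $R \in L\down *\sd M\down$ by definition of $*\sd$ as a down-closure.

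For the converse inclusion $L\down *\sd M\down \subseteq (L*M)\down$, take $R \in L\down *\sd M\down$. Then $R \subsu P'*Q'$ for some $P' \in L\down$ and $Q' \in M\down$ with $T_{P'} = S_{Q'}$. By definition of $\down$, there exist $P \in L$ and $Q \in M$ with $P' \subsu P$ and $Q' \subsu Q$. By fact (a) above, $T_P \cong T_{P'} = S_{Q'} \cong S_Q$, so the gluing $P*Q$ is well-defined and belongs to $L*M$. By fact (b), $P'*Q' \subsu P*Q$, and composing subsumptions yields $R \subsu P'*Q' \subsu P*Q$, hence $R \in (L*M)\down$.

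The only non-trivial step is the monotonicity of gluing under subsumption, which I expect to be the main obstacle; however, this is a routine property and is implicit in several earlier references (e.g.\ \cite{DBLP:journals/mscs/FahrenbergJSZ21}), so a short verification or citation suffices. Once those two compatibility properties are in hand, the lemma is a direct set-theoretic unfolding.
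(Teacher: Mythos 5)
Your proof is correct and follows essentially the same route as the paper's: the first inclusion is the same trivial unfolding via $L \subseteq L\down$, and the second inclusion hinges on the same monotonicity of gluing under subsumption, which the paper also discharges by citing Lem.~48 of \cite{DBLP:journals/mscs/FahrenbergJSZ21} (your fact (b)), together with preservation of interfaces by subsumptions (your fact (a), which is built into the definition).
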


\begin{proof}[Proof of Lemma~\ref{lem:rat-eq-*}]
    Let $P \in (L * M)\down$, there is $A \in L$ and $B \in M$ such that $P \subsu A * B$. As $L \subset L\down$ and $ M \subset M\down$, $A*B \in L\down * M\down$ so $P \in (L\down * M\down)\down = L\down *\sd M\down$ and $(L * M)\down \subset L\down *\sd M\down$. 

    Let $Q \in  L\down *\sd M\down$, there is $A \in L\down$ and $B \in M\down$ such that $Q \subsu A*B$. There is $A' \in L$ such that $A \subsu A'$ and $B' \in M$ such that $ B \subsu B'$. By Lem.~48 of~\cite{DBLP:journals/mscs/FahrenbergJSZ21}, $A*B \subsu A'*B'$ and by transitivity, $Q \subsu A'*B'$. But we have $A'*B' \in L*M$, so $Q \in (L * M)\down$ and $L\down *\sd M\down \subset (L * M)\down$.
\end{proof}

\begin{corollary}
\label{cor:rat-eq-+}
    For $L,M \subset \iipoms$, $(L^+)\down = (L\down)^{+\sd}$.
\end{corollary}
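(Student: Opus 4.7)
The plan is to prove the statement by a straightforward induction on $n$, lifting Lemma~\ref{lem:rat-eq-*} from binary gluing to arbitrary iterated gluing, and then commuting the subsumption closure past the union. Throughout, let me write $L^{(n)}_{\down}$ for the $n$-th power of $L\down$ under $*\sd$, defined by $L^{(1)}_{\down}=L\down$ and $L^{(n+1)}_{\down}=L\down *\sd L^{(n)}_{\down}$, so that $(L\down)^{+\sd}=\bigcup_{n\geq 1}L^{(n)}_{\down}$.

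First I would establish the key claim by induction: for every $n\geq 1$, $(L^n)\down=L^{(n)}_{\down}$. The base case $n=1$ is immediate. For the step, assuming $(L^n)\down=L^{(n)}_{\down}$, I compute
\[
    (L^{n+1})\down \;=\; (L*L^n)\down \;=\; L\down *\sd (L^n)\down \;=\; L\down *\sd L^{(n)}_{\down} \;=\; L^{(n+1)}_{\down},
\]
where the second equality is exactly Lemma~\ref{lem:rat-eq-*} (applied with $M=L^n$, which lies in $\iipoms$ since $L\subset\iipoms$ and gluing preserves the interval property by \cite[Lem.~41]{DBLP:journals/mscs/FahrenbergJSZ21}), and the third equality is the induction hypothesis.

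Then I would finish by interchanging subsumption closure and union. Since $P\subsu Q$ with $Q\in\bigcup_n L^n$ means $Q\in L^n$ for some $n$, one has $\bigl(\bigcup_{n\geq 1}L^n\bigr)\down=\bigcup_{n\geq 1}(L^n)\down$. Combining this with the inductive claim yields
\[
    (L^+)\down \;=\; \Bigl(\bigcup_{n\geq 1} L^n\Bigr)\!\down \;=\; \bigcup_{n\geq 1}(L^n)\down \;=\; \bigcup_{n\geq 1}L^{(n)}_{\down} \;=\; (L\down)^{+\sd},
\]
as desired.

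There is essentially no obstacle beyond setting up the notation cleanly: the induction is mechanical once Lemma~\ref{lem:rat-eq-*} is in hand, and the commutation of $\down$ with arbitrary unions is immediate from the pointwise definition of subsumption closure. The only subtlety to watch is that in the inductive step one needs $L^n\subset\iipoms$ to legitimately invoke Lemma~\ref{lem:rat-eq-*}, which is guaranteed by the fact that gluing of interval ipomsets stays interval.
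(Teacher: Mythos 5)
Your proof is correct and follows exactly the route the paper intends: the corollary is stated without proof as an immediate consequence of Lemma~\ref{lem:rat-eq-*}, and your induction on $n$ (using that gluing preserves the interval property so that $L^n\subset\iipoms$) together with the commutation of $\down$ with unions is precisely the missing mechanical argument. Nothing to add.
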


\begin{proof}[Proof of Proposition~\ref{prop:finite-rationality}]
We start by showing that $\rat\down \subset \Drat$ by induction:
\begin{itemize}
    \item
        $\emptyset\down = \emptyset \in \Drat$
    \item
        For $U \in \Xi$, $\set{U}\down = \set{U}$ (as $<_U$ is empty) and $\set{U} \in \Drat$ (by a finite use of $\parallel\sd$ and $\set{a \ibu}$ and $\set{\ibu a \ibu}$,  or $\set{\ibu a}$ and $\set{\ibu a \ibu}$, for $a \in \Sigma$)
    \item
        For $L,M \in \rat$, if $L\down,M\down \in \Drat$, then $(L+M)\down = L\down + M\down \in \Drat$
    \item
         For $L,M \in \rat$, if $L\down,M\down \in \Drat$, then $(L*M)\down = L\down *\sd M\down \in \Drat$ (by Lem.~\ref{lem:rat-eq-*})
    \item
         For $L \in \rat$, if $L\down \in \Drat$, then $(L^+)\down = (L\down)^{+\sd} \in \Drat$ (by Cor.~\ref{cor:rat-eq-+})
\end{itemize}
Then by induction $\rat\down \subset \Drat$.

Next we show that $\Drat \subset \rat$, which is just a more detailed version of the Remark 7.3 of~\cite{DBLP:conf/concur/FahrenbergJSZ22}. Let $L \in \Drat$, by Th.~\ref{th:kleene} there is a HDA $X$ such that $L = L(X)$. Let $ST(X)$ be its ST-automata, by Prop.~\ref{prop:eq-ST-HDA} $L(X) = \Psi(L(ST(X)))$, and by Lem.~\ref{lem:ST-rat}, $\Psi(L(ST(X))) \in \rat$. Thus $\Drat \subset \rat$.

But $L \in \Drat$, $L$ is down-closed (by Prop.~\ref{prop:HDA-down-closed} and Th.~\ref{th:kleene}) so as $L \in \rat$, $L = L\down \in \rat\down$. Then $\Drat \subset \rat\down$.
\end{proof}

\begin{proof}[Proof of Theorem~\ref{prop:buchi-wform}]
    We use the key property that language of a classical Büchi automaton $A = (Q,E,I,F)$ is of the form $\bigcup_{q_i \in I,\, q_f \in F} W(q_i,q_f).W(q_f,q_f)^\omega$, where $W(p,q)$ is the language of the Büchi automaton $(Q,E,\set{p},\set{q})$ (see~\cite{Farwer01} for more details).

    Let $(X,\bot_X,\top_X)$ be a Büchi $\omega$-HDA, and $ST(X) = (Q,E,I,F,\lambda)$ be its ST-automaton. Thus $L(ST(X)) = \mc{L}(\bigcup_{q_i \in I,\, q_f \in F} W(q_i,q_f).W(q_f,q_f)^\omega)$. To simplify, we write $W_1 = W(q_i,q_f)$ and $W_2 = W(q_f,q_f)$. As, $ W_2 = (W_2)^+ + \varepsilon$ so we have $L(ST(X)) = \mc{L} (\bigcup W_1.((W_2)^+)^\omega)$.

    But $\mc{L}(W_1)$ and $\mc{L}(W_2)$ are languages of ST-automata from HDAs (over finite tracks), so by Prop.~\ref{prop:finite-rationality} it exists $X,Y \subset \iipoms$ such that $\Psi(\mc{L}(W_1)) = X\down$ and $\Psi(\mc{L}(W_2)) = Y\down$. Then:
    \begin{align*}
        L(X) =&\ \Psi(L(ST(X))) \tag{Prop.~\ref{prop:eq-ST-wHDA}}
        =\ \Psi( \mc{L} (\bigcup W_1.((W_2)^+)^\omega) )\\
        =&\ \Psi'( \bigcup W_1.((W_2)^+)^\omega )  \tag{Lem.~\ref{lem:psi'-psi}}
        =\ \bigcup \Psi'(W_1) * (\Psi'(W_2)^+)^\omega \\
        =&\ \bigcup \Psi(\mc{L}(W_1)) * (\Psi(\mc{L}(W_2))^+)^\omega \tag{Lem.~\ref{lem:psi'-psi}}\\
        =&\ \bigcup X\down * ((Y\down)^+)^\omega
    \end{align*}
Which concludes the proof.
\end{proof}

\end{document}